\documentclass[sigplan,10pt]{acmart}

\setcopyright{none}
\renewcommand\footnotetextcopyrightpermission[1]{}

\def\isReadyToSubmit{1}  
\def\doubleblind{0}      

\usepackage{xcolor}
\usepackage[normalem]{ulem}

\usepackage{adjustbox} 
\usepackage{algorithm} 
\usepackage[noend]{algpseudocode} 
\usepackage{anyfontsize} 
\usepackage{array} 
\usepackage{bm} 
\usepackage{calc} 
\usepackage{courier} 
\usepackage{dsfont} 
\usepackage{enotez} 
\setenotez{backref=true} 
\usepackage{flushend} 
\usepackage{ifthen} 
\usepackage{listings} 
\usepackage{multirow} 
\usepackage{makecell} 
\usepackage{paralist} 
\usepackage{pbox} 
\usepackage{pythonhighlight} 
\usepackage{soul} 
\usepackage{subcaption} 
\usepackage{tabularx} 
\usepackage{tikz} 
\usetikzlibrary{positioning, arrows.meta, shapes, shadows, matrix}
\usepackage{verbatim} 
\usepackage{wrapfig} 
\usepackage{xspace} 

\makeatletter
\renewcommand{\ALG@name}{Alg.}
\makeatother

\def\Def{Def.~}
\def\F{Fig.~}

\def\Thm{Thm.~}

\def\Lem{Lem.~}

\newcommand{\headingg}[1]{\noindent{\bf{#1}}} 
\newcommand{\heading}[1]{{\vspace{1pt}\noindent\bf{#1}}} 
\newcommand{\headingi}[1]{\noindent{\vspace{1pt}\em{#1}}} 

\algblock{Input}{EndInput}
\algnotext{EndInput}
\algblock{Output}{EndOutput}
\algnotext{EndOutput}
\algblock{Config}{EndConfig}
\algnotext{EndConfig}

\newcommand{\graycomment}[1]{\textcolor{gray}{// \em #1}}
\algrenewcommand\algorithmicindent{0.5em} 

\newcommand{\code}[1]{\lstinline|#1|}

\newcommand{\circled}[1]{%
  \tikz[baseline=(char.base)] \node[draw, circle, inner sep=1pt](char) {#1};%
}
\newcommand{\step}[2][blue]{{\color{#1} \circled{#2}}}

\ifnum\isReadyToSubmit=0
\newcommand{\asaf}[1]{\textcolor{orange}{\{Asaf: #1\}}}
\newcommand{\ml}[1]{\textcolor{purple}{\{Mathias: #1\}}}
\newcommand{\pierre}[1]{{\color{blue}{Pierre: #1}}}
\newcommand{\pierrerm}[1]{{\color{blue}{\st{#1}}}}
\newcommand{\pierrerp}[2]{{\color{blue}{\st{#1}{#2}}}}
\newcommand{\ben}[1]{{\color{green}{Ben: #1}}}
\newcommand{\roxana}[1]{{\color{cyan}{Roxana: #1}}}
\newcommand{\giorgio}[1]{{\color{green!70!black}{Giorgio: #1}}}
\newcommand{\todo}[1]{{\color{red} TODO(#1). }}
\newcommand{\todomsg}[2]{{\color{red} TODO(#1): #2}}
\newcommand{\mc}[1]{{\color{orange}{Mark: #1}}}
\newcommand{\mcrm}[1]{{\color{orange}{\st{#1}}}}
\newcommand{\mcrp}[2]{{\color{orange}{\st{#1}{#2}}}}
\newcommand{\ac}[1]{{\color{pink}{Alison: #1}}}
\newcommand{\tosupport}[1]{{\uline{#1}}}
\newcommand{\jason}[1]{\textcolor{red}{JN: #1}}
\newcommand{\xiaotian}[1]{\textcolor{teal}{\{Xiaotian: #1\}}}
\newcommand{\jo}[1]{\textcolor{green!40!black}{\textbf{JO:} #1}}
\newcommand{\shuren}[1]{\textcolor{magenta}{[Shuren: #1]}}
\newcommand{\navid}[1]{\textcolor{teal}{[Navid: #1]}}

\else  

\newcommand{\asaf}[1]{}
\newcommand{\ml}[1]{}
\newcommand{\pierre}[1]{}
\newcommand{\ben}[1]{}
\newcommand{\kelly}[1]{}
\newcommand{\roxana}[1]{}
\newcommand{\pierrerm}[1]{}
\newcommand{\pierrerp}[2]{}
\newcommand{\todo}[1]{}
\newcommand{\todomsg}[2]{}
\newcommand{\mc}[1]{}
\newcommand{\mcrm}[1]{}
\newcommand{\mcrp}[2]{}
\newcommand{\ac}[1]{}
\newcommand{\giorgio}[1]{}
\newcommand{\jason}[1]{}
\newcommand{\xiaotian}[1]{}
\newcommand{\jo}[1]{}
\newcommand{\shuren}[1]{}
\newcommand{\navid}[1]{}
\newcommand{\tosupport}[1]{}
\renewcommand{\endnote}[1]{}
\fi  

\newtheoremstyle{TheoremNum}
{\topsep}{\topsep}              
{\itshape}                      
{}                              
{\bfseries}                     
{.}                             
{ }                             
{\thmname{#1}\thmnote{ \bfseries #3}}
\theoremstyle{TheoremNum}

{\makeatletter
 \gdef\xxxmark{%
   \expandafter\ifx\csname @mpargs\endcsname\relax
     \expandafter\ifx\csname @captype\endcsname\relax
       \marginpar{\textcolor{red}{xxx~}}%
     \else
       \textcolor{red}{xxx~}%
     \fi
   \else
     \textcolor{red}{xxx~}%
   \fi}
 \gdef\xxx{\protect\xxx@inner}
 \gdef\xxx@inner{\@ifnextchar[\xxx@lab\xxx@nolab}
 \long\gdef\xxx@lab[#1]#2{{\bf [\xxxmark \textcolor{red}{#2} ---{\sc #1}]}}
 \long\gdef\xxx@nolab#1{{\bf [\xxxmark \textcolor{red}{#1}]}}
 \ifnum\isReadyToSubmit=1
   \long\gdef\xxx@lab[#1]#2{}\long\gdef\xxx@nolab#1{}%
 \fi
}

\makeatletter
\renewcommand\section{\def\@toclevel{1}%
  \@startsection{section}{1}{\z@}%
  {-4\p@ \@plus -1\p@ \@minus -1\p@}%
  {2\p@}%
  {\ACM@NRadjust\@secfont}}
\renewcommand\subsection{\def\@toclevel{2}%
  \@startsection{subsection}{2}{\z@}%
  {-4\p@ \@plus -1\p@ \@minus -1\p@}%
  {1\p@}%
  {\ACM@NRadjust\@subsecfont}}
\renewcommand\subsubsection{\def\@toclevel{3}%
  \@startsection{subsubsection}{3}{\z@}%
  {-1\p@ \@plus -1\p@ \@minus -1\p@}%
  {-3.5\p@}%
  {\ACM@NRadjust{\@subsubsecfont\@adddotafter}}}
\let\ACM@origsection\section
\let\ACM@origsubsection\subsection
\let\ACM@origsubsubsection\subsubsection
\makeatother

\newcommand{\numAppsTotal}{five}
\newcommand{\numAppsUntrusted}{four}

\newcommand{\maxEC}{\ensuremath{\mathtt{maxEC}}}
\newcommand{\maxECApp}{\ensuremath{\mathtt{maxEC_{app}}}}
\newcommand{\maxECSys}{\ensuremath{\mathtt{maxEC_{sys}}}}
\newcommand{\minEC}{\ensuremath{\mathtt{minEC}}}

\newcommand{\cL}{\mathcal{L}}
\newcommand{\cT}{\mathcal{T}}
\newcommand{\cE}{\mathcal{E}}

\newcommand{\cD}{\mathcal{D}}

\newcommand{\cX}{\mathcal{X}}
\newcommand{\cY}{\mathcal{Y}}

\newcommand{\cB}{\mathcal{B}}
\newcommand{\R}{\mathbb{R}}

\newcommand{\bfy}{\mathbf{y}}

\ifnum\doubleblind=1
  \newcommand{\sysname}{SensOS\xspace}  
\else  
  \newcommand{\sysname}{CityOS\xspace}
\fi  

\title{\sysname: Privacy Architecture for Urban Sensing}

\ifnum\doubleblind=1
  \acmSubmissionID{406}
  
  \author{Paper \#406}
  \affiliation{\country{}}
\else  
  \author{Giorgio Cavicchioli}
  \affiliation{\institution{Columbia University}\country{USA}}

  \author{Mark Chen}
  \affiliation{\institution{Columbia University}\country{USA}}

  \author{Navid Salami Pargoo}
  \affiliation{\institution{Rutgers University}\country{USA}}

  \author{Shuren Xia}
  \affiliation{\institution{Rutgers University}\country{USA}}
  
  \author{Xiaotian Zhou}
  \affiliation{\institution{Rutgers University}\country{USA}}

  \author{Roxana Geambasu}
  \authornote{Also currently partly with Google, formerly partly with Meta. Work done exclusively in the context of Columbia research.}
  \affiliation{\institution{Columbia University}\country{USA}}

  \author{Jason Nieh}
  \affiliation{\institution{Columbia University}\country{USA}}

  \author{Jorge Ortiz}
  \affiliation{\institution{Rutgers University}\country{USA}}
\fi  

\begin{abstract}

Cities are rapidly deploying sensing infrastructure -- cameras, environmental sensors, and connected kiosks -- that continuously observe public spaces, yet they lack a system architecture governing how applications access, aggregate, and retain this data, creating privacy risks and preventing consistent policy enforcement. We present {\em \sysname}, an operating system for urban sensing that mediates application access to sensor data through a three-tier API inspired by structured, privacy-conscious web interfaces. The tiers expand the spatial scope of data access while imposing progressively stronger privacy constraints: {\em On-Scene} supports real-time sensing with raw data confined to the local context; {\em Single-Locality Aggregation} enables differentially private longitudinal statistics at a fixed location; and {\em Cross-Locality Aggregation} supports citywide analytics via aggregation across locations, with user devices enforcing per-user privacy budgets. \sysname runs as an edge runtime that executes untrusted applications in ephemeral containers, enforcing these policies and providing transparency via broadcasts of differential privacy loss. We implement \sysname and applications across all tiers -- including pedestrian safety alerts, real-time and forecast parking availability, traffic dashboards, and subway trajectory measurement -- and show that it supports practical streetscape applications while enforcing strong privacy.

\end{abstract}

\begin{document}

\maketitle
\pagestyle{plain}

\vspace{-0.25cm}
\section{Introduction}
\label{sec:introduction}

Cameras, audio sensors, and other sensing infrastructure are increasingly deployed across public and semi-public urban spaces, yet their data-access practices are governed by few consistent rules. New York City illustrates the problem: LinkNYC kiosks have been found to violate privacy policies, including failures to anonymize device identifiers~\cite{nyclu-linknyc-2023, intercept-linknyc-2018}; Wegmans collects facial-recognition data without disclosing retention practices~\cite{gothamist-wegmans-2026, cnn-wegmans-2026}; and CitiBike ride data can be re-identified at high rates using minimal auxiliary information~\cite{schneider-citibike-2016}.
These patterns extend beyond a single city. Deployments such as Singapore's Smart Nation and Barcelona's sensor-equipped smart poles illustrate the global expansion of urban sensing infrastructure~\cite{ganz-smartcity-2024}. While some high-profile projects have failed amid privacy concerns -- Sidewalk Labs in Toronto~\cite{statescoop-sidewalk-2020} -- many incremental deployments continue to grow with limited oversight or consistent governance.

The core risk is not any single deployment but the trajectory: without structured interfaces governing how sensing data is accessed and aggregated, backend collection becomes the default and, once entrenched, is difficult to reverse. The web illustrates this pattern. For decades, advertising measurement relied on cross-site tracking -- third-party cookies, fingerprinting, and backend data exchanges -- largely because browsers offered no structured alternative. Recent efforts, such as the W3C Attribution API~\cite{w3c-attribution-2024}, embed privacy constraints into platform interfaces via encrypted reporting and differentially private aggregation. Whether such approaches can displace tracking remains uncertain, but their difficulty underscores the importance of early intervention.

Physical-world sensing faces a similar inflection point. Pervasive cross-context tracking is not yet the norm in most environments, and some legal constraints have begun to emerge~\cite{nyc-biometric-law-2021}. This creates an opportunity to define structured data-access models before tracking becomes entrenched.

We introduce such a framework. Inspired by the web's recent move
toward structured, privacy-conscious interfaces, we present {\em \sysname}, a
privacy architecture for urban sensing that separates data access into
distinct classes, each exposed through a purpose-built API with privacy
protections calibrated to its risk.
\sysname is a privacy-enforcing runtime deployed at the edge  --  at intersections or other instrumented locations  --  where it hosts untrusted applications in sandboxed containers and mediates all access to local sensor data through a three-tier API.
{\em API~1 (On-Scene)} provides real-time, ephemeral access with outputs confined to the immediate locality  --  reflecting that momentary, local observation is qualitatively different from persistent recording. {\em API~2 (Single-Locality Aggregation)} supports longitudinal statistics at a single location through differential privacy (DP)  --  addressing the risk that even single-site data accumulates into a sensitive behavioral picture over time. {\em API~3 (Cross-Locality Aggregation)} enables citywide analysis mediated through user devices with per-user privacy budgets  --  ensuring that cross-location measurement, the most surveillance-prone capability, never relies on infrastructure tracking individuals directly. This tiered structure was developed by systematically mapping levels of data access identified on the web to their physical-world counterparts  --  an analogy we develop in \S\ref{sec:analogy-main-section} and list among our key contributions.

As a cross-cutting design, \sysname provides \emph{privacy transparency}. Each node wirelessly broadcasts its DP loss, enabling nearby devices running a \sysname mobile app to track cumulative privacy exposure as users move through instrumented spaces. This per-user accounting -- combining losses from API~2 broadcasts and API~3 participation -- gives individuals a continuously updated measure of their total privacy expenditure across all \sysname-instrumented locations.

A critical question is whether this web-inspired privacy architecture can support useful smart-city applications. We prototype \numAppsTotal applications: (1) an object detection and tracking service that monitors the local scene and disseminates bounding boxes to nearby consumers (e.g., autonomous vehicles) to improve safety (API~1); (2) a pedestrian safety app built on (1) that detects imminent pedestrian-vehicle collisions and triggers audio alerts (API~1); 
(3) a parking app that provides real-time availability locally (API~1) and model-based predictions for distant locations (API~2); (4) a traffic dashboard reporting per-locality pedestrian and vehicle statistics (API~2); and (5) a method for estimating subway route popularity -- typically requiring tap-in/tap-out systems, but absent in NYC -- without tracking people (API~3).

Overall, we make the following contributions: a web-to-smart-spaces privacy analogy that maps canonical web data-access levels -- in-session use, first-party longitudinal data, and cross-site data -- to their physical-world counterparts, providing a reference model that motivates the API structure and grounds its privacy protections in a platform that has confronted tracking at scale (\S\ref{sec:analogy-main-section}); the design of a structured API for urban sensing, based on this analogy, that separates functionality into three tiers with risk-tailored privacy protections (\S\ref{sec:overview}); new system mechanisms enabling this architecture, namely
ephemeral containers, 
localized release,
ephemeral DP continual release,
bounded tracking contexts,
relevance self-identification, and
device-side privacy accounting
-- a subset of which address key gaps in the analogy (\S\ref{sec:detailed-design}); a prototype implementation and \numAppsTotal representative applications demonstrating support for practical smart-space functionality (\S\ref{sec:apps}); and an experimental evaluation showing that \sysname provides good utility for these applications while enforcing privacy (\S\ref{sec:evaluation}). We release all code.

\section{Web-to-Smart-Spaces Privacy Analogy}
\label{sec:analogy-main-section}

We use the web as a reference model for designing privacy
in urban sensing: its trajectory shows how tracking becomes
entrenched in the absence of structured support, and how
platforms can avoid this through purpose-built,
privacy-preserving interfaces.

\subsection{Background on Web Privacy}
\label{sec:web-privacy-background}

Browsers were designed around origin-based isolation: the
{\em same-origin policy} restricts how content from one site
can access data from another. But advertising measurement --
one of the web's dominant workloads -- is inherently {\em
cross-origin}, requiring linking ad impressions on content
sites to conversions on advertiser sites. Because browsers
offered no structured API for this task, the ecosystem defaulted
to tracking. Third-party cookies enabled persistent identifiers
across sites; when some browsers blocked them, tracking shifted
to fingerprinting and backend data exchanges, such as first-party
cookie sharing. The pattern is general: when platforms lack
structured support for a required function, applications
reconstruct it through tracking.

The web is now attempting to correct this through a combination
of regulation, platform interventions, and standards. Mechanisms
such as Apple's Private Relay~\cite{apple-private-relay} and browser
fingerprinting resistance~\cite{disable-remote-fingerprinting-apple,disable-remote-fingerprinting-mozilla}
introduce technical barriers, while regulation (e.g., GDPR~\cite{gdpr-2016})
and platform responses such as Apple's ATT~\cite{att} require tracking
to be surfaced and permissioned. New U.S.\ state laws~\cite{ccpa-2020, CTDPA-2022, NJDPA-2024} further grant users the right to opt out of backend
data sharing; this right is now being operationalized via W3C's Global
Privacy Control~\cite{w3c-gpc-2024,hausladenEtAlGPCWeb2025}, making it increasingly enforceable. These efforts create resistance to tracking but,
without a structured alternative for advertising measurement,
are unlikely to suffice.

The W3C's Attribution API draft standard~\cite{w3c-attribution-2024,
TKM+24} introduces such an alternative: a structured browser
interface for privacy-preserving advertising measurement.
Content sites register ad impressions with the browser; advertiser
sites request encrypted reports linking conversions to prior
impressions; and only DP aggregates across many users are released
via secure multiparty computation or trusted execution environments.
Differential privacy (DP) bounds the influence of any individual on
an output, with privacy budgets limiting cumulative exposure over
time~\cite{dwork-roth-2014}. In Attribution, privacy is enforced
on-device via per-epoch budgets, an enforcement architecture we 
ourselves developed in~\cite{TKM+24}.

In essence, Attribution provides a {\em cross-origin interface}
with privacy guarantees that approximate {\em same-origin isolation}.
Whether it will ultimately displace tracking remains uncertain,
but it demonstrates that viable alternatives exist. The broader
lesson is clear: without structured interfaces, tracking becomes
the default and hard to reverse. We apply this lesson to urban
sensing by designing structured, privacy-by-design interfaces
for different data-access types.

\subsection{Why the Web as a Reference}
\label{sec:why-use-the-web-for-analogy}

Designing a privacy architecture for a new domain without a reference model is extremely challenging. What should the data-access model be? What privacy policies are appropriate? How should one navigate the inevitable tradeoffs between privacy and functionality? And how can these choices be explained to diverse stakeholders to achieve acceptance and adoption? These questions are difficult not only technically, but also from the perspective of building consensus around a new system design.

These challenges are particularly acute in smart cities and public-space sensing. Such systems differ fundamentally from traditional platforms: they passively observe people who may not carry devices, do not establish sessions, and cannot be prompted for consent. As a result, the notice-and-consent model~\cite{oecd-privacy-guidelines-1980}, common in traditional platforms, does not readily apply. Even when attempted~\cite{PriPref,IPic,PrivacyTag,DoNotShare}, it is widely recognized as insufficient~\cite{solove-consent-2013}. Modern platforms such as browsers are therefore beginning to move away from notice-and-consent by embedding privacy protections directly into their interfaces (e.g., Attribution). In the absence of a widely accepted reference model, however, it remains difficult to develop credible proposals, justify tradeoffs, or build agreement across stakeholders.

We propose the web as that reference. Despite differences in interaction models, functionality, and data types they handle -- the web handles virtual activity while smart spaces handle physical activity -- both domains share a common structure: applications rely on a small set of canonical data-access patterns with distinct privacy risks. This makes the web a useful foundation for designing structured interfaces for urban sensing, allowing us to transfer its lessons -- both good and bad -- to inform a principled design that avoids repeating its mistakes.

\subsection{Web-to-Smart-Spaces Analogy}
\label{sec:web-to-smart-cities-analogy}

Our analogy is based on a common set of canonical data-access
patterns, each with distinct privacy risks. These range from
in-session, single-origin access to cross-origin access on the
web, and from on-scene sensing at a single location to
cross-location trajectory data in smart spaces. Broader-scope
access enables richer functionality but also increases privacy
risk, requiring stronger controls. \sysname structures data
access and privacy around these patterns.

\heading{Level 1: In-context, short-term data use.} 
On the web, this is real-time first-party use of
in-session data (e.g., a user's current query or page activity),
which browsers largely permit. In smart spaces, the analogue is
real-time, on-scene access to ephemeral sensor data for immediate
functions such as obstacle detection, safety alerts, or parking
availability. In both cases, data is used in context and
need not be retained or linked across contexts, resulting in
low and intuitive privacy risk. When visiting a website, you
expect the site to observe your actions within that session; 
what you may not expect is them to retain this data or share it
with other sites. Similarly, when entering an instrumented
public space, you expect momentary observation for local
functionality, but not persistent recording or tracking across
the city. \sysname mirrors this expectation at this tier
by ensuring that data and outputs are ephemeral and confined to the
local context.

\heading{Level 2: Single-context, longitudinal aggregation.}
On the web, first-party sites collect user activity over time
(e.g., visits, interactions, preferences), typically via
first-party cookies. While useful for personalization, this
accumulation produces increasingly sensitive behavioral profiles
yet it is largely unconstrained by browsers. In smart spaces, the
analogue is single-locality aggregation: collecting data over
time at a fixed location (pedestrian counts, traffic
statistics). Although data remains local, accumulation increases
privacy risk. The web shows that, left unchecked, parties combine
data across contexts, leading to progressively broader collection.
Here, \sysname diverges: rather than relying on self-regulation,
it applies DP to bound each individual's contribution, limiting
longitudinal inference and the impact of future compromise.

\heading{Level 3: Cross-context aggregation.}
On the web, the most sensitive functionality is cross-site
tracking, historically implemented via third-party cookies or
fingerprinting. Recent efforts such as Attribution replace
tracking with privacy-preserving measurement. In
smart spaces, the analogue is cross-locality measurement:
linking observations across locations to estimate flows,
trajectories, or system-wide patterns. While valuable, this
poses the highest privacy risk by enabling reconstruction of
individual movement. \sysname adopts and adapts Attribution to
the physical world: rather than infrastructure tracking
individuals, cross-locality queries are mediated via user
devices, which generate encrypted reports and enforce per-user
privacy budgets. 

\section{\sysname Overview}
\label{sec:overview}

Referencing the web, \sysname organizes data access into three levels with distinct privacy risks, in which broader access along some dimensions -- spatial scope, temporal scope, or fidelity -- is balanced by tighter constraints on the outputs derived from that access.
{\em API~1 (On-Scene)} provides the richest access
to raw, high-fidelity data, but strictly confines computation
and outputs to the immediate physical and temporal context in
which the data is produced. {\em API~2 (Single-Location
Aggregation)} relaxes temporal locality by allowing aggregation
over time within a single location, while restricting access
to aggregate statistics rather than raw data.
{\em API~3 (Cross-Location Aggregation)} further relaxes spatial
locality of the data by supporting cross-location measurement,
but only through a constrained interface in which the user's
device controls their privacy exposure.

This progression mirrors the web's evolution from in-session,
single-origin access to cross-origin aggregation: as the scope
of data use expands, the platform imposes stronger structural
constraints to limit what is revealed. In this sense, \sysname
maps geographic locality in physical space to origin locality
on the web, giving rise to a {\em same-location policy} -- a
physical analog of the same-origin policy -- under which data
and computation remain local by default, with structured
exceptions for broader access. API~2 occupies an intermediate
point that has no direct analogue on the web -- a gap we view
as a design flaw that enables unbounded accumulation of
first-party data and that we avoid in physical sensing.


\subsection{API Example}
\label{sec:example}

Consider an organization seeking to improve streetscape safety
using video from cameras already deployed at city intersections.
It plans three functions: (1) real-time alerts warning pedestrians
and cyclists of approaching vehicles; (2) a dashboard showing
hourly pedestrian and vehicle counts at each intersection; and
(3) a citywide analysis of pedestrian flows between neighborhoods.
Without architectural constraints, the natural approach is to
stream camera feeds to a backend, store them, and run all three
functions on the collected data. The organization might apply
basic anonymization, such as blurring faces, but such techniques
are fragile: individuals can often be re-identified through gait,
body shape, clothing, or spatiotemporal patterns, especially when
data is retained or linked across locations.

\sysname restructures this. Real-time alerts run on-scene via
API~1: the application processes the local feed in an ephemeral
container, emits warnings to nearby devices, and retains nothing.
The per-intersection dashboard uses API~2: the node releases
DP hourly counts, never exposing individual
observations. Citywide flow analysis uses API~3: rather than
tracking pedestrians across intersections, participating users'
devices contribute encrypted reports that are aggregated without
revealing individual trajectories. The same functionality is
supported, but no persistent, city-scale record of individual
activity is ever created -- and privacy does not rest on the
hope that anonymization holds up.

\subsection{Threat Model}
\label{sec:threat-model}

\headingg{Deployment model.}
\sysname targets deployments in which
\begin{wrapfigure}{r}{0.45\columnwidth}
    \centering
    \vspace{-0pt}
    \includegraphics[width=0.44\columnwidth]{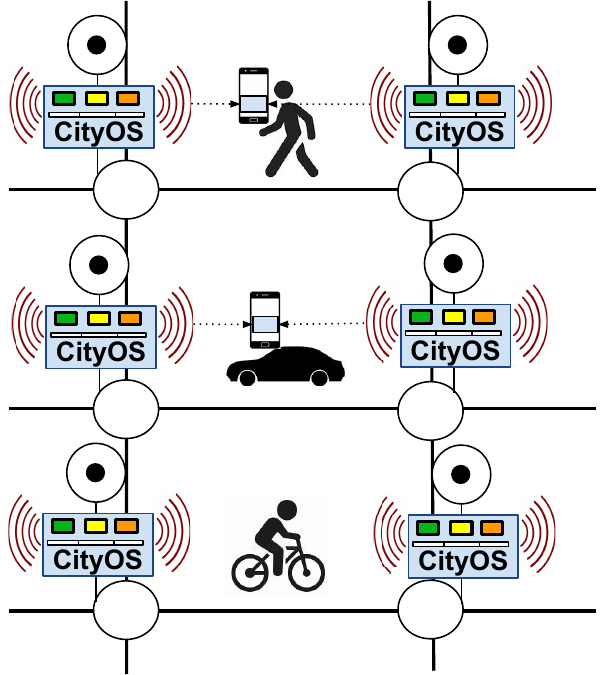}
    \caption{\bf Deployment model.}
    \label{fig:smart-city-integration}
    \vspace{-2pt}
\end{wrapfigure}
 sensing and computation occur at many
distributed locations (e.g., intersections, parking lots, transit stops, or
retail spaces). 
\F\ref{fig:smart-city-integration} illustrates this setting:
nodes are placed at intersections, each equipped with sensors (e.g., cameras)
and hosting one or more applications. People move through the environment,
some carrying mobile devices, others not.

\heading{Trust assumptions.}
\emph{Nodes} are trusted runtimes that mediate all application access to
local sensors and enforce API restrictions. Where possible, nodes at
different locations are independently administered so that compromise of
one location does not extend to others. When nodes are operated by a single
administrative domain, we assume -- enforced via policy -- that they do not
exchange data outside the system interfaces. Accordingly, each node is
trusted with data from its own location, but not with data from others.
However, nodes may be temporarily compromised; we minimize retained state
to limit exposure but do not defend against advanced persistent threats.

\emph{Applications} run on nodes and implement sensing functionality. We do
not trust their data practices and assume that, without enforced structure,
they would default to backend data collection. The architecture therefore
bounds what applications can observe or export through sandboxing and
API restrictions. Applications may additionally be subject to policy
or contractual constraints imposed by the operator.

\emph{Mobile devices} receive localized outputs broadcast by nearby nodes
and participate in privacy-budgeted aggregation for cross-locality queries.
Each user trusts their own device but not others'. We assume that most
devices do not run software designed to systematically collect and
exfiltrate nearby nodes' outputs -- in effect, that what is broadcast at an
intersection largely remains there. This assumption may be reinforced
through policy and regulation (\S\ref{sec:api1}).

\heading{Role of policy.}
\sysname enforces structural constraints technically (brief retention,
locality-scoped outputs, structured aggregation APIs), but as on the web,
effective privacy also requires legal or organizational constraints. Our
goal is to enforce as much as possible technically while keeping the
remaining policy assumptions explicit and narrow.

\heading{Limits of anonymization.}
Data cannot be fully anonymized and remain useful~\cite{dwork-roth-2014},
and individuals can often be re-identified through gait, body shape,
clothing, or spatiotemporal patterns.
We reject anonymization as justification for unrestricted backend
collection and instead make structural constraints the default (\S\ref{sec:api1}).

\setlength{\fboxsep}{0pt}      
\setlength{\fboxrule}{0.2pt}   

\begin{figure*}[t]
\centering

\begin{subfigure}[t]{0.63\linewidth}
    \vspace{0pt}
    \centering
    \fbox{\includegraphics[width=\linewidth]{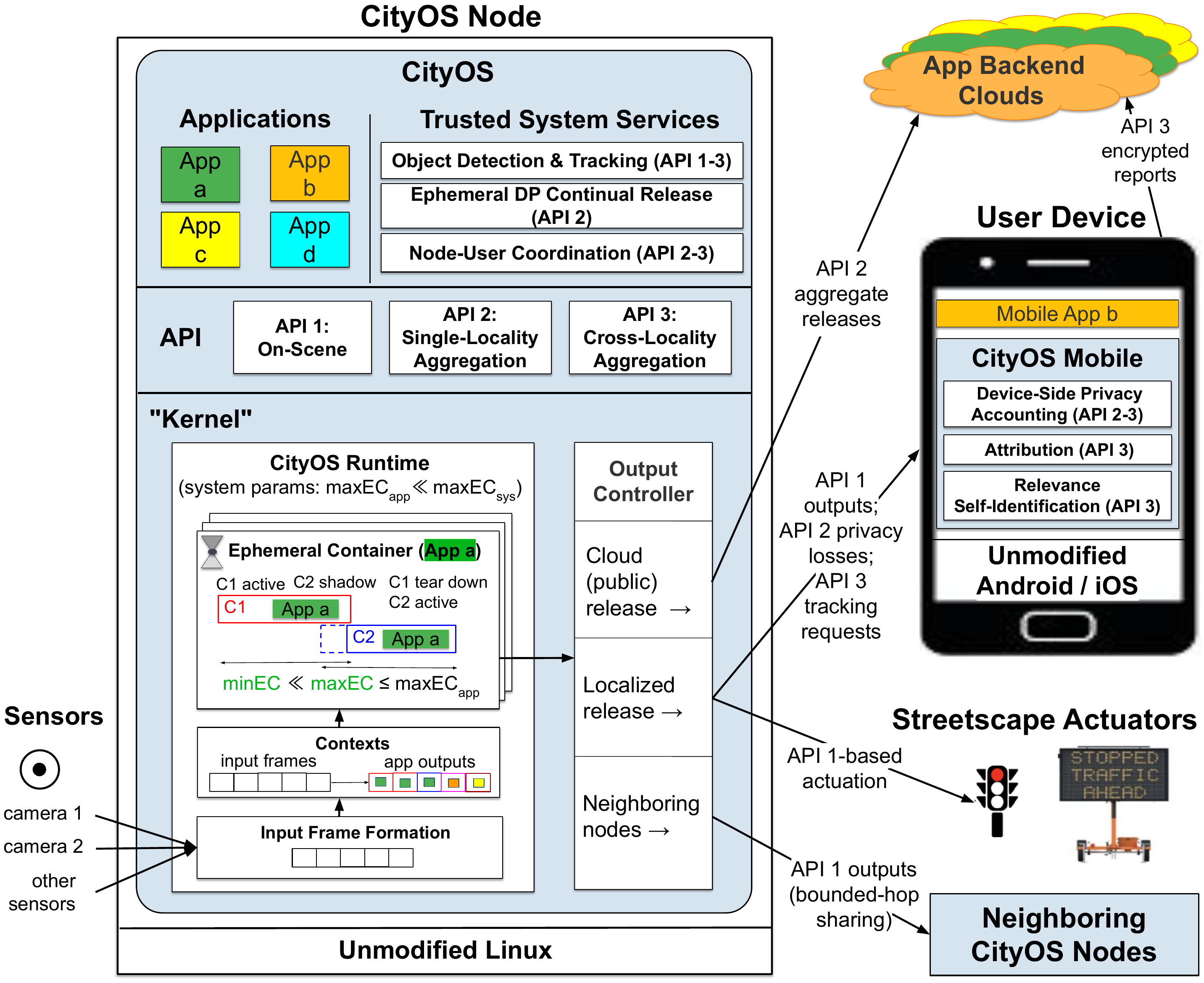}}
    \caption{\sysname architecture}
    \label{fig:architecture}
\end{subfigure}
\hfill
\begin{minipage}[t]{0.33\linewidth}
    \vspace{0pt}

    \begin{subfigure}[t]{\linewidth}
        \vspace{0pt}
        \centering
        \fbox{\includegraphics[width=\linewidth]{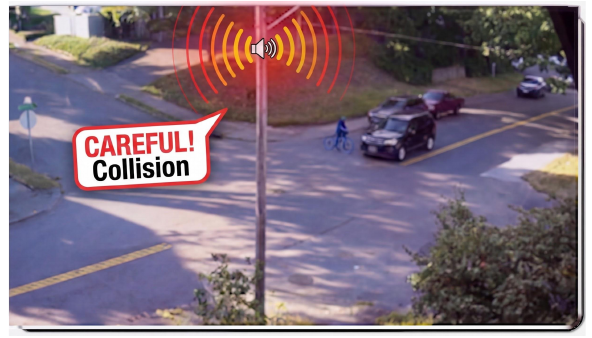}}
        \caption{Pedestrian safety app (visual intuition).}
        \label{fig:apps:ped-safety}
    \end{subfigure}

    \vspace{4em}

    \begin{subfigure}[t]{\linewidth}
        \vspace{0pt}
        \centering
        \fbox{\includegraphics[width=\linewidth]{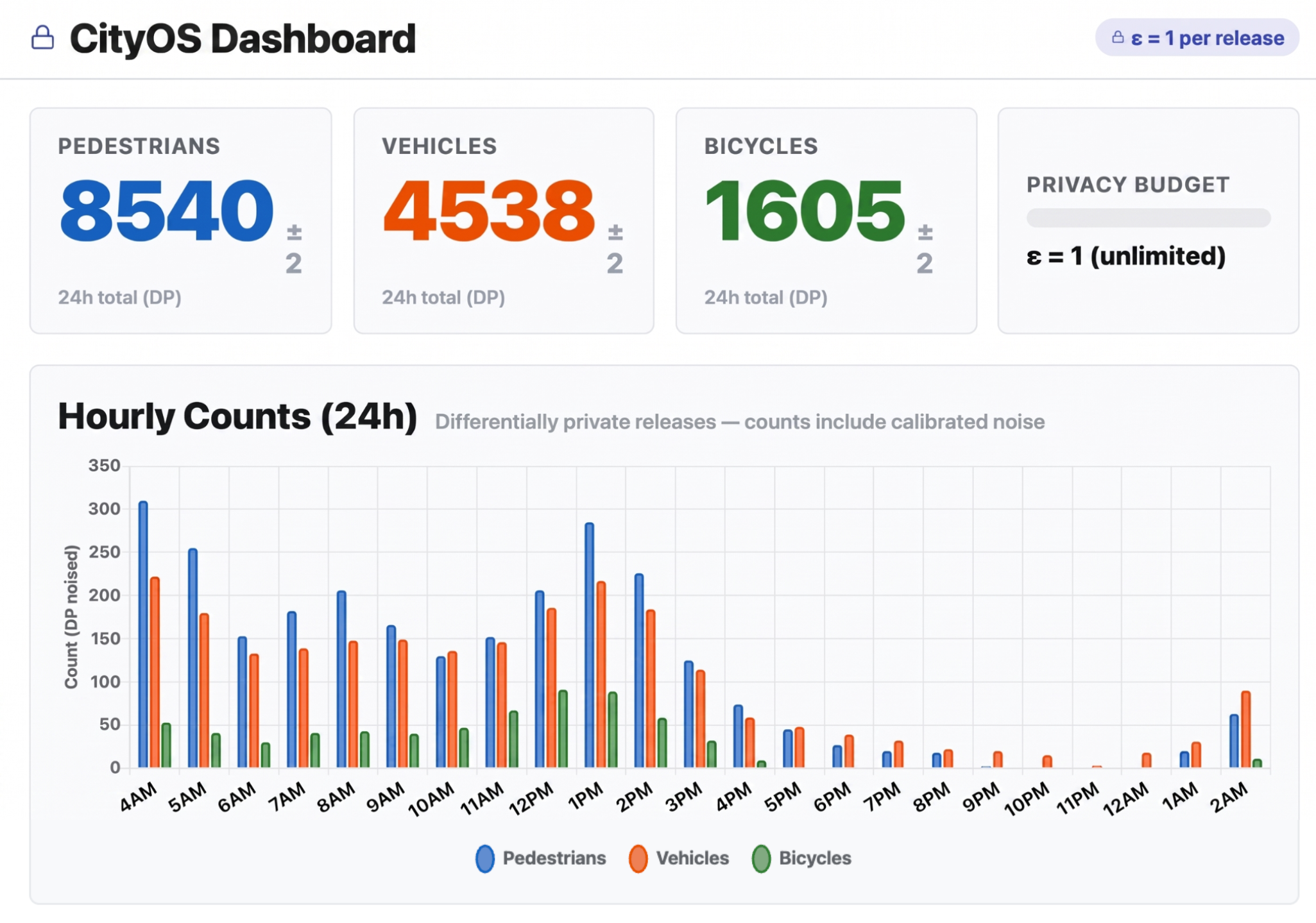}}
        \caption{Dashboard app (actual screenshot).
        }
        \label{fig:apps:dashboard}
    \end{subfigure}
\end{minipage}

\vspace{1em}
\caption{{\bf (a) \sysname architecture; (b), (c) running-example applications.}}
\label{fig:sysname-overview}
\end{figure*}

\heading{Scope.}
We do not aim to defend against all forms of surveillance. Broad government
surveillance is out of scope, including adversaries capable of tapping many
nodes or compelling broad access to system state. Similarly, large companies
with ubiquitous access across mobile devices are assumed to be disincentivized
from large-scale surveillance when privacy-preserving APIs are available,
in part by regulation and consumer pressure. Compromise of a single node may
still be harmful, particularly at sensitive locations. Our claim is not that
\sysname eliminates surveillance risk, but that it can prevent tracking and
backend collection from becoming the default practice for physical-space
applications.

We assume attackers may compromise some applications,
nodes, or mobile devices, but cannot maintain persistent visibility
across a large fraction of locations for extended periods. Under
this assumption, \sysname limits the scope and retrospective
value of compromise by minimizing raw-data retention, restricting
infrastructure-based tracking and mediating broader measurements
with private interfaces.

\subsection{Architecture}
\label{sec:architecture}

\heading{Design principles.}
In designing \sysname, we follow several key principles derived
from the web-based same-location analogy: (1) functionality that
requires raw sensing, and the outputs it produces, remains local
to where it occurs; (2) functionality that requires information
beyond a single moment relies on aggregation interfaces that
bound what individual observations reveal; and (3) functionality
that requires cross-location information is mediated by user
devices rather than by infrastructure tracking individuals
across space. In addition, we adopt a data-minimization principle
from our threat model: (4) both applications and the system
itself minimize state accumulation, so that even if a node is
temporarily compromised, the information exposed remains limited.

\heading{Node-side architecture.}
\F\ref{fig:architecture} shows the \sysname architecture.
\sysname is structured like a traditional OS, with a kernel,
system services, and an API layer, but everything runs in
user space over unmodified Linux. Applications run in
sandboxed containers and access sensor data exclusively
through the three-tier API.
System services -- including Object Detection, DP Continual
Release, and Node-User Coordination -- are trusted components
that implement parts of the API. They operate atop the kernel's
abstractions and within constraints similar to untrusted
applications, but with selectively expanded privileges as needed.
The kernel provides the execution environment and I/O substrate:
it assembles raw sensor data into \emph{input frames} (the basic
input unit) and groups them into temporally bounded
\emph{contexts}, the unit at which \sysname enforces ephemerality
and output localization. An Output Controller governs all data
leaving the node, routing outputs based on the API tier that
produced them.

\heading{User device side.}
Some user devices run a \sysname mobile app alongside ordinary
apps on Android or iOS. The \sysname app incorporates: an
open-source implementation of Attribution~\cite{pdslib} for
privacy-budgeted API~3 participation; a
Relevance Self-identification module that communicates with
the node to determine whether the user is relevant to ongoing
API~2 or API~3 measurements at a location; and a Privacy Loss
Accounting module that accumulates API~2 and API~3 losses.
In addition to this core app, each \sysname application may have
companion mobile apps that leverage locally emitted
signals. For example, a parking application may query nearby
nodes for live availability and notify users when a spot is
available near their destination (\S\ref{sec:apps}).

\subsection{Execution Example}
\label{sec:execution-example}

We illustrate \sysname's key mechanisms by tracing the three
functions from \S\ref{sec:example} through the architecture.
\S\ref{sec:apps} describes our own implementations of these
functions on \sysname.

The real-time pedestrian alert runs as an API~1 application
inside an \emph{ephemeral container}. To the application, this
appears as a continuously running container with access to
sensor frames, but \sysname restricts this view to a sliding
window of the last $\minEC$ frames. This provides sufficient
temporal context to detect events such as collision courses,
which cannot be inferred from a single frame. More precisely,
\sysname guarantees that the container always holds at least
$\minEC$ frames and never more than $\maxEC$ (older state is
automatically expunged). The container emits outputs -- here,
collision warnings -- through the Output Controller's localized
release, either to nearby mobile devices or to a connected alert
device. \F\ref{fig:apps:ped-safety} visualizes our implementation.

The per-intersection dashboard runs as an API~2 application.
Each ephemeral container emits hourly pedestrian and vehicle
counts into an aggregation window, whose duration is
application-specified. At the end of each window, the kernel
clamps each container's contribution to a bounded sensitivity,
sums the clamped values, and adds calibrated noise to produce a
DP release -- the only API~2 output that
leaves the node, forwarded via cloud release to the backend.
The associated privacy loss is broadcast locally so nearby devices can update their privacy loss accounting. \F\ref{fig:apps:dashboard} 
gives a screenshot of our implementation.

Finally, citywide flow measurement uses API~3, implemented primarily
on the mobile side. The node's User-Device Coordination service
emits annotated bounding boxes tagged with tracking or reporting
requests. Nearby devices use the Node-User Coordination module on their side to determine whether they are relevant to the measurement. E.g., if the analysis concerns pedestrians, 
only devices carried by pedestrians self-identify as relevant,
track locally, and later report encrypted, privacy-budgeted 
trajectories following W3C Attribution. The organization receives
only aggregate flow statistics. \S\ref{sec:apps} describes a related 
application we built on \sysname.

\section{Design}
\label{sec:detailed-design}

\F\ref{fig:detailed-design} shows the detailed architectures of
APIs~1 and~2; API~3 is illustrated in \F\ref{fig:api3} in the
appendix, though our description does not depend on it. We
describe each API in turn, focusing on the key novel systems
mechanisms needed to implement them. Security, privacy, and
utility analyses appear in the appendix (\S\ref{appendix:api1},
\S\ref{appendix:api2}, and \S\ref{appendix:api3} for APIs~1, ~2, and~3, respectively).

\begin{figure*}[t]
    \centering
    \includegraphics[width=\linewidth]{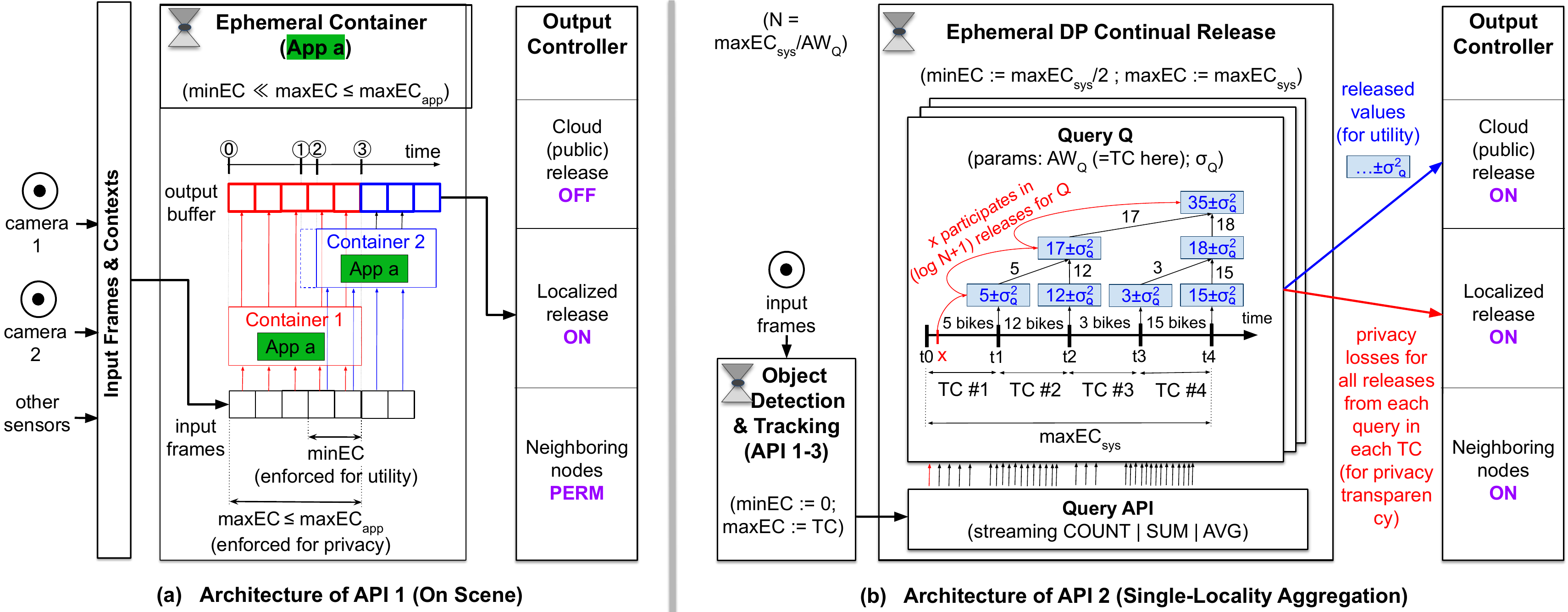}
    \caption{{\bf Detailed design.} (a) API~1 (On-Scene) architecture.
(b) API~2 (Single-Locality Aggregation) architecture.
API~3 (Cross-Locality Aggregation) is illustrated in
\F\ref{fig:api3}. Key mechanisms:
API~1: ephemeral containers, localized release;
API~2: ephemeral DP continual release, bounded tracking contexts;
API~3: relevance self-identification; and cross-cutting:
device-side privacy accounting.}
    \label{fig:detailed-design}
\end{figure*}

\subsection{API 1: On-Scene}
\label{sec:api1}

API~1 supports functions that require access to current sensor data to assess the present state of a space and act on it.
Examples include extending a traffic signal's green phase for a pedestrian
with disabilities, spotting nearby parking, identifying safe delivery stops, and providing autonomous vehicles with an ``eye-in-the-sky'' perspective.

These functions share a common access pattern: the application observes the
current scene, makes a decision, and moves on. Some temporal context is
needed -- enough frames to understand the scene -- but there is no inherent
need to retain data or export it to a backend. Without architectural
constraints, however, developers will be tempted to collect anyway, much as
web and mobile applications routinely repurpose raw-data access for backend
uses such as advertising, recommendations, or model training. API~1 is
designed to preempt that default. \F\ref{fig:detailed-design}(a) shows
the API~1 architecture, which exposes three abstractions:
\emph{input frames and contexts}, \emph{ephemeral containers},
and \emph{localized release}.

\heading{Input frames and contexts.}
A \emph{frame} is the basic unit of sensor input: one or a few consecutive
video frames from a camera, or the corresponding short sampling interval
from another sensor (e.g., audio). A frame may span multiple sensors
captured at approximately the same moment and typically covers much less
than one second.

An \emph{ephemeral context} is a bounded-time unit containing
all input frames in a time window, any non-DP state derived from them,
and any outputs produced from that state. \sysname enforces two parameters
on every ephemeral context: $\maxEC$, which bounds how far into the past outputs may depend on frames, and $\minEC$, which ensures enough temporal context for correct execution. Applications choose $\minEC$ and 
$\maxEC$ values subject to $\minEC \ll \maxEC \le \maxECApp$, where $\maxECApp$ is a system-wide bound for unprivileged applications.
\sysname also defines a larger bound, $\maxECSys$, for privileged
system services such as those used by API~2.
We generally think of $\maxECApp$ on the order of seconds to a
few minutes and $\maxECSys$ on the order of days.

\heading{Ephemeral containers.}
\sysname realizes ephemeral contexts via \emph{ephemeral containers}: 
sandboxed execution environments, implemented as a sequence of Docker
containers, with no network access, persistent storage, or output
except through the Output Controller. The lifetime of the output buffer is
tied to the ephemeral context that produced it.

To the application, execution appears continuous: it observes a
stream of frames and always has access to at least the last
$\minEC$ of context. Internally, \sysname enforces ephemerality
via container rotation. As the active container approaches
$\maxEC-\minEC$, \sysname launches a shadow container, feeds it
the same incoming frames, and switches once it has accumulated
$\minEC$ of state. The old container is then torn down and its
memory destroyed. Thus, applications always receive the required
$\minEC$ context for utility and should be structured to rely
only on it (e.g., operate on a sliding window of size $\minEC$),
while never accessing more than $\maxEC$, for privacy.

From a privacy perspective, therefore, all API~1 outputs depend on at most
the last $\maxEC$ data. This guarantee, however, holds only
within the node: once outputs leave the node, \sysname can no longer
control their lifetime. API~1 therefore couples ephemerality with
localized release.

\heading{Localized release.}
API~1 outputs leave the node only through the Output Controller's
\emph{localized release} channel, the sole channel enabled by default.
Conceptually, this is a bounded-range broadcast: outputs are sent to
devices in the immediate vicinity of the node and are not forwarded to
backend clouds or remote infrastructure. In practice, such scoping could
be implemented over mechanisms such as private 5G~\cite{3gpp-ts23501-r18},
whose coverage area naturally limits dissemination (Appendix~\ref{appendix:api1}).

The Output Controller enforces two additional constraints. First, outputs
inherit the ephemerality of the context that produced them: they are
evicted from the node's output buffer on the same schedule as the
producing container's teardown. Second, forwarding to neighboring
\sysname nodes requires explicit administrative permission and a bounded
hop count. This permits carefully scoped extensions -- e.g., letting a
parking application notify drivers of available spots one intersection
ahead -- without turning localized release into general cross-city
dissemination.

\heading{Sniffing risk and the $\maxEC$ tradeoff.}
Localized release assumes that API~1 outputs largely remain near the node
that produced them, but nearby devices could collect and forward them to
a backend. Shorter $\maxEC$ reduces the amount of history any such
sniffer can recover, at the cost of greater container-rotation overhead
and less temporal context for applications. We evaluate this tradeoff --
privacy vs.\ utility vs.\ overhead -- in \S\ref{sec:eval:api1} and find
that for reasonable parameters, \sysname provides acceptable overhead and
meaningfully limits small-scale sniffing.

Against large-scale sniffing, ephemerality and localization alone do not
suffice. Here \sysname follows the web's emerging response to third-party
tracking: it combines technical and regulatory defenses.
API~1 defaults applications to anonymized inputs -- bounding boxes,
object classes, and tracking IDs from the trusted {\em Object Detection
and Tracking} service -- which reduce, though do not eliminate,
re-identification risk. Beyond that, 
regulation should
constrain systematic collection, retention, and resale of localized
outputs that cannot be fully prevented technically. 

\subsection{API 2: Single-Locality Aggregation}
\label{sec:api2}

Longitudinal statistics at a single location, such as traffic counts over
hours, parking-demand forecasts, and near-colli\-sion frequency, are among
the most common workloads in urban sensing. Without a structured
interface, the path of least resistance is to stream raw video to a
backend and run queries there, accumulating the kind of
persistent behavioral record that \sysname is designed to prevent. API~2
exists to break that default. It provides a streaming aggregation
interface that releases only DP statistics, never raw
observations.

\sysname splits aggregation into two APIs following the same-location
principle (\S\ref{sec:overview}). API~2 handles single-locality
measurements (e.g., per-intersection dashboards, parking occupancy
prediction), implemented as a centralized DP computation at each node.
API~3 handles cross-locality measurements, as discussed in \S\ref{sec:api3}. 

%
\heading{Architecture.}
\F\ref{fig:detailed-design}(b) shows the API~2 architecture. A trusted
system service (\emph{Ephemeral DP Continual Release}) evaluates
streaming queries over a local event stream and periodically emits
differentially private aggregates. Events are produced by the trusted
\emph{Object Detection and Tracking} service running at the node.
Released values are treated as public outputs and routed through the
Output Controller to application backends. The associated privacy loss
is broadcast locally so that nearby devices can track their cumulative
exposure. All non-DP internal state lives inside an ephemeral container
and is discarded upon rotation; the container parameters that make this
work are described below.

\heading{Query API.}
API~2 exposes a streaming query interface over a continuous event stream,
inspired by classic streaming engines~\cite{cetintemel2016aurora}. The
default stream is produced by the trusted \emph{Object Detection} service,
which emits tuples of the form
$\langle \textit{objectID}, \textit{objectType}, \textit{features},
\textit{speed} \rangle$.
Applications register queries using the following template.
\begin{center}\small
\texttt{\{COUNT $\mid$ SUM $\mid$ AVG\}} \texttt{ OVER stream} \\
\texttt{ WHERE } $\textit{objectType} \in T \ \land\ \textit{features} \models \phi$ \\
\texttt{ WINDOWED BY } $\mathit{AW}_Q \ \texttt{ WITH SIGMA } \sigma_Q$
\end{center}

\noindent where $\mathit{AW}_Q$ is the aggregation window and $\sigma_Q$ is
the standard deviation of the noise added upon each release for DP.
To enable efficient multi-scale aggregation, $\mathit{AW}_Q$ is
restricted to dyadic intervals of the form $\maxECSys/2^K$.
\heading{Ephemeral DP continual release.}
API~2's central challenge is reconciling continual
release with data minimization. Standard continual-release
mechanisms~\cite{chan2011private,PUC+25} maintain state across an
unbounded stream so that interval queries can be answered with bounded
noise. \sysname's data-minimization principle
(\S\ref{sec:architecture}), however, requires all non-DP state to be
discarded within $\maxECSys$, even for trusted components. Resolving this tension is API~2's core contribution.

We support two continual-release mechanisms, binary
tree~\cite{chan2011private} and Toeplitz~\cite{PUC+25}, offering
better bounds under different DP regimes (binary tree for pure DP,
Toeplitz for approximate DP). Our evaluation
(\S\ref{sec:evaluation}) uses the latter, but we explain the former
for simplicity. Let $N=\maxECSys/AW_Q$ denote the number of
aggregation windows per ephemeral-container lifetime. In the binary
tree mechanism, each aggregation window corresponds to a leaf in a
complete binary tree of $N$ leaves, and internal nodes represent
unions of contiguous dyadic ranges. At the end of each window, the
system releases the noisy leaf and any newly completed internal
nodes. Each leaf contributes to $\log_2 N+1$ releases, and any query
over a contiguous interval $\ge AW_Q$ can be answered by combining at
most $2(\log_2 N - 1)$ noisy nodes (for $N \geq 4$), yielding
bounded error~\cite{chan2011private}.

\F\ref{fig:detailed-design}(b) illustrates this with the query
{\footnotesize\texttt{COUNT DISTINCT \ldots\ WHERE objectType = bicycle WINDOWED BY 60s WITH SIGMA 1}}.
Assuming each bicycle contributes at most once per 60-second
interval, the system computes the distinct count each minute and
releases the noisy leaf plus newly completed internal nodes. A
dashboard then estimate counts over coarser intervals by summing
released nodes; e.g., the count over $t_1$--$t_3$ uses
$t_1$--$t_2$ and $t_2$--$t_3$, with variance $2\sigma_Q^2$.

To enforce ephemerality while preserving utility, we host the release
service in an ephemeral container with $\maxEC=\maxECSys$ and
$\minEC=\maxECSys/2$. This ensures that any interval of length at
most $\maxECSys/2$ is fully contained within at least one container
instance, so queries over such intervals can be answered from a
single tree with noise comparable to the vanilla binary tree. For
longer intervals, we characterize utility in \S\ref{appendix:api2},
deriving coefficients for the linear growth in error with the number
of ephemeral contexts spanned. Because shadow containers do not
release outputs, the design introduces only one additional noisy node
compared to a tumbling-window baseline ($\log_2 N+2$ versus
$\log_2 N+1$ releases per container lifetime), yielding a
fixed-utility guarantee over sliding windows of size $\maxECSys/2$
with modest additional privacy loss.

\heading{Privacy unit and sensitivity.}
DP requires bounding how much any single individual can influence
released aggregates. On the web, this is straightforward. Sessions,
cookies, and account identifiers give first parties a natural way to
group and cap per-user contributions~\cite{BBD+22}. In public-space
sensing, the infrastructure is passive, observing individuals without
explicit interaction, and thus lacks stable per-person
identifiers. This is a fundamental challenge of the domain, not
specific to \sysname, and means the standard DP-SQL approach
of propagating per-user contribution bounds through query
operators~\cite{BBD+22} cannot be applied directly.

\sysname addresses this by exposing two modes for operators to choose on the formality--utility spectrum.

\heading{Untrusted mode (formal, high noise).}
Following prior work on DP analytics over untrusted video
pipelines~\cite{cangialosi2022privid}, the untrusted mode treats all
events in an aggregation window as potentially originating from a
single individual. This yields a sound DP guarantee without any
identity assumption, but at the cost of high sensitivity. Counting
queries become bounded-sum queries whose noise scales with a declared
count range (e.g., 100 to be able to count up to this many people).
The result is formal but often impractical for fine-grained measurements,
since the noise can overwhelm the signal. Moreover, counting an
unexpectedly large crowd may fail altogether if the true count
exceeds the declared range.

\heading{Trusted mode (empirical, low noise).}
For applications whose queries can be expressed in terms of the
set of objects supported by our trusted \emph{Object Detection and
Tracking} service, \sysname recovers a limited notion of per-person
identity through \emph{bounded tracking contexts} (TCs). A TC is a 
short, system-defined time window (on the order of minutes) within which the trusted service tracks objects using BoT-SORT~\cite{aharon-botsort-2022} as they move through, exit, and
re-enter the scene. Each detected object receives a unique ID that we
use to bound its contribution to queries within that TC.
The service runs in
an ephemeral container with $\maxEC=TC$ and $\minEC=0$; identifiers
are scoped to a single TC, reset across contexts, and never persist
beyond it.

If tracking were perfect, the sensitivity of a count-distinct query
would be exactly~1. In practice, tracking errors
(ID switches due to occlusion, re-entry, or appearance changes) may
assign multiple IDs to a single individual; we empirically
characterize these errors on standard benchmarks and derive a
conservative per-person contribution bound
(\S\ref{sec:eval:api2}). Concretely, we benchmark a production
YOLOv26n + BoT-SORT pipeline on four static-camera MOT17
sequences and find that 56.5\% of individuals receive exactly one
ID, with p90${=}4$, p95${=}5$, p99${=}7$, and a worst-case
maximum of~9. Setting $\Delta{=}9$ covers every individual in the
benchmark and yields an ${\approx}11{\times}$ noise reduction over
the untrusted mode.

The trusted mode's guarantee is therefore \emph{empirical}, resting
on an assumption about tracker fidelity rather than a worst-case
bound. This is analogous to how real-world DP deployments (e.g., DP-SGD for
ML training) routinely rely on empirical assumptions about privacy
units. More importantly, the design surfaces a new problem at the
intersection of computer vision and DP, namely,
developing tracking models whose error bounds can be formally incorporated into
privacy guarantees (e.g., folded into the failure probability of
approximate DP). We leave this to future work.

Under either mode, \sysname enforces \emph{(user, TC)-level} DP at
each node (proofs and parameters in \S\ref{appendix:api2}), maintaining
per-TC budgets that bound total privacy loss per individual.

\heading{Device-side privacy accounting.}
Shorter TCs better align with our limited-tracking principle and allow
more frequent releases, while longer TCs provide a more meaningful
privacy unit. We address this tension by keeping TCs short and enabling
coarser-grained composition on user devices.

At the beginning of each TC, the node assigns it a unique identifier
(tcId) and computes the maximum privacy loss $\rho$ any
individual present during that TC may incur, accounting for all relevant
releases across all queries, including those in
future TCs. The node then continuously broadcasts
$\langle \text{tcId}, \rho \rangle$ locally and we assum that if
an individual can be sensed, the individual also hears the broadcast.
Devices record this loss once per TC. Thus, although API~2 enforces
privacy separately at each node and over short TCs, devices
compose losses into a longitudinal, city-wide view of privacy cost.

\subsection{API 3: Cross-Locality Aggregation}
\label{sec:api3}

\begin{figure}[t]
    \centering
    \includegraphics[width=\linewidth]{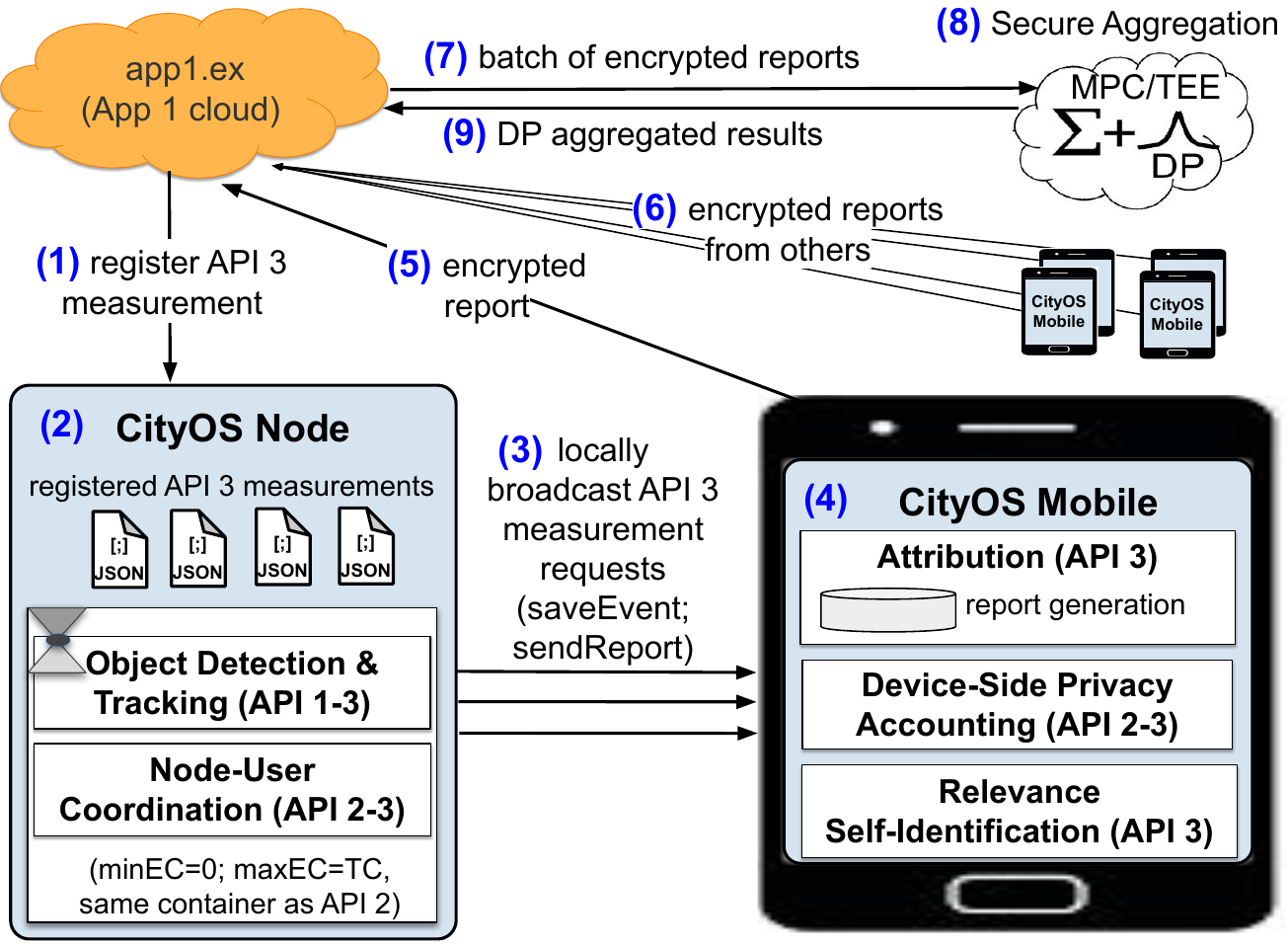}
    \caption{{\bf API 3: Cross-Locality Aggregation.} Key concepts: device-mediated federated measurement, on-device IDP enforcement, and node-user coordination for context-aware participation.}
    \label{fig:api3}
\end{figure}

Cross-location analytics, such as pedestrian flows between neighborhoods, transit origin--destination estimation, and how infrastructure changes affect foot traffic, rely on centralized pipelines that reconstruct individual trajectories from mobile-location data~\cite{Cuebiq-billboard-conversion-measurement}. API~3 replaces these pipelines with a federated interface adapted from W3C Attribution, where user devices produce encrypted, privacy-budgeted reports and infrastructure never observes individual trajectories.

\heading{Use cases.}
API~3 supports measurements over spaces that span multiple \sysname nodes and have well-defined entry and exit boundaries, such as commercial corridors, transit systems, or highway segments. While users remain within such a region, their devices locally record limited state (e.g., entry time or coarse trajectory features); upon exit, they generate encrypted reports that are aggregated with DP.

This model supports estimating pedestrian flows between neighborhoods, measuring transit-route popularity, and evaluating the impact of infrastructure changes without persistent identifiers or infrastructure-based tracking.

\heading{Architecture.}
\F\ref{fig:api3} shows the API~3 architecture, which splits responsibility between infrastructure and user devices.
Applications register measurement tasks with \sysname nodes in the target region, specifying simple device-side actions such as recording entry time or reporting elapsed time upon exit. Nodes store these tasks and broadcast them locally to nearby devices. Devices that receive a task execute it locally, recording minimal state while the user is in the region and, upon exit, generating encrypted reports that encode bounded statistics (e.g., time spent, or coarse trajectory summaries). These reports are sent to the application backend, which aggregates them across users using secure aggregation (e.g., MPC or TEE) and releases only DP results. Throughout, infrastructure never observes individual trajectories; nodes determine \emph{when} and \emph{which} measurements to perform, but all sensitive state and privacy enforcement remain on devices. This workflow mirrors Attribution's own primitives adapted to physical space; \S\ref{appendix:api3} gives an intuitive mapping, based on which we claim \sysname's API~3 DP guarantees; a formal mapping and analysis are pending work at the time of this writing.

\heading{Privacy model.}
Unlike API~2, where privacy is enforced at the node, API~3 enforces and accounts for privacy entirely on-device. Each device maintains an individual DP (IDP) budget over coarse-grained epochs. Each report consumes budget; once exhausted, further contributions are suppressed. \sysname reuses Attribution's on-device IDP accounting and composes it with API~2 losses, exposing a unified privacy-accounting interface. API~3 therefore provides strong per-device, per-epoch guarantees without requiring infrastructure to maintain per-user state.

%
\heading{Subpopulation disambiguation.}
The design above assumes measurements over all users in a region. In practice, applications often require subpopulation-specific statistics, such as pedestrians vs.\ cyclists or trucks vs.\ passenger vehicles. On the web, Attribution supports such distinctions because the querier interacts directly with the user (e.g., via an HTTPS session at conversion time) and can assign reports to measurement classes. In \sysname, nodes have no such direct interaction with devices; absent additional mechanisms, devices would participate based only on location, not on the querier's intended population. This is a utility challenge rather than a privacy one, and it requires a coordination mechanism that does not itself create new tracking vectors.

This raises a core systems question, namely, how to coordinate infrastructure and user devices to recover this capability without introducing new privacy-leakage channels.

\heading{Node-user coordination.}
\sysname addresses this through a two-sided mechanism that combines local broadcast with on-device inference. On the node side, the Node-User Coordination service runs alongside the Object Detection and Tracking service within a bounded tracking context. For each detected object, the node attaches metadata indicating which API~3 measurements are relevant (e.g., a bicycle-only measurement is attached only to bicycle detections; a corridor-entry measurement only to objects observed entering the region). These per-object annotations are broadcast locally using the same localized-release channel as API~1.

On the device side, upon receiving annotated signals, the device determines whether any of the broadcast tracks correspond to the user carrying it. It does so using onboard IMU data to classify coarse activity (pedestrian, cyclist, driver) and, when finer disambiguation is needed, to match its motion trace against nearby object tracks via cadence or motion-pattern analysis. The device then participates only in measurements for which it identifies itself as relevant. We call this process {\bf relevance self-identification} and it is performed by an on-device homonymous component (\F\ref{fig:architecture}). In simple cases, activity classification alone suffices; in more complex ones (e.g., entry vs.\ exit, or truck vs.\ car), track matching leverages infrastructure as a source of disambiguating context that mobile sensing alone cannot provide, while all identification logic remains on-device.          
\section{Prototype and Applications}
\label{sec:apps}

We implement \sysname in ${\sim}$4,100~LoC of Go, ${\sim}$530~LoC of Python, and ${\sim}$2,300~LoC of Rust, and build \numAppsUntrusted applications, plus the Object Detection and Tracking service, itself an API~1 app. We describe the \numAppsUntrusted applications below, two of which are illustrated via workflow visualizations in \F\ref{fig:app-visualizations}.

\begin{figure*}[t]
    \centering
    \includegraphics[width=\linewidth]{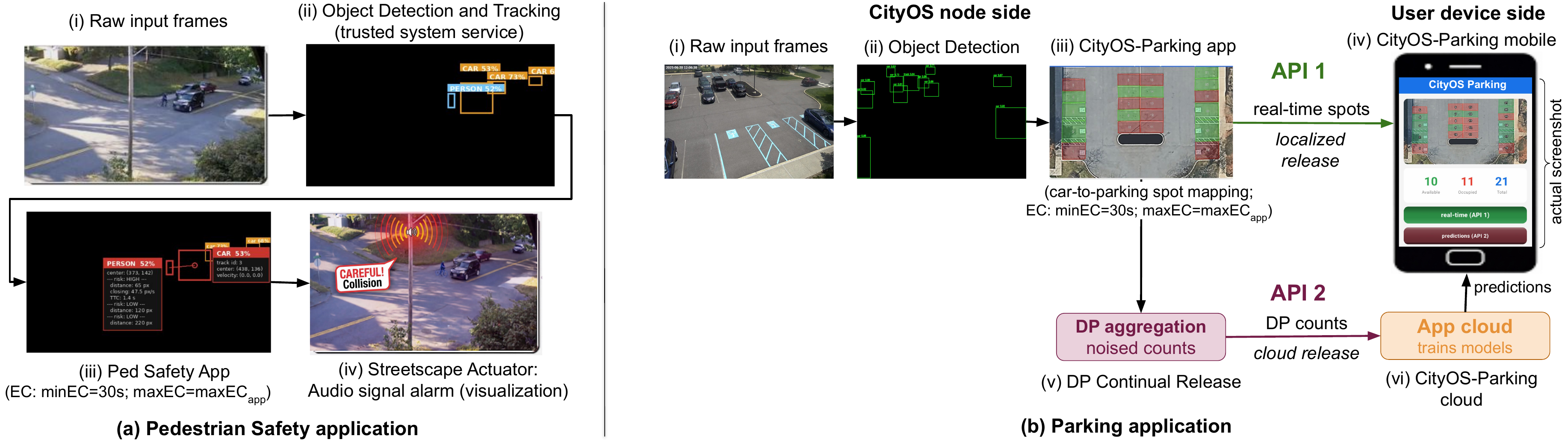}
    \caption{{\bf \sysname application workflows.} (a) Pedestrian safety (API~1). (b) Parking app (API~1+API 2). 
    }
    \label{fig:app-visualizations}
\end{figure*}

\heading{Pedestrian Safety (API~1).}
(\F\ref{fig:app-visualizations}a)
Provides real-time alerts for imminent pedestrian-vehicle collisions.
The workflow is: (i) raw input frames are processed by (ii) the trusted
Object Detection and Tracking service, which produces pedestrian,
cyclist, and vehicle tracks; (iii) the pedestrian safety application,
running in an API~1 ephemeral container, uses these tracks to estimate
collision risk based on distance and relative velocity over time.
Collision detection requires temporal context, as distinguishing a true
collision course from momentary proximity requires several seconds of
motion history. The application thus requests $\minEC=2$s, with
$\maxEC \le \maxECApp$ enforced by the system. When risk exceeds a
threshold, the application (iv) emits alerts via localized release to
nearby streetscape devices (e.g., speakers).

\heading{Parking App (API~1+API~2).}
(\F\ref{fig:app-visualizations}b)
Spans two API tiers: (1) to provide real-time spot availability to
nearby users via a companion mobile app (API~1); and (2) to provide
future availability predictions by training on DP aggregates of parking
occupancy (API~2). The API~1 workflow is:
(i) raw input frames are processed by (ii) the trusted Object Detection
and Tracking service, which outputs bounding boxes with object classes;
(iii) the parking app maps detected vehicles to predefined parking-spot
polygons (shown in the figure using an aerial view of the lot),
identifying occupied and available spots based on position.
The result is a per-frame list of spot IDs marked available or occupied.
This information is broadcast locally (and, with administrative
permission, within a small number of hops), enabling (iv) nearby users
running the companion mobile app to receive real-time availability.

In parallel, the API~2 workflow is: (v) the app registers a counting query with the DP Continual Release service to obtain DP estimates of parking occupancy over an aggregation window (e.g., one hour). (vi) These DP releases are sent to the application backend, which trains a simple predictive model (e.g., based on time of day and day of week). The mobile app can then request availability forecasts while the user is still far from the destination.

\heading{Dashboard App (API~2).}
(\F\ref{fig:apps:dashboard})
Provides per-location traffic statistics for city planners.
The application registers counting queries for pedestrians, vehicles, and
bicycles with the API~2 DP Continual Release service (\S\ref{sec:api2}),
which aggregates events over time and forwards differentially private
counts to the backend, where the dashboard visualizes them as hourly
breakdowns with confidence intervals.

\heading{Subway Trajectories (API~3).}
Estimates the popularity of subway routes without tracking riders.
The NYC subway lacks accurate route-level statistics because riders
tap in but do not tap out, forcing reliance on indirect, per-station
estimates. The workflow is: (i) \sysname nodes deployed at turnstiles
(entry/exit points) broadcast measurement requests to nearby devices;
(ii) devices record entry location and time upon entering the system;
(iii) upon exit, devices submit encrypted, privacy-budgeted reports
encoding entry location and elapsed time; and (iv) the backend aggregates
these reports with DP to estimate route popularity
and travel times. Devices use node-user coordination to self-identify
as entrants or exiters and execute the appropriate actions. This enables 
route-level measurement without persistent identifiers, tracking, or face
recognition, relying instead on device-mediated, privacy-bounded reporting.

\section{Evaluation}
\label{sec:evaluation}

We evaluate \sysname primarily at the API-mechanism level, using a representative workload per tier to measure the privacy-utility tradeoffs induced by each abstraction. These workloads exercise the core mechanisms that underlie the prototype applications in \S\ref{sec:apps}: ephemeral localized execution (API~1), DP continual release (API~2), and device-mediated cross-locality aggregation with self-identification (API~3). 
We evaluate: how effectively $\maxEC$ limits sniffing attacks and at what latency cost (API~1), whether DP continual release preserves useful counting accuracy (API~2), and whether cross-locality aggregation recovers useful statistics under per-device privacy budgets, and how self-identification quality affects that utility (API~3).

All node-side experiments run on an Nvidia Jetson AGX Orin edge GPU,
representative of embedded edge accelerators; we also benchmark API~1 on a desktop 5060~Ti (\S\ref{appendix:eval-details:api1}). Our primary accuracy
metric is RMSRE (root mean square relative error~\cite{hay-boosting-2010}), the RMS of per-bin relative deviations from ground truth. We use Toeplitz for DP continual release. Methodology details (hardware configuration, dataset preprocessing, noise parameters) are in \S\ref{appendix:evaluation-details}.


\subsection{API~1 On-Scene}
\label{sec:eval:api1}

API~1's main knob is $\maxEC$, the ephemeral context's lifetime. Shorter ECs reduce what a nearby sniffer can recover from any one location, but they also increase container-rotation overhead. We therefore evaluate both sides of this tradeoff, namely privacy exposure under sniffing attacks from \S\ref{sec:api1} and latency overhead in Object Detection and Tracking services.

\heading{Sniffing Attack with $\maxEC$.}
We simulate a small-scale sniffing adversary traversing Midtown Manhattan and, upon reaching an intersection, reads the API~1 output buffer and saves it. We maliciously grant the attacker the full buffer contents at arrival. We vary $\maxEC$ across \{30\,s, 1\,min, 2\,min, 3\,min, 4\,min, 5\,min\} and report the fraction of total city-wide person-activity captured over one hour. We consider four attacker profiles that traverse the evaluated perimeter at different speeds, namely a pedestrian, a \emph{cyclist}, a \emph{car}, and a {\em static device} planted at Times Square (speeds in \F\ref{fig:api1-sniffing}).

\begin{figure*}[t]
\centering

\begin{subfigure}[t]{0.26\linewidth}
    \vspace{0pt}
    \centering
    \includegraphics[width=\linewidth]{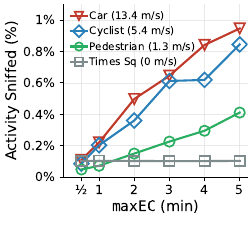}\vspace{-0.2cm}
    \caption{API~1 sniffing attack}
    \label{fig:api1-sniffing}
\end{subfigure}
\hfill
\begin{subfigure}[t]{0.19\linewidth}
    \vspace{0pt}
    \centering
    \scriptsize
    \begin{tabular}{@{}lr@{}}
    \toprule
    \multicolumn{2}{c}{\textbf{\# track IDs / person}} \\
    \midrule
    Mean   & 1.92 \\
    Median & 1    \\
    p90    & 4    \\
    p95    & 5    \\
    p99    & 7    \\
    Max    & 9    \\
    \midrule
    ${=}1$ ID        & 56.5\% \\
    ${\leq}2$ IDs    & 77.0\% \\
    ${\leq}3$ IDs    & 88.5\% \\
    ${\leq}5$ IDs    & 96.9\% \\
    \bottomrule
    \end{tabular}\vspace{0.2cm}
    \caption{API~2 sensitivity study}
    \label{fig:api2-sensitivity}
\end{subfigure}
\hfill
\begin{subfigure}[t]{0.26\linewidth}
    \vspace{0pt}
    \centering
    \includegraphics[width=\linewidth]{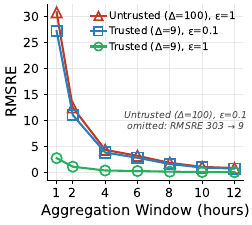}\vspace{-0.2cm}
    \caption{API~2 bike counter accuracy}
    \label{fig:api2-counter}
\end{subfigure}
\hfill
\begin{subfigure}[t]{0.26\linewidth}
    \vspace{0pt}
    \centering
    \includegraphics[width=\linewidth]{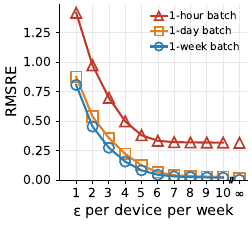}\vspace{-0.2cm}
    \caption{API~3 subway OD accuracy}
    \label{fig:api3-subway}
\end{subfigure}

\vspace{1em}
\caption{{\bf Evaluation of \sysname APIs.}
(a) API~1 sniffing-attack effectiveness under time-scoped, localized outputs.
(b) API~2 empirical sensitivity study.
(c) API~2 DP counting accuracy (trusted vs.\ untrusted).
(d) API~3 subway origin-destination (OD) accuracy vs.\ per-device budget.}
\label{fig:apis-eval}
\end{figure*}

\F\ref{fig:api1-sniffing} shows that $\maxEC$ is an effective privacy knob against mobile sniffers. For the fastest attacker (car), capture rises from $0.11\%$ of city-wide person-activity at $\maxEC{=}30$\,s to $0.95\%$ at $5$\,min (${\sim}9\times$); pedestrians and cyclists follow the same trend, rising from $0.05\%$ to $0.41\%$ and from $0.09\%$ to $0.84\%$, respectively; the car visits roughly 147 of the ${\sim}540$ intersections (27\%) yet captures under 1\% of city-wide activity. A static attacker captures ${\sim}0.1\%$ regardless of $\maxEC$, since it observes only one location. Together, these results show that ephemerality primarily limits what \emph{mobile} sniffers can accumulate across locations, while localization confines any \emph{static} compromise to its immediate vicinity.


\heading{Performance overhead.}
On a desktop 5060~Ti, frame drops reach near-zero by $\maxEC{\geq}10$\,s; the Jetson AGX Orin plateaus near ${\sim}17\%$ drops due to slower container startup (\F\ref{fig:api1-latency}; details in \S\ref{appendix:eval-details:api1}).


\subsection{API~2 Single-Locality Aggregation}
\label{sec:eval:api2}

API~2 replaces raw longitudinal collection at a node with DP continual
release. The key question is whether releases remain useful. We
compare \emph{trusted} mode against the \emph{untrusted}
mode. The trusted mode has empirical but tight sensitivity; the untrusted mode provides sound but high sensitivity.

\heading{Empirical sensitivity.}
Trusted-mode sensitivity $\Delta$ equals the maximum track-ID multiplicity a single person produces per release period ($\rho_{\mathsf{track}}$ in our formal model, \S\ref{appendix:api2}). We benchmark the production YOLOv26n + BoT-SORT pipeline on four static-camera MOT17 sequences~\cite{milan-mot16-2016} (3{,}012 frames, 191 matched identities; pipeline and matching details in \S\ref{appendix:eval-details:api2}). \F\ref{fig:api2-sensitivity} summarizes the distribution. The median is 1~ID per person (56.5\% receive exactly one), with p90${=}4$, p95${=}5$, p99${=}7$, and a worst-case maximum of~9. We conservatively set $\Delta{=}9$ so the DP guarantee holds for every individual in the data. Using p95 ($\Delta{=}5$) reduce noise $1.8{\times}$ while still covering 96.9\% of individuals; p99 ($\Delta{=}7$) covers 99\% with $1.3{\times}$ less noise. For the untrusted path, $\Delta{=}100$ (operator bound), yielding an ${\approx}11{\times}$ sensitivity gap that propagates linearly into noise.

\heading{Counting accuracy.}
We evaluate on 30 days of 15-minute bicycle counts from the Williamsburg Bridge Eco-Counter~\cite{nyc-dot-bike-counters}, aggregating over windows from 1 to 12~hours and reporting average RMSRE over 1{,}000 Monte Carlo trials.

\F\ref{fig:api2-counter} shows the results across three configurations: trusted at $\varepsilon{=}1$, trusted at $\varepsilon{=}0.1$ (a $10{\times}$ stricter budget), and untrusted at $\varepsilon{=}1$. The untrusted mode incurs a fixed ${\approx}11{\times}$ penalty across all release periods, reflecting the linear scaling of noise with sensitivity. Aggregation dominates accuracy: at $\varepsilon{=}1$, trusted RMSRE drops from $2.77$ (1~hour) to $0.29$ (6~hours) to $0.08$ (12~hours), a ${\sim}35{\times}$ improvement at fixed privacy cost. At p95 ($\Delta{=}5$) these values would be $1.8{\times}$ lower. Even with a $10{\times}$ smaller $\varepsilon{=}0.1$, the trusted path (RMSRE~$2.84$ at 6~hours) still beats untrusted at $\varepsilon{=}1$ (RMSRE~$3.21$), because trusted mode reduces sensitivity by ${\approx}11{\times}$, more than compensating for the tighter budget.

\begin{figure*}[t]
\centering
\begin{subfigure}[t]{0.24\linewidth}
    \vspace{0pt}
    \centering
    \includegraphics[width=\linewidth]{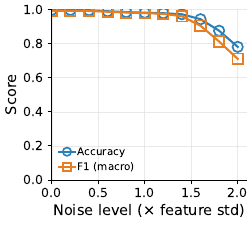}
    \caption{Phase~1 IMU activity classification}
    \label{fig:self-id-noise}
\end{subfigure}
\hfill
\begin{subfigure}[t]{0.24\linewidth}
    \vspace{0pt}
    \centering
    \includegraphics[width=\linewidth]{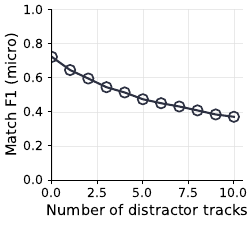}
    \caption{Phase~2 matching with distractors}
    \label{fig:stage2-distractors}
\end{subfigure}
\hfill
\begin{subfigure}[t]{0.24\linewidth}
    \vspace{0pt}
    \centering
    \includegraphics[width=\linewidth]{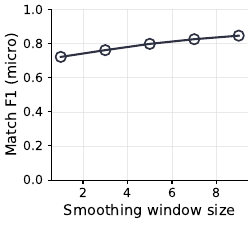}
    \caption{Phase~2 matching with smoothing}
    \label{fig:stage2-smoothing}
\end{subfigure}
\hfill
\begin{subfigure}[t]{0.24\linewidth}
    \vspace{0pt}
    \centering
    \includegraphics[width=\linewidth]{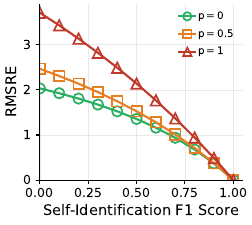}
    \caption{Subway OD utility}
    \label{fig:api3-self-id}
\end{subfigure}
\vspace{1em}
\caption{{\bf API~3 self-identification evaluation.} (a)--(c) Accuracy of our two-stage self-identification pipeline. (d) Impact of identification errors on subway OD.}
\label{fig:self-identification}
\end{figure*}

\subsection{API~3 Cross-Location Aggregation}
\label{sec:eval:api3}

API~3 is the strongest test of the architecture, aiming to recover cross-location statistics without allowing infrastructure to track individuals. We evaluate two questions. First, how much utility is lost when devices enforce per-user privacy budgets and stop reporting after exhaustion? Second, how sensitive is that utility to imperfect self-identification?
We use the subway origin-destination (OD) estimation application from
\S\ref{sec:apps}. We answer the first question here and the second in the subsequent section.

We use the Hangzhou Metro smart-card dataset (29.2\,M trips by
4.97\,M riders across 80 stations over 25 days). Each entry event is
stored on-device, and each exit generates a report charging
$\varepsilon_{\text{rep}}{=}0.5$ against a weekly device budget
$\varepsilon$. Reports are DP-aggregated per exit station to produce an OD histogram. We vary $\varepsilon$ across
$\{1,\dots,10,\infty\}$ and batch lengths of one hour, one day, and one week, reporting RMSRE.

\F\ref{fig:api3-subway} shows three regimes. First, weekly batches isolate the effect of budget exhaustion. Because bins are large (${\sim}1{,}140$ trips/bin), the DP noise floor is tiny (RMSRE $0.0025$), so error is driven mainly by how many riders exhaust their budgets. RMSRE drops sharply from $0.81$ at $\varepsilon{=}1$ to $0.083$ at $\varepsilon{=}5$ and to $0.020$ at $\varepsilon{=}10$. This reflects the rider distribution, since 93\% of riders make at most 10 trips per week and are therefore fully covered once $\varepsilon \ge 5$.
Second, daily batches trade some accuracy for finer time resolution. RMSRE falls from $0.88$ at $\varepsilon{=}1$ to $0.13$ at $\varepsilon{=}5$, $0.042$ at $\varepsilon{=}7$, and $0.028$ at $\varepsilon{=}10$, approaching the daily noise floor of $0.016$, accurate enough for ranking high-volume corridors without constructing infrastructure-side trajectories. Hourly batches are noise-limited (RMSRE $0.32$ even at $\varepsilon{=}\infty$ with ${\sim}9$ trips/bin), placing the operating point at daily-to-weekly aggregation.


\subsection{API~3 Self-Identification}
\label{sec:eval:self-idenfification}

The preceding section assumed that devices can determine whether they are relevant to a measurement (e.g., whether they are entering vs. exiting a subway station). We now evaluate this assumption, studying (1) the accuracy of our two-stage self-identification pipeline and (2) how identification errors affect cross-locality utility. Non-vital methodological details for each experiment are in \S\ref{appendix:eval-details:self-idenfification}.


\heading{Stage~1 activity classification from IMU.} 
We classify device activity (walking, driving, cycling) from IMU signals using a gradient-boosting model trained on standard IMU datasets spanning pedestrian, vehicle, and cycling traces~\cite{vi-fi_2022,driver-2022,pamap2}. To measure robustness, we inject additive Gaussian noise scaled by the per-feature standard deviation of the test set (x-axis, so 0.0 is clean data, 1.0 adds noise with the same spread as the original features, and 2.0 doubles it), retraining at each scale. \F\ref{fig:self-id-noise} shows that Stage~1 is highly accurate and robust to noise. On clean data, it achieves 99.7\% accuracy and 99.2\% macro F1, with per-class F1 of 99.9\% (Walking), 99.8\% (Driving), and 97.9\% (Cycling). With noise-augmented training, it retains 98.1\% accuracy (97.8\% F1) at noise equal to one feature standard deviation, degrading gracefully to 78.0\% accuracy at 2$\times$ noise. This places the system in a high-F1 regime under realistic sensing conditions.

\heading{Stage~2 bounding-box to IMU matching.}
Stage~2 matches pedestrian phones to visual tracks by comparing walking-band (0.7--3.0\,Hz) power spectral densities of each phone's IMU signal and each track's visual motion, forming a pairwise cosine-distance matrix, and solving the assignment with the Hungarian algorithm. An adaptive null-assignment penalty routes unmatched tracks to a dummy, so extra visual tracks do not force incorrect pairings. We evaluate using window-level micro-averaged F1 on outdoor pedestrian traces from Vi-Fi~\cite{vi-fi_2022}.
With no distractors, Stage~2 achieves F1 of $0.72$. \F\ref{fig:stage2-distractors} shows that injecting up to 10 distractor tracks degrades F1 gradually. \F\ref{fig:stage2-smoothing} shows that temporal smoothing over 5--7 windows further improves stability by correcting transient mismatches. This demonstrates that cross-modal matching remains reliable in crowded scenes.


\heading{Impact on API~3 utility.}
We evaluate how self-identification quality affects the subway OD measurement from \S\ref{sec:eval:self-idenfification}, which assumed perfect classification. In practice, transfer passengers may be misclassified as exiters, injecting spurious reports. We reuse the Hangzhou Metro dataset and infer 13.6\,M transfer events. We model self-identification with recall $a \in [0,1]$ and false positive rate $1{-}a$, vary station geometry via $p \in \{0, 0.5, 1\}$ (the fraction of transfers near exits), and report weekly RMSRE including the analytic noise floor ($0.0025$).

\F\ref{fig:api3-self-id} shows that classification quality directly determines OD utility. At F1${=}1$, all curves collapse to the noise floor; at F1${=}0.5$, RMSRE reaches $2.14$ in the worst case ($p{=}1$) versus $1.35$ with no transfer traffic ($p{=}0$), showing that false exits dominate the error budget. Missed exits reduce counts in bins that already carry signal, but false exits both inflate counts \emph{and} introduce nonzero entries in zero-valued bins, producing disproportionately large relative errors. Station layouts that separate transfer paths from exits (low $p$) suppress the more damaging error mode. Degradation is gradual. At F1${=}0.9$, worst-case RMSRE is $0.49$, sufficient for corridor-level rankings. Since Stage~1 achieves $97.8\%$ F1 under realistic noise, the system operates in a regime where RMSRE approaches the noise floor, showing \sysname supports accurate OD estimation without infrastructure-level tracking.

\section{Related Work }
\label{sec:related-work}

\sysname draws on four lines of prior work, each exposing a facet of
the same gap, namely the absence of privacy-enforcing interfaces at
the data-access layer. Inspired by the web's shift toward
privacy-conscious platform
APIs, \sysname fills this gap with purpose-built, risk-calibrated
interfaces.

\heading{Urban sensing.}
Urban sensing spans fixed networks~\cite{murty-citysense-2008}, mobile
collection~\cite{hull-cartel-2006}, and participatory
paradigms~\cite{burke-participatory-2006}. Multi-year
deployments~\cite{catlett-aot-2017,sanchez-smartsantander-2014,mydlarz-sonyc-2019}
show that privacy concerns surface only after hardware is fielded;
\sysname defines data-access constraints before applications are
deployed.

\heading{Platforms and interoperability.}
sMAP~\cite{dawson-haggerty-smap-2010} showed that standardizing
interfaces unifies heterogeneous feeds. City-scale
standards~\cite{etsi-ngsild-2024,ogc-sensorthings-2016,onem2m-ts0001-2020}
and platforms built on
them~\cite{jeong-citydatahub-2020,fiware-2016,iudx-2020} centralize
raw streams before evaluating access policy. \sysname intervenes
earlier so re-identifiable data never reaches a backend store.

\heading{Edge and fog computing.}
NIST~\cite{nist-fog-2018} and ETSI MEC~\cite{etsi-mec-2024} formalize
compute near endpoints, driven by bandwidth and latency constraints on
modalities like video~\cite{rivas-cova-2022}.
Waggle/SAGE~\cite{beckman-waggle-2016} embodies ``process locally,
export derived products'' but provides no mechanism to bound what an
edge plugin may export; \sysname adds this missing layer through its
three API tiers.

\heading{Privacy and governance.}
Anonymization techniques do not compose under
linkage~\cite{dwork-roth-2014}. Legal
frameworks~\cite{gdpr-2016,nist-privacy-2020,nistir-8259a-2020,etsi-en303645-2020}
specify obligations but not enforcement; federated
learning~\cite{pandya-fl-smartcities-2023} protects training data but
not inference flows. The closest precedent, W3C Attribution
API~\cite{w3c-attribution-2024,TKM+24}, replaced cross-site tracking
with differentially private measurement; \sysname applies this
principle to physical sensing.
\section{Conclusion}
\label{sec:conclusions}

We presented \sysname, a privacy architecture for urban sensing
grounded in a web-to-physical-world data-access analogy, with three
API tiers enforced by ephemeral containers, localized release, DP
continual release, and node-device coordination. Five prototype
applications demonstrate practical smart-space functionality with
privacy guarantees.

\section*{Acknowledgments}
\label{sec:acks}

This work was supported by the National Science Foundation (NSF) and Center for Smart Streetscapes (CS3) under NSF Cooperative Agreement No. EEC-2133516.

We thank CS3 members for discussions and feedback about the project, especially Jason Hallstrom, Henning Schultzrinne, and Andrew Smyth.

A number of students contributed to the development of \sysname: Kimberly Collins (container lifecycle fixes), Othmane El~Houssi (frame garbage collection and output driver acknowledgements), Ian Kammerman (integration testing and macOS support), Christian Montgomery (stream access control), Apolline Weinstein (inter-node communication and output buffer API), and Seojin Yoon (video and webcam input drivers), all at Columbia University; and Maahin Rathinagiriswaran (IMU classifier) at Rutgers University. Pierre Tholoniat, Jessica Card, and Alison Caulfield (Columbia University) contributed to the development of the Attribution library we are using, including core DP accounting, Python bindings, and logging infrastructure, respectively. We thank them all for their valuable work.

This article was edited with the assistance of generative AI. All ideas, design details, analyses, and results originate from the authors. The text was initially written by the authors, then edited with AI for clarity, concision, and structure, and finally verified by the authors for accuracy and authenticity. The \sysname code was developed in a similar author-originated, AI-assisted, and author-validated manner.
                                    
\clearpage
\bibliographystyle{ACM-Reference-Format}
\bibliography{refs/all,refs/bias_pierre,refs/jorge}

\clearpage
\newpage
\appendix

\section{API~1: On-Scene -- Security Analysis}
\label{appendix:api1}

API~1 exposes the richest access to raw sensing data and therefore
carries the highest immediate risk if not carefully constrained.
We summarize its protections and residual risks under our threat model
(\S\ref{sec:threat-model}). This is a security analysis, not a privacy
one, as API~1 provides no formal privacy guarantees.

\heading{Node compromise.}
If an adversary compromises a \sysname node, the exposed data is
fundamentally limited by API~1's ephemerality guarantees. At any time,
the node retains at most $\maxECSys$ worth of raw input frames and
derived state. As a result, temporary compromise yields only a short,
temporally bounded snapshot of activity at that location, rather than
a persistent or historical record. This sharply contrasts with
centralized architectures, where compromise exposes long-term stored
data across many locations. Long-term compromise (such as with advanced persistent threats) is outside the scope of all \sysname defenses.

\heading{Application-level exfiltration.}
Untrusted applications may attempt to exfiltrate information through
API~1 outputs. \sysname constrains this channel in two ways.
First, applications execute inside ephemeral containers with no network
access or persistent storage, preventing direct backend transmission.
Second, outputs are restricted to \emph{localized release}, a bounded-range
broadcast channel that does not forward data to remote infrastructure.

This design prevents straightforward backend collection but leaves open
the possibility of indirect exfiltration via nearby devices.

\heading{Localized release via private 5G.}
Localized release can be implemented with a private 5G small cell
co-located with each \sysname node~\cite{peterson-private5g-2023}.
Recent spectrum-sharing regulations allow organizations to deploy
private cellular networks without acquiring traditional licensed
spectrum. A small cell mounted on a streetlight covers roughly
100--300\,m in urban settings~\cite{rodriguez-pathloss-2013}, so radio propagation itself bounds
dissemination to the node's immediate vicinity without application-level
geo-fencing. The 5G architecture further supports deploying the network's
data-forwarding function at each site~\cite{3gpp-ts23501-r18,peterson-private5g-2023,macdavid-upf-2021}, so that application traffic is
served entirely on the node and never traverses a wide-area backhaul
link, so outputs remain local by construction.
For mobile receivers, cellular handover completes in
tens of milliseconds~\cite{hassan-vivisecting-2022}, giving even a vehicle
at city speeds ample dwell time within a single cell. The coverage boundary is a practical, not
cryptographic, spatial bound; we rely on ephemerality ($\maxEC$) rather
than on radio propagation alone for privacy (\S\ref{sec:api1}).

\heading{Sniffing attacks.}
A nearby device can collect API~1 outputs and forward them to a backend
(a \emph{sniffing attack}, \S\ref{sec:api1}). The effectiveness of such
attacks depends on their scale.

\emph{Small-scale attackers} (few devices) are strongly limited by
ephemerality and locality. As shown in \S\ref{sec:eval:api1},
even a car-speed attacker captures at most $2.22\%$ of city-wide
person-activity in one hour under the largest $\maxEC$ tested
(10\,min), and as little as $0.11\%$ under short contexts (30\,s).
The attacker obtains only fragmentary, temporally bounded observations
from visited locations, not a continuous city-wide view.

\emph{Static attackers} capture data only at a single location, with
exposure determined by local traffic rather than system-wide scale
(\S\ref{sec:eval:api1}). Localized release confines this risk
spatially: compromise does not propagate across nodes.

\emph{Large-scale attackers} (e.g., widely deployed mobile apps) could
collect outputs across many locations. While this case is outside our
core threat model, \sysname mitigates it through (i) anonymized inputs
by default (bounding boxes rather than raw imagery), reducing the
information content of outputs; and (ii) reliance on policy and
regulation to prohibit systematic collection and resale of localized
signals, analogous to web defenses against third-party tracking.

\heading{Targeted exfiltration.}
Applications may attempt to encode information in seemingly benign
outputs, such as watermarking signals, tied to specific targeted individuals.
Such attacks are difficult to detect purely technically. \sysname limits
their impact through ephemerality (bounded temporal scope) and
localization (bounded spatial scope), ensuring that successful
attacks require either a large-scale sniffer or physically trailing
the individual.

\heading{Visual feedback loops.}
A potential attack against \sysname's cross-ephemeral-context
independence is a \emph{visual feedback loop} between a malicious
application at the node and a colluding device in the environment.
In this attack, when the application detects a target individual,
it emits a signal via API~1 outputs; a nearby device senses this
signal and responds with a visual (or otherwise observable) pattern
that is fed back into the node's sensors. By iterating this loop,
the application can effectively externalize state and carry it across
ephemeral contexts.

Such behavior is feasible and should be explicitly disallowed at the
policy level. Its impact depends on the attacker's objective.
If the goal is \emph{information exfiltration}, the attacker gains
no additional power beyond a sniffing attack: any information encoded
in the feedback loop could equivalently be captured directly from
API~1 outputs by the nearby colluding device (\S\ref{sec:api1}).
Thus, ephemerality and localization continue to bound the attacker's
effective visibility.

If the goal is to \emph{violate API~2 sensitivity assumptions}, the attack is more subtle. By persisting signals across tracking contexts ($TC$'s), a feedback loop can induce dependencies that break the
assumption that events within each $TC$ are independent. This can
invalidate sensitivity bounds in the \emph{untrusted mode of operation}
(e.g., Privid-style bounding), which assumes computation over short
aggregation intervals are isolated from each other. In contrast, the 
\emph{trusted mode of operation}, which derives sensitivity from the
trusted Object Detection and Tracking pipeline, is not affected by such
cross-context manipulation.

\heading{Summary.}
API~1 does not eliminate all data exposure, but fundamentally changes its nature: from persistent, centralized collection to short-lived, localized observation. Under our threat model, compromise yields only fragmentary views of activity, and reconstructing large-scale behavior requires sustained physical presence or large-scale deployment. This provides a stronger foundation for building an application ecosystem than the conventional centralize-and-retain approach, enabling on-scene applications while avoiding the accumulation patterns that give rise to tracking-based systems.
\section{API 2: Single-Locality Aggregation - Privacy and Utility Analyses}
\label{appendix:api2}


This section formalizes the API~2 \emph{privacy and utility statements} from \S\ref{sec:api2}. We analyze two modes of operation: \emph{trusted event generation}, in which events are produced by the Object Detection and Tracking service (the default), and \emph{untrusted event generation}, in which applications supply their own events. While \sysname supports both, the trusted mode is the default due to the untrusted mode's inflated sensitivity on count queries, which are expected to be common in API~2 (see \S\ref{sec:api2}).

Both modes share the same data model and Laplace release mechanism, but differ in their neighboring relations, how per-context outputs are modeled, and in the precision of per-user accounting. After formalizing the shared data model and standard DP tools (\S\ref{appendix:api2:data-model}), we present per-AW privacy analyses for the two modes (\S\ref{appendix:api2:trusted-model}, \S\ref{appendix:api2:untrusted-model}). We then formalize the binary tree mechanism and shadow bridging release (\S\ref{sec:api2-binary-tree}), prove the node-side privacy guarantees that compose across the tree (\S\ref{appendix:api2:privacy}), and analyze the utility of the DP continual-release mechanism under state reset (\S\ref{appendix:api2:utility}). The utility analysis applies regardless of mode of operation (trusted vs. untrusted), although in practice the bounds differ because sensitivity differs between the two modes.

\subsection{Main Results}
\label{appendix:api1:main-results}

We state the main privacy and utility guarantees upfront. We scope the results to a single locality (node), $\ell$. Let $\{Q_1,\ldots,Q_K\}$ be the registered queries at $\ell$. Each query $Q_q$ has aggregation window $\mathsf{AW}_{Q_q}$, tree size $N_q = \mathsf{maxEC}_{\mathsf{sys}}/\mathsf{AW}_{Q_q}$, per-query sensitivity $\Delta_q$, noise parameter $\sigma_{Q_q}$, and per-tree-node privacy parameter $\varepsilon_q = \Delta_q\sqrt{2}/\sigma_{Q_q}$. The per-query notation ($N$, $\varepsilon$, $\Delta$, $\sigma^2$, the binary tree $\mathcal{B}_N$, the shadow bridging release, $V(N)$, and the neighboring relation $\sim^{\mathsf{user}}_u$) is formalized in \S\ref{sec:api2-binary-tree}. Proofs appear in \S\ref{appendix:api2:privacy} and \S\ref{appendix:api2:utility}.

\begin{theorem}[(User, TC)-level DP on the node]
\label{thm:api2-btm-user-tc}
Consider a locality $\ell$ with registered queries $\{Q_1,\ldots,Q_K\}$. Let $\mathcal{M}$ denote the joint release mechanism across all $K$ queries. Then $\mathcal{M}$ satisfies $(\rho_{\mathsf{node}},0)$-DP at the (user, TC) level, where
\begin{align}
\label{eq:node-filter}
\rho_{\mathsf{node}} := \sum_{q=1}^{K} \Bigl(\log_2 N_q + 2\Bigr)\varepsilon_q.
\end{align}
That is, $\forall u, \forall$ TC $e$ during which $u$ is present, $\forall D \sim^{\mathsf{user}} D'$, $\forall$ measurable $S \subseteq \mathcal{O}$:
\begin{align}
\Pr[\mathcal{M}(D) \in S] \leq e^{\rho_{\mathsf{node}}} \Pr[\mathcal{M}(D') \in S].
\end{align}
\end{theorem}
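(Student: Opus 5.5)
The argument is a basic-composition argument over the released tree nodes of every query, using the per-(user, TC) sensitivity bound $\Delta_q$ at each tree node together with the Laplace mechanism.

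\textbf{Step 1: one noisy tree node.} Fix a user $u$, a TC $e$ in which $u$ is present, and neighbors $D \sim^{\mathsf{user}}_u D'$. These differ only in $u$'s events during $e$. Take any single noisy tree node of query $Q_q$, whether a leaf, an internal node of the binary tree $\mathcal{B}_{N_q}$, or the shadow bridging release. Its value is a sum of clamped per-aggregation-window contributions plus Laplace noise with standard deviation $\sigma_{Q_q}$, hence scale $b_q = \sigma_{Q_q}/\sqrt{2}$. The first thing to verify is that this sum changes by at most $\Delta_q$ between $D$ and $D'$. This is the per-AW analysis for the trusted and untrusted modes (\S\ref{appendix:api2:trusted-model}, \S\ref{appendix:api2:untrusted-model}). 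Because a tree node is a union of aggregation windows, the bound also has to hold for unions. I will rely on the fact that a TC lies inside a single aggregation window, or more generally that $\Delta_q$ is defined as the total (user, TC) contribution to any sum. The Laplace mechanism then gives $\Delta_q/b_q = \Delta_q\sqrt{2}/\sigma_{Q_q} = \varepsilon_q$-DP for that node.

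\textbf{Step 2: count the releases affected by one TC.} Fix query $Q_q$. $u$'s events in $e$ lie in some aggregation window, which is a leaf in one or more container instances. Because shadow containers release nothing except the bridging node, only one active tree releases nodes that depend on that leaf, so root-to-leaf nodes contribute at most $\log_2 N_q + 1$. The shadow bridging release adds at most one further noisy node containing the leaf, giving $\log_2 N_q + 2$ in total. Every other released node of $Q_q$ has identical distribution under $D$ and $D'$. I expect this to be the main obstacle. The argument must use the rotation schedule, with $\maxEC = \maxECSys$ and $\minEC = \maxECSys/2$, to show that a window is never counted in two active trees' released nodes beyond the single bridging node. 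It must also handle a TC that straddles an aggregation-window boundary. For the straddling case I would first assume TCs are aligned to, or contained in, a window, given the dyadic restriction on $\mathsf{AW}_Q$. Failing that, I would fold the straddling into $\Delta_q$.

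\textbf{Step 3: compose.} The noise is independent across nodes and queries, and the mechanism is adaptive at most in time ordering. Sequential basic composition for pure DP over the affected nodes therefore gives $(\log_2 N_q + 2)\varepsilon_q$ for query $q$. Composing over $q = 1,\ldots,K$ gives $\rho_{\mathsf{node}}$. Nodes unaffected by $e$ contribute a likelihood ratio of 1. Applying this to every measurable $S$ yields the stated inequality.
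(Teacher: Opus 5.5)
Your proposal is correct and follows essentially the same route as the paper. Each released node is a Laplace release with scale $\Delta_q/\varepsilon_q$, so it is $\varepsilon_q$-DP; a given aggregation window appears in at most $\log_2 N_q+1$ tree releases plus one shadow bridging release; and basic composition is applied first within each query and then across queries. The paper simply packages your Steps 1--2 as a per-query lemma under the single-tracked-ID relation $\sim^{\mathsf{id}}$. It then invokes the tracking-fidelity assumption (one ID per user per TC), which is where your ``$\Delta_q$ bounds the (user, TC) contribution'' comes from. By definition each AW is a whole block of TCs, so the TC lies in a unique AW and your straddling fallback is never needed.
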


\begin{theorem}[(User, $\mathsf{maxEC}_{\mathsf{sys}}$)-level composition on the device]
\label{thm:api2-btm-user-maxec}
Fix a user $u$ and one $\mathsf{maxEC}_{\mathsf{sys}}$ window at locality $\ell$. Let $e_1, \ldots, e_G$ ($G \leq \mathsf{maxEC}_{\mathsf{sys}}/\tau_{\mathsf{TC}}$) be the TCs during which $u$ is present, and let $\rho_{\mathsf{node}}^{(g)}$ denote the per-TC filter capacity during $e_g$ (\Thm\ref{thm:api2-btm-user-tc}). The total privacy loss incurred by $u$ is at most $\sum_{g=1}^{G} \rho_{\mathsf{node}}^{(g)}$.
\end{theorem}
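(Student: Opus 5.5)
The plan is to obtain \Thm\ref{thm:api2-btm-user-maxec} from \Thm\ref{thm:api2-btm-user-tc} by basic sequential composition, applied over the TCs a user occupies within one $\mathsf{maxEC}_{\mathsf{sys}}$ window. Fix $u$, the locality $\ell$, and the window. Let $D$ be the full event stream at $\ell$ over the window. Let $D'$ be the stream with all of $u$'s events removed, or replaced, according to the neighboring relation $\sim^{\mathsf{user}}_u$ of \S\ref{sec:api2-binary-tree}.

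\textbf{Step 1: build a hybrid chain.} Form $D = D_0, D_1, \ldots, D_G = D'$, where $D_g$ is obtained from $D_{g-1}$ by changing only $u$'s contribution during TC $e_g$. Two observations make this work:
\begin{itemize}
\item Identifiers are scoped to a single TC and reset across TCs, because the tracking service runs in an ephemeral container with $\maxEC = TC$. Hence $u$'s events in distinct TCs are, from the mechanism's point of view, contributions of distinct (user, TC) privacy units.
\item Each consecutive pair therefore satisfies $D_{g-1} \sim^{\mathsf{user}} D_g$ at the (user, TC) level for TC $e_g$.
\end{itemize}
Outside the TCs $e_1,\ldots,e_G$, user $u$ contributes nothing, so $D$ and $D'$ already agree there.

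\textbf{Step 2: apply the per-TC guarantee.} \Thm\ref{thm:api2-btm-user-tc} gives, for each $g$ and every measurable $S$,
\[
\Pr[\mathcal{M}(D_{g-1}) \in S] \le e^{\rho^{(g)}_{\mathsf{node}}}\Pr[\mathcal{M}(D_g) \in S].
\]
Here $\rho^{(g)}_{\mathsf{node}}$ is the per-TC filter capacity in force during $e_g$. It already accounts for every release that depends on data from $e_g$, across all queries and including future tree nodes and the shadow bridging release.

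\textbf{Step 3: chain the bounds.} Multiplying the $G$ inequalities yields
\[
\Pr[\mathcal{M}(D)\in S] \le e^{\sum_g \rho^{(g)}_{\mathsf{node}}}\Pr[\mathcal{M}(D')\in S],
\]
which is the claimed bound. This is the group-privacy-style argument for pure DP. Since $\mathcal{M}$ is a single joint mechanism, there is no adaptivity issue to handle.

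\textbf{Step 4: count the TCs.} The bound $G \le \mathsf{maxEC}_{\mathsf{sys}}/\tau_{\mathsf{TC}}$ follows because TCs partition time into intervals of length $\tau_{\mathsf{TC}}$. The device-side reading is then immediate: the device logs $\langle \text{tcId}, \rho\rangle$ once per TC in which it is present. By the broadcast assumption, this happens for exactly the TCs in which $u$ can be sensed, so the device's sum equals $\sum_g \rho^{(g)}_{\mathsf{node}}$.

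\textbf{Main obstacle.} The delicate step is justifying that each hybrid step is a legitimate (user, TC)-neighbor pair with loss $\rho^{(g)}_{\mathsf{node}}$. The concern is releases that mix data from several TCs, such as internal tree nodes spanning multiple TCs and the shadow bridging release. Such a release depends on $u$'s data in several $e_g$ at once.

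I would resolve this by noting two facts. First, each hybrid changes only the $e_g$ portion, so a release spanning several TCs is charged once per TC whose data changes. Second, this per-TC charge is exactly the sensitivity-$\Delta_q$ contribution that \Thm\ref{thm:api2-btm-user-tc} counts, via the $(\log_2 N_q + 2)$ term. Together these show that the per-TC accounting is not an undercount.

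I would also check that the trusted-mode sensitivity $\Delta_q$ is per (user, TC) and not per user overall. That is what makes the per-TC decomposition valid, and it is why the total is a sum over TCs rather than one bound with an inflated sensitivity.
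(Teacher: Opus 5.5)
Your proposal is correct and is essentially the paper's own proof: the same hybrid chain $D=D_0\sim^{\mathsf{user}}_u D_1\sim^{\mathsf{user}}_u\cdots\sim^{\mathsf{user}}_u D_G=D'$ removes $u$ one TC at a time, \Thm\ref{thm:api2-btm-user-tc} is applied to each link, and the per-link losses are summed. The one difference is cosmetic. You combine the links by multiplying the pointwise inequalities, a group-privacy-style argument, whereas the paper cites basic composition (\Lem\ref{lem:api2-basic-composition}) for that step; your justification actually fits a chain of distinct neighboring pairs better.
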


Note that to get releases from different $\mathsf{maxEC}_{\mathsf{sys}}$, states will be completely reset independently and the composition happens post-facto, meaning we can just compose them through basic composition here.

\begin{theorem}[Utility of interval-sum queries]
\label{thm:api2-utility}
Let $\sigma^2 = 2\Delta^2/\varepsilon^2$ be the per-node noise variance and $V(N) = 2(\log_2 N - 1)$ for $N\geq 4$.
\begin{enumerate}
    \item For short intervals where $|I|\leq N/2$, the interval sum can be recovered from released tree-node values with noise variance at most $V(N)\cdot\sigma^2$.
    \item For long intervals where $|I| > N/2$, writing $|I| = kN/2 + R$ with $k\geq 1$ and $0\leq R < N/2$, the interval sum can be recovered from released values across containers with variance at most $(k + V(N))\cdot\sigma^2$.
\end{enumerate}
\end{theorem}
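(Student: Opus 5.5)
The approach is to reduce both parts to counting the released tree nodes whose sum equals the interval sum $\sum_{t\in I} x_t$. Two facts do most of the work. First, every released node carries independent noise of variance $\sigma^2 = 2\Delta^2/\varepsilon^2$ (Laplace with scale $\Delta/\varepsilon$), independent across nodes, trees and containers. Second, the estimate is exact up to noise. Together these make the variance of the estimate equal to (number of nodes used)$\cdot\sigma^2$.

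I use the container schedule from \S\ref{sec:api2-binary-tree}. Each container's tree $\mathcal{B}_N$ covers $N$ consecutive leaves. Containers start every $N/2$ leaves, because $\minEC = \maxECSys/2$ and shadows become active after accumulating half a lifetime. Consequently, every aligned block $[jN/2,(j+1)N/2)$ equals one of the two depth-1 subtrees of some container's tree, or the shadow bridging release. In either case its sum is available from a single released node.

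\textbf{Part 1 ($|I|\le N/2$).} I first show that $I$ lies entirely within the lifetime of one container. Suppose $I$ starts in aligned block $j$. The container started at $jN/2$ covers $[jN/2,(j+2)N/2)$, which has length $N$, so it contains $I$. Within that tree I apply the standard dyadic decomposition of a contiguous range. The decomposition splits at the lowest common ancestor of the endpoints into a suffix of its left child and a prefix of its right child. Because $|I|\le N/2 < N$, the root is never needed: either $I$ is a depth-1 subtree, costing one node, or the LCA lies at depth $\ge 1$. A suffix or prefix of a subtree with $2^h$ leaves decomposes into at most $h$ nodes, at most one per level. When the LCA is the root, each side lies in a subtree of $N/2$ leaves, which gives at most $2\log_2(N/2) = 2(\log_2 N-1) = V(N)$ nodes. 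A deeper LCA gives fewer. Independence of the noise then yields variance at most $V(N)\sigma^2$.

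\textbf{Part 2 ($|I| = kN/2+R$).} I split $I$ into three pieces: the aligned blocks $[jN/2,(j+1)N/2)$ that lie fully inside $I$, a prefix fragment $P$ before the first such block, and a suffix fragment $S$ after the last one.
\begin{itemize}
\item \emph{Full blocks.} These are at most $k$ in number, since they are disjoint blocks of length $N/2$ inside an interval of length $<(k+1)N/2$. Each costs one node (a depth-1 subtree or the bridging release).
\item \emph{End fragments.} $P$ is a suffix of one aligned block, and $S$ is a prefix of another. Each block is a subtree with $N/2$ leaves of some container's tree, so each fragment costs at most $\log_2 N - 1$ nodes by the same per-level argument. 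Together they cost at most $V(N)$.
\item \emph{Edge case.} If there are no full blocks, $P$ and $S$ fall in adjacent blocks, and these are covered by a single container as in Part 1. The same bound holds.
\end{itemize}
Summing, at most $k+V(N)$ independent nodes are used, so the variance is at most $(k+V(N))\sigma^2$.

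The main obstacle is the bookkeeping behind the claim that every aligned half-block sum is released exactly once by some active (non-shadow) container. This requires checking, from the definition of the shadow bridging release, which container publishes each half-block. It matters most at rotation boundaries, where a shadow does not release, and at the first container. Without it, a block might need two nodes, which would break the coefficient $k$. I would first verify that the extra $(\log_2 N+2)$-th release per container is precisely the node covering the half-block that the newly activated container inherits. If that fails, I would fall back to covering each block by its two quarter-subtrees and re-derive a weaker constant.
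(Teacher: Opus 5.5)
Your proof is correct and is essentially the paper's argument. Short intervals cost at most $V(N)$ nodes by splitting into a suffix and a prefix of adjacent dyadic subtrees with at most $N/2$ leaves; long intervals split into at most $k$ aligned half-blocks (one released depth-1 node each) plus a proper suffix and prefix costing at most $\log_2 N-1$ nodes apiece.

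The bookkeeping you flag as the main obstacle is immediate from the paper's model. There, primary containers tile time into consecutive blocks of $N$ AWs and each releases all $2N-1$ nodes of its own tree, so every aligned half-block is a depth-1 node of some primary tree, and the shadow bridging release (which covers the same leaves as a right depth-1 node) is not needed. Likewise, when your Part~1 covering container starts at an odd multiple of $N/2$, it is a shadow with no tree; but the nodes your decomposition uses lie in half-blocks of two consecutive primaries, which is exactly the paper's boundary-straddling case.
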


\subsection{Data Model}
\label{appendix:api2:data-model}

API 2 organizes time at each locality into three granularities: \emph{tracking contexts} (TCs), \emph{aggregation windows} (AWs), and $\mathsf{maxEC}_{\mathsf{sys}}$ containers. TCs are the smallest unit of time, within which the tracker assigns stable IDs. AWs group one or more TCs and are the unit at which the system produces aggregate outputs. The DP Continual Release service operates over a binary tree whose leaves are AWs, inside an overlapping $\mathsf{maxEC}_{\mathsf{sys}}$ container.

\begin{definition}[Localities and time]
\label{def:api2-localities-time}
Let $\cL$ denote a set of localities. Let $\cT$ be a totally ordered time domain.
\end{definition}

\begin{definition}[Tracking contexts (TC)]
\label{def:api2-tracking-context}
Fix a constant $\tau_{\mathsf{TC}}>0$ (the tracking-context duration).
For each locality $\ell\in \cL$, let $\cE_\ell$ denote a set of half-open time intervals $e=[t_{\mathsf{start}}(e),t_{\mathsf{end}}(e))\subseteq \cT$, each of duration $\tau_{\mathsf{TC}}$, such that:
\begin{enumerate}
    \item the intervals in $\cE_\ell$ are pairwise disjoint, and
    \item $\cE_\ell$ is totally ordered by start time.
\end{enumerate}
Let $\cE := \bigcup_{\ell\in \cL}\cE_\ell$. For $e\in \cE$, write $\mathsf{loc}(e)\in \cL$ for its associated locality.
\end{definition}

The "Object Detection and Tracking" service at locality $\ell$ runs inside a tumbling ephemeral container ($\mathsf{minEC}=0$, $\mathsf{maxEC}=\mathsf{TC}$) that partitions time into consecutive, non-overlapping tracking contexts $\mathsf{tc}_1, \mathsf{tc}_2, \ldots$. Within each TC, the tracker assigns stable pseudonymous IDs to detected individuals. At TC boundaries, all IDs are reset and no tracking state persists.

\begin{definition}[Aggregation windows (AW)]
\label{def:api2-aggregation-windows}
An aggregation window $\mathsf{AW}_Q$ is a contiguous block of one or more consecutive TCs ($\mathsf{AW}_Q$ is a multiple of $\mathsf{TC}$).

The query API specifies $\mathsf{AW}_Q$ as a per-query parameter, subject to the constraint that $N := \mathsf{maxEC}_{\mathsf{sys}}/\mathsf{AW}_Q$ is a power of $2$ (determining the binary tree height $h = \log_2 N$), and that the finest aggregation window equals a single TC.
\end{definition}

Each AW produces one aggregate output that is fed into the continual-release mechanism. The database records are mutually independent across AWs, since each AW's output depends only on the sensing data within that AW.

The privacy analysis relies on the following assumptions.
\begin{itemize}
    \item \textbf{Tracking fidelity.} Within each TC, the tracker assigns each individual a single consistent ID, possibly with a bounded error factor $\rho_{\mathsf{track}}\geq 1$ on the number of IDs per user (which scales the sensitivity). In practice, $\rho_{\mathsf{track}}$ is derived from empirical error rates of the tracker (e.g., BoT-SoRT via YOLO).
    \item \textbf{Broadcast reception.} If a user is within the sensing range of a node (i.e., the user's data may contribute to releases), the user's device receives the node's privacy-loss broadcast. This ensures the device can account for all costs incurred on its behalf.
    \item \textbf{Container ephemerality.} When a $\mathsf{maxEC}_{\mathsf{sys}}$ container is destroyed, all non-DP state (raw data, partial sums, shadow running sums) is irrecoverably deleted. This ensures that releases from distinct $\mathsf{maxEC}_{\mathsf{sys}}$ windows are computed on independent state, which is required for composition across windows (\Thm\ref{thm:api2-btm-user-maxec}).
    \item \textbf{Fixed query workload within each TC.} The set of registered queries, their sensitivities, and their aggregation window parameters are fixed within each TC. This ensures the broadcast cost $\rho_{\mathsf{node}}$ (Eq.~\ref{eq:node-filter}) correctly accounts for all current and future releases involving data from that TC.
\end{itemize}

\begin{definition}[$\mathsf{maxEC}_{\mathsf{sys}}$ container and shadow]
\label{def:api2-maxec-container}
Fix a positive integer $N = 2^h$ ($h\geq 1$). The DP Continual Release service at locality $\ell$ runs inside an overlapping ephemeral container with $\mathsf{maxEC}=\mathsf{maxEC}_{\mathsf{sys}} := N\cdot\mathsf{AW}_Q$ and $\mathsf{minEC}=\mathsf{maxEC}_{\mathsf{sys}}/2$. Primary containers $a^{(0)}_1, a^{(0)}_2, \ldots$ tile time into consecutive blocks of $N$ AWs. A shadow container is initialized $\mathsf{maxEC}_{\mathsf{sys}}/2$ before each primary-container reset. Each primary container runs a binary tree mechanism with $N$ leaves (\S\ref{sec:api2-binary-tree}). The shadow container adds one bridging release at each primary-container boundary.
\end{definition}

The overlapping structure ensures smooth container rotation with $\mathsf{minEC}=\mathsf{maxEC}_{\mathsf{sys}}/2$. The shadow bridging release provides the noisy sum of the second half of each primary container ($N/2$ AWs), connecting adjacent containers for interval-sum recovery. Each AW participates in at most $\log_2 N + 1$ tree-node releases (one per level of the primary tree) plus at most $1$ shadow bridging release, for at most $\log_2 N + 2$ releases total.

The following parameters are treated as public: the locality set $\cL$, the TC schedules $\{\cE_\ell\}$, $\mathsf{maxEC}_{\mathsf{sys}}$, and the per-report sensitivity bounds $\{S_r\}$ (or query templates $\{Q_r\}$ in the trusted class). All of these are deployment-time commitments made before any data is collected. Treating mechanism structure as public is standard in DP \cite{DMNS06}.

\begin{definition}[Per-context scalar output]
\label{def:api2-scalar-output}
Let $\cY := \R_{\geq 0}$ denote the set of per-context outputs. Each aggregation window $e$ produces a single non-negative scalar $y_e \in \cY$. Let $\mathbf{0} := 0\in\cY$.
\end{definition}

For untrusted applications, $y_e$ is an opaque scalar emitted by arbitrary application code, capped at its declared sensitivity. For trusted applications, $y_e$ is the output of a fixed bounded query over trusted structured detections (see \S\ref{appendix:api2:trusted-model}).

\begin{definition}[Database]
\label{def:api2-db}
A database $D$ is a set of locality-context records $x=(\ell,e,y)\in \cX := \cL\times \cE \times \cY$ such that $\ell=\mathsf{loc}(e)$, and if $(\ell,e,y)\in D$ and $(\ell,e,y')\in D$ then $y=y'$. Let $\mathcal{D}$ denote the set of all such databases. For $\ell\in \cL$ and $e\in \cE_\ell$, define $D_\ell^e \in \cY\cup\{\bot\}$ as the unique $y$ with $(\ell,e,y)\in D$ if it exists, and $\bot$ otherwise.
\end{definition}

\subsubsection{Standard DP tools}
\label{appendix:api1:standard-dp-tools}

The following definitions and lemmas are standard and are used by both the trusted and untrusted privacy analyses below.

\begin{definition}[($\epsilon,\delta$)-DP]
\label{def:api2-dp}
A randomized mechanism $\mathcal{M}:\mathcal{D}\to \mathcal{O}$ is $(\epsilon,\delta)$-differentially private with respect to $\sim$ if for all $D\sim D'$ and measurable $S\subseteq\mathcal{O}$:
\begin{align}
\Pr[\mathcal{M}(D)\in S] \leq e^{\epsilon}\Pr[\mathcal{M}(D')\in S] + \delta.
\end{align}
\end{definition}

\begin{lemma}[Basic composition for long-term releases]
\label{lem:api2-basic-composition}
If $\mathcal{M}_t$ is $(\epsilon_t,\delta_t)$-DP for each $t\in[T]$, then the joint mechanism $(\mathcal{M}_1(D),\ldots,\mathcal{M}_T(D))$ is $(\sum_t\epsilon_t,\sum_t\delta_t)$-DP.
\end{lemma}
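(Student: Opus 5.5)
The natural route is the standard privacy-loss argument for adaptive composition, proved for the independent case and then extended by induction on $T$.

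\textbf{Non-adaptive case.} Here the mechanisms use independent randomness, so for any $D\sim D'$ the joint output distribution is the product of the marginals.
\begin{itemize}
\item For $T=2$, fix a measurable $S\subseteq\mathcal{O}_1\times\mathcal{O}_2$ and write $S_{o_1}:=\{o_2:(o_1,o_2)\in S\}$ for its slices.
\item Then
\[
\Pr[(\mathcal{M}_1(D),\mathcal{M}_2(D))\in S]=\int \Pr[\mathcal{M}_2(D)\in S_{o_1}]\,d\mu_1(o_1),
\]
where $\mu_1$ is the law of $\mathcal{M}_1(D)$.
\item Apply the guarantee for $\mathcal{M}_2$ pointwise inside the integral, capping each term at $1$. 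This bounds the inner probability by $\min\{1,e^{\epsilon_2}\Pr[\mathcal{M}_2(D')\in S_{o_1}]\}+\delta_2$.
\item The remaining integrand $f(o_1)=\min\{1,e^{\epsilon_2}\Pr[\mathcal{M}_2(D')\in S_{o_1}]\}$ takes values in $[0,1]$. Change measure from $\mu_1$ to $\mu_1'$, the law of $\mathcal{M}_1(D')$, using the functional form of $(\epsilon_1,\delta_1)$-DP:
\[
\int f\,d\mu_1\le e^{\epsilon_1}\int f\,d\mu_1'+\delta_1 .
\]
\item The functional form follows from the set form through the layer-cake identity $\int f\,d\mu=\int_0^1\mu(f>t)\,dt$.
\item Combining the two bounds gives $e^{\epsilon_1+\epsilon_2}\Pr[\text{joint}(D')\in S]+\delta_1+\delta_2$.
\end{itemize}
Induction on $T$ then gives the general statement.

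\textbf{Adaptive case.} In this paper the mechanisms are the per-node Laplace releases, and each $\mathcal{M}_t$ may be chosen after seeing earlier outputs (e.g.\ tree releases and the shadow bridging release). The same argument goes through: condition on the prefix $o_{<t}$, and require each conditional mechanism $\mathcal{M}_t(\cdot\mid o_{<t})$ to be $(\epsilon_t,\delta_t)$-DP uniformly in the prefix. This is the standard sequential-composition setting of \cite{DMNS06} and Dwork--Roth.

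\textbf{Main obstacle.} The delicate step is the $\delta$ bookkeeping. Applying the definition naively to the slice probabilities pushes $\delta_2$ through the $e^{\epsilon_1}$ factor, which would give $\delta_1+e^{\epsilon_1}\delta_2$. The fix is to pull the additive $\delta_2$ out before the change of measure: integrate it against the probability measure $\mu_1$, where it contributes exactly $\delta_2$. Only the capped term $\min\{1,\cdot\}$ goes through the functional form of DP. This yields exactly $\sum_t\delta_t$.

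In the pure case $\delta_t=0$ that the paper actually uses, none of this arises. It suffices to bound the density ratio pointwise, giving a privacy-loss random variable of at most $\sum_t\epsilon_t$ almost surely, and then integrate over $S$.
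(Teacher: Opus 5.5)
The paper gives no proof of this lemma. It lists it among the ``standard DP tools'' and only ever applies it to releases with independent Laplace noise and $\delta_t=0$ (Lemma~\ref{lem:api2-per-query-dp}, Theorem~\ref{thm:api2-btm-user-tc}).

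Your argument is a correct, self-contained proof of that standard fact. You slice $S$ by Fubini, bound the inner probability pointwise, cap it at $1$, and pull out $\delta_2$ before changing measure. Changing measure through the layer-cake form of the outer guarantee is exactly what yields $\delta_1+\delta_2$ rather than $\delta_1+e^{\epsilon_1}\delta_2$. Induction then finishes.

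Relative to the paper's bare invocation, your route buys two things.
\begin{enumerate}
\item It makes explicit the independence (or uniform conditional DP) hypothesis that the statement as written omits, and this hypothesis is genuinely necessary. Take $Z\sim\mathrm{Lap}(1/\epsilon)$ and a sensitivity-$1$ function $f$. Then $\mathcal{M}_1(D)=Z$ is $(0,0)$-DP and $\mathcal{M}_2(D)=f(D)+Z$ is $(\epsilon,0)$-DP, yet the pair reveals $f(D)$ exactly. The paper relies on independence only implicitly (``the noise draws are mutually independent'').
\item You fix one pair $D\sim D'$ and use only the one-directional inequality. So your proof also covers asymmetric neighboring relations, such as the zeroing-out relation $\sim_x$ of the untrusted mode.
\end{enumerate}

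One small correction: the tree-node and shadow bridging releases are not chosen adaptively. They are fixed partial sums with independent noise draws. Your non-adaptive case therefore already covers every use in the paper. The adaptive version is extra generality, relevant only if, say, queries were registered in response to earlier releases.
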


\begin{lemma}[Group privacy]
\label{lem:api2-group-privacy}
If $\mathcal{M}$ is $(\varepsilon,\delta)$-DP with respect to a neighboring relation $\sim$, and $D,D'$ can be connected by a chain of at most $g$ steps under $\sim$ (i.e., there exist $D = D_0 \sim D_1 \sim \cdots \sim D_g = D'$), then $\Pr[\mathcal{M}(D)\in S] \leq e^{g\varepsilon}\Pr[\mathcal{M}(D')\in S] + \delta_g$, where $\delta_g=\sum_{j=0}^{g-1}e^{j\varepsilon}\delta$. In particular, if $\delta=0$ then $\mathcal{M}$ satisfies $(g\varepsilon,0)$-DP between $D$ and $D'$.
\end{lemma}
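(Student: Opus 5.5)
We prove the chain bound by induction on $g$, applying the single-step DP inequality once per link of the chain $D = D_0 \sim D_1 \sim \cdots \sim D_g = D'$. The induction hypothesis is that for every $j \le g$ and every measurable $S \subseteq \mathcal{O}$,
\begin{align}
\Pr[\mathcal{M}(D_0)\in S] \leq e^{j\varepsilon}\Pr[\mathcal{M}(D_j)\in S] + \delta_j, \qquad \delta_j := \sum_{i=0}^{j-1} e^{i\varepsilon}\delta .
\end{align}
The base case $j=0$ is trivial, since $\delta_0 = 0$ and both sides coincide.

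For the inductive step, assume the bound holds for $j$. Because $D_j \sim D_{j+1}$, Definition~\ref{def:api2-dp} gives
\begin{align}
\Pr[\mathcal{M}(D_j)\in S] \le e^{\varepsilon}\Pr[\mathcal{M}(D_{j+1})\in S] + \delta .
\end{align}
Substituting this into the hypothesis yields
\begin{align}
\Pr[\mathcal{M}(D_0)\in S] \le e^{(j+1)\varepsilon}\Pr[\mathcal{M}(D_{j+1})\in S] + e^{j\varepsilon}\delta + \delta_j .
\end{align}
The additive term is $e^{j\varepsilon}\delta + \sum_{i=0}^{j-1} e^{i\varepsilon}\delta = \sum_{i=0}^{j} e^{i\varepsilon}\delta = \delta_{j+1}$, so the bound holds for $j+1$. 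Setting $j = g$ gives the stated inequality with $\delta_g$.

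The one point that needs care is the direction of application. The DP inequality must be applied to the pair $(D_j, D_{j+1})$ in that order, so that the new multiplicative factor $e^{\varepsilon}$ multiplies only the previous probability term and the incoming $\delta$ picks up the accumulated factor $e^{j\varepsilon}$. This direction is always available because the definition quantifies over all $D \sim D'$ and the neighboring relations used in the paper are symmetric. If $\sim$ were not symmetric, the chain would simply be read in the direction the relation permits.

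The case $\delta = 0$ follows immediately: then $\delta_g = 0$, and the bound reduces to $\Pr[\mathcal{M}(D)\in S] \le e^{g\varepsilon}\Pr[\mathcal{M}(D')\in S]$ for all $S$. Applying the same argument to the reversed chain $D' = D_g \sim \cdots \sim D_0 = D$ (valid by symmetry) gives the reverse inequality, so $\mathcal{M}$ is $(g\varepsilon, 0)$-DP between $D$ and $D'$. None of the steps is a real obstacle; the only thing to track is the geometric bookkeeping of the $\delta$ terms.
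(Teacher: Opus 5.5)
Your induction is correct. It is the standard telescoping argument for this lemma, which the paper lists among its standard DP tools without giving a proof, so there is nothing further to compare.

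Two small remarks:
\begin{itemize}
\item A chain of $g'<g$ steps is covered by monotonicity: $e^{g'\varepsilon}\le e^{g\varepsilon}$ and $\delta_{g'}\le\delta_g$.
\item Your symmetry discussion is unnecessary and partly inaccurate. Each link is already oriented as $D_j\sim D_{j+1}$. Definition~\ref{def:api2-dp} is one-sided, so the forward bound with $\delta_g=0$ already is the claimed $(g\varepsilon,0)$ guarantee. Moreover, the paper's zeroing-out relation $\sim$ and its removal relation $\sim^{\mathsf{user}}$ are not symmetric; only $\sim^{\mathsf{id}}$ is. Your reverse-chain step would therefore be unavailable for those two relations, but it is not needed.
\end{itemize}
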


\subsection{Privacy Analysis for Trusted Mode}
\label{appendix:api2:trusted-model}

In the trusted class, a pre-installed trusted object detector processes raw sensor data and emits a structured detection record for each context before any application code runs. Applications are restricted to fixed, bounded query templates (e.g. counts and sums) over those detections. In the future, there may be opportunities to enrich the queries that can be supported via prior works on Privacy-Preserving SQL queries such as those studied in \cite{cangialosi2022privid, grislain2024qrlew, zhang2023differentially}. Sensitivity is derived from the query type, so noise is calibrated to the actual per-user contribution rather than a worst-case declared bound.

Each trusted release corresponds to exactly one AW and there is no multi-context accumulation. The trusted class can release at this fine granularity because the per-user sensitivity $\mu_Q$ is typically small (e.g., $\mu_Q=1$ for count queries). The detector assigns a pseudonymous ID $\mathsf{id}_u$ to each detected user within each TC (\Def\ref{def:api2-tracking-context}), so each user's contribution within a TC is bounded. Each user in the detection record $\cD_e$ (defined formally in \Def\ref{def:api2-trusted-query} below) carries an attribute value $v\in[0,v_{\max}]$. Removing or adding one user changes a $\operatorname{COUNT}$ query by $1$ and a $\operatorname{SUM}$ query by at most $v_{\max}$, giving per-user sensitivity $\mu_Q$.

\subsubsection{Trusted data model and sensitivity}
\label{sec:api2-trusted-privacy}

Note that for both trusted and untrusted classes, context isolation holds based on API~1. That is, application code has no persistent state across TCs. This system property was justified separately by the design of API 1, so we only discuss the aggregation granularity here.

\begin{definition}[User]
\label{def:api2-user}
A \emph{user} is a detected presence identified by a pseudonymous ID $\mathsf{id}_u$ assigned by the trusted object detector within a single TC (\Def\ref{def:api2-tracking-context}). IDs are reset at TC boundaries. $\cD_e = (F_1, F_2, \ldots, F_m)$ is the ordered sequence of frame detection sets produced during AW $e = [t_{\mathsf{start}}(e), t_{\mathsf{end}}(e))$, where each $F_i$ contains the entities detected at that instant.
\end{definition}

A user who spans two TCs receives distinct IDs in each and is treated as two separate presences for privacy purposes. Under the "tracking fidelity" assumption, a user has exactly one ID per TC, so within an AW spanning $k_e$ TCs, a user has at most $k_e$ IDs.

\begin{definition}[Trusted detection record and query template]
\label{def:api2-trusted-query}
A detection record for AW $e$ is the frame detection sequence $\cD_e = (F_1, F_2, \ldots, F_m)$ from \Def\ref{def:api2-user}, where each $F_i = \{(\mathsf{id},\tau,v)\}$ records the pseudonymous IDs, object types $\tau$, and scalar attributes $v\in[0,v_{\max}]$ ($v_{\max}>0$ declared per attribute) of entities detected at sensor frame $i$. A trusted query template $Q_r$ is an aggregation function over $\cD_e$. The choice of frame subset and aggregation method determines the per-ID sensitivity $\mu_Q$ (\Def\ref{def:api2-trusted-sensitivity}).
\end{definition}

Representative query templates include $\operatorname{COUNT}(\tau)$ (number of type-$\tau$ entities in a designated frame subset) and $\operatorname{SUM}(\tau,v)$ (sum of attribute $v$ over type-$\tau$ entities). Averages are represented as $\operatorname{SUM}/\operatorname{COUNT}$ and treated as a scaled sum for DP purposes.

\begin{definition}[Trusted query sensitivity]
\label{def:api2-trusted-sensitivity}
The per-ID sensitivity $\mu_Q$ of a trusted query template $Q_r$ is the maximum change in the query output from adding or removing all occurrences of a single tracked ID across the frame sequence:
\begin{align}
\mu_Q := \max_{\cD_e,\mathsf{id}} \Bigl|Q_r(\cD_e) - Q_r(\cD_e^{\mathsf{id}\to\emptyset})\Bigr|,
\end{align}
where $\cD_e^{\mathsf{id}\to\emptyset}$ is $\cD_e$ with all entries for $\mathsf{id}$ removed from every frame.
\end{definition}
For example, $\mu_Q = 1$ for last-frame $\operatorname{COUNT}$ queries, $\mu_Q = v_{\max}$ for last-frame $\operatorname{SUM}$ queries, and $\mu_Q$ scales with frame count for cumulative aggregations (e.g., $\mu_Q = k$ for a $k$-frame cumulative count). The value of $\mu_Q$ is declared at query registration.

\paragraph{Neighboring relations.} The trusted class uses two neighboring relations that serve different roles. $D\sim^{\mathsf{id}} D'$ holds iff $\cD_e$ and $\cD'_e$ differ by exactly one tracked ID for some AW $e$, with all other per-context records unchanged. Noise is calibrated to this ID-level relation, giving per-ID sensitivity $\mu_{Q_r}$ under $\sim^{\mathsf{id}}$ (formalized in \Def\ref{def:api2-trusted-report}).

$D\sim^{\mathsf{user}} D'$ holds iff $D'$ is obtained from $D$ by removing all IDs attributed to one user $u$ within a single AW. This is the semantically meaningful guarantee on the node. Under the "tracking fidelity" assumption with $\rho_{\mathsf{track}}=1$, a user receives exactly one ID per TC. If AW $e$ spans $k_e$ TCs, user $u$ has at most $k_e$ distinct IDs within $e$ (one per TC). Removing $u$ from $e$ under $\sim^{\mathsf{user}}$ amounts to removing $k_e$ IDs under $\sim^{\mathsf{id}}$. When $\mathsf{AW}_Q = \mathsf{TC}$ (the finest granularity), $k_e = 1$ and the two neighboring relations coincide.

This section presents an idealized version of the trusted-class analysis. The evaluation in the main text examines the effect of tracking errors empirically (e.g., when multiple IDs are assigned to a single continuous presence of a user within an AW).

\begin{definition}[Trusted API 2 report ($k_r=1$)]
\label{def:api2-trusted-report}
A trusted report $r$ is associated with a locality $\ell_r$, a single AW $e_r$ (so $k_r=1$), and a trusted query template $Q_r$. The trusted report function is $\rho_r^{\mathsf{tr}}(D) := Q_r(\cD_{e_r})$. Under $\sim^{\mathsf{id}}$, $\Delta(\rho_r^{\mathsf{tr}})=\mu_{Q_r}$ by \Def\ref{def:api2-trusted-sensitivity}.
\end{definition}

\begin{theorem}[DP via Laplace noise (trusted)]
\label{thm:api2-dp-laplace-trusted}
The mechanism
\begin{align}
M^{\mathsf{tr}}(D) := \rho_r^{\mathsf{tr}}(D) + \eta, \quad \eta\sim\mathrm{Lap}(\mu_{Q_r}/\varepsilon),
\end{align}
is $(\varepsilon,0)$-DP under $\sim^{\mathsf{id}}$. Under $\sim^{\mathsf{user}}$, if AW $e$ spans $k_e$ TCs, a user $u$ present in all $k_e$ TCs incurs privacy cost at most $k_e\varepsilon$ for the single-AW release.
\end{theorem}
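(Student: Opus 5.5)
The proof splits into two parts: a standard Laplace-mechanism argument under $\sim^{\mathsf{id}}$, and a reduction from $\sim^{\mathsf{user}}$ to $\sim^{\mathsf{id}}$ via group privacy (\Lem\ref{lem:api2-group-privacy}).

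\textbf{The $\sim^{\mathsf{id}}$ guarantee.} Fix neighbors $D\sim^{\mathsf{id}} D'$. By definition they differ only in the detection record of one AW $e$, and there only by the presence or absence of all occurrences of one tracked ID.

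\emph{Case $e\neq e_r$.} The report function $\rho_r^{\mathsf{tr}}$ depends only on $\cD_{e_r}$, so $\rho_r^{\mathsf{tr}}(D)=\rho_r^{\mathsf{tr}}(D')$. The two output distributions are then identical and the DP inequality is trivial.

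\emph{Case $e=e_r$.} By \Def\ref{def:api2-trusted-sensitivity} and \Def\ref{def:api2-trusted-report},
\[
|\rho_r^{\mathsf{tr}}(D)-\rho_r^{\mathsf{tr}}(D')|\le\mu_{Q_r}.
\]
Write $b=\mu_{Q_r}/\varepsilon$. The ratio of the Laplace densities at any output $z$ is
\[
\exp\!\bigl((|z-\rho_r^{\mathsf{tr}}(D')|-|z-\rho_r^{\mathsf{tr}}(D)|)/b\bigr),
\]
and by the triangle inequality this is at most $e^{\varepsilon}$. Integrating over any measurable $S$ yields $(\varepsilon,0)$-DP. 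This is just the standard Laplace argument, and I would cite it rather than redo it.

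\textbf{The $\sim^{\mathsf{user}}$ statement.} Take $D\sim^{\mathsf{user}} D'$, where $D'$ removes all IDs of user $u$ within a single AW $e$. By the tracking-fidelity assumption with $\rho_{\mathsf{track}}=1$, and because IDs reset at each TC boundary (\Def\ref{def:api2-user}), $u$ carries at most one ID per TC. Since $e$ spans $k_e$ TCs, $u$ has at most $k_e$ distinct IDs in $e$. Removing them one at a time gives a chain
\[
D=D_0\sim^{\mathsf{id}}D_1\sim^{\mathsf{id}}\cdots\sim^{\mathsf{id}}D_g=D',
\qquad g\le k_e.
\]
Each step removes all occurrences of one ID from $\cD_e$ and leaves the other records unchanged, so each step is a valid $\sim^{\mathsf{id}}$ neighbor. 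Applying \Lem\ref{lem:api2-group-privacy} with $\delta=0$ gives a bound of $e^{g\varepsilon}\le e^{k_e\varepsilon}$ in both directions, which is the claimed cost $k_e\varepsilon$.

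\textbf{Main obstacle.} The delicate step is justifying that intermediate databases in the chain are legitimate members of $\mathcal{D}$, and that deleting one ID does not change how the other IDs are attributed. I would handle both by treating $Q_r$ as a function of the frame-detection multiset in which IDs are merely labels, so per-ID removal is well defined.

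If $\rho_{\mathsf{track}}>1$, the same argument goes through with chain length at most $\rho_{\mathsf{track}}k_e$ and cost scaled accordingly. I would note this only as a remark, since the theorem is stated in the idealized setting.
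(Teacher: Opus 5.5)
Your proposal is correct and follows the paper's proof essentially step for step. You apply the Laplace mechanism with per-ID sensitivity $\mu_{Q_r}$ under $\sim^{\mathsf{id}}$, then use a chain of at most $k_e$ single-ID removals (one per TC by tracking fidelity) together with group privacy at $\delta=0$; your extra handling of the case $e\neq e_r$ and of the intermediate databases in the chain only spells out what the paper leaves implicit.
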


\begin{proof}
$\Delta(\rho_r^{\mathsf{tr}})=\mu_{Q_r}$ under $\sim^{\mathsf{id}}$ by \Def\ref{def:api2-trusted-sensitivity}. The Laplace mechanism with scale $\mu_{Q_r}/\varepsilon$ gives $(\varepsilon,0)$-DP under $\sim^{\mathsf{id}}$.

For $\sim^{\mathsf{user}}$: let AW $e$ span $k_e$ TCs. By tracking fidelity, $u$ has at most $k_e$ IDs in $e$ (one per TC). Removing $u$ from $e$ under $\sim^{\mathsf{user}}$ is a $k_e$-step chain under $\sim^{\mathsf{id}}$:
\begin{align}
D = D_0 \sim^{\mathsf{id}} D_1 \sim^{\mathsf{id}} \cdots \sim^{\mathsf{id}} D_{k_e} = D'.
\end{align}
By \Lem\ref{lem:api2-group-privacy} with $g = k_e$ and $\delta = 0$, the cost is $k_e\varepsilon$. When $\mathsf{AW}_Q = \mathsf{TC}$, $k_e = 1$.
\end{proof}

In the trusted class, the detector identifies which users contributed to each AW's output, so privacy loss can be attributed per user. In the untrusted class, all users present in the AW must be charged pessimistically since the application output may depend on any of them. The total per-user privacy cost across the binary tree is analyzed in \Thm\ref{thm:api2-btm-user-tc} and \Thm\ref{thm:api2-btm-user-maxec}.

\subsubsection{Accounting protocol and algorithms}
\label{appendix:api2:trusted-accounting}

The node continuously broadcasts the current TC identifier and the per-TC privacy cost $\rho_{\mathsf{node}}$ (Eq.~\ref{eq:node-filter}), which accounts for all registered queries. Note that the broadcast is not per-ID, since every device in range receives the same $(\ell, e, \rho_{\mathsf{node}})$ tuple and accounts for the same loss (again, the first two elements merely identify the TC, and the third one is for privacy accounting). Per this design, the node doesn't need to know which device corresponds to which tracked ID.

\begin{definition}[Privacy broadcast]
\label{def:api2-privacy-broadcast}
Throughout each TC $e$, the \sysname node at locality $\ell$ continuously broadcasts $(\ell, e, \rho_{\mathsf{node}})$. Every device in range receives the same broadcast. A device that is present at any point during $e$ records $\rho_{\mathsf{node}}$ once (deduplicating by the TC identifier $e$). The cost $\rho_{\mathsf{node}}$ pre-charges for all tree-node releases and shadow bridging releases across all queries that will ever involve TC $e$'s data.
\end{definition}

Alg.~\ref{alg:api2-accounting} shows the release and accounting procedure for the trusted class. Alg.~\ref{alg:api2-trusted-detection} shows how the trusted detector builds $\cD_e$ (\Def\ref{def:api2-trusted-query}) and produces DP releases. \textsc{ProcessFrame} runs at sensor rate, maintaining the set of active IDs for query evaluation. \textsc{AWRelease} is called at $t_{\mathsf{end}}(e)$, evaluating $Q_r$ on the full frame sequence $\cD_e$.

\begin{algorithm}[t]
\caption{Release and accounting at locality $\ell$ (trusted, binary tree)}
\label{alg:api2-accounting}
\algrenewcommand\algorithmicfunction{\textbf{def}}
\begin{algorithmic}[1]
\State \graycomment{\sysname{} node at locality $\ell$}
\State $\varepsilon > 0$ \graycomment{per-tree-node privacy cost}
\State $N > 0$ \graycomment{AWs per container (\Def\ref{def:api2-maxec-container})}
\State $\mathsf{Tree} \gets \textsc{NewBinaryTree}(N)$ \graycomment{$\cB_N$ as in \S\ref{sec:api2-binary-tree}}
\State $\mathsf{ShadowSum} \gets 0$ \graycomment{bridging sum for shadow container}
\State $\mathsf{leafIdx} \gets 0$ \graycomment{leaf counter within current container}
\Statex
\Function{OnAW}{$e, Q_r, \mu_{Q_r}$}
  \State $y_e \gets Q_r(\cD_e)$ \graycomment{query over detection record (\Def\ref{def:api2-trusted-query})}
  \State $\mathsf{Tree}.\textsc{AddLeaf}(y_e)$ \graycomment{insert $y_e$ as next leaf of $\cB_N$}
  \State $\mathsf{leafIdx} \gets \mathsf{leafIdx} + 1$
  \If{$\mathsf{leafIdx} > N/2$} \graycomment{second half of container}
    \State $\mathsf{ShadowSum} \gets \mathsf{ShadowSum} + y_e$
  \EndIf
  \ForAll{$v \in \mathsf{Tree}$ whose subtree rooted at $v$ just became complete} \graycomment{release per Eq.~\ref{eq:tree-node-release}}
    \State $\eta_v \sim \mathrm{Lap}(\mu_{Q_r}/\varepsilon)$
    \State \textsc{CloudRelease}($\ell, v, \mathsf{sum}(v) + \eta_v$)
  \EndFor
  \State $\mathsf{Broadcast}(\ell, e, (\log_2 N + 2)\varepsilon)$ \graycomment{\Def\ref{def:api2-privacy-broadcast}}
\EndFunction
\Statex
\Function{OnContainerEnd}{} \graycomment{shadow bridging release}
  \State $\eta \sim \mathrm{Lap}(\mu_{Q_r}/\varepsilon)$
  \State \textsc{CloudRelease}($\ell, \mathsf{shadow}, \mathsf{ShadowSum} + \eta$)
  \State \textsc{Teardown}($\mathsf{Tree}$) \graycomment{destroy all raw state ("container ephemerality")}
  \State $\mathsf{Tree} \gets \textsc{NewBinaryTree}(N)$
  \State $\mathsf{ShadowSum} \gets 0$
  \State $\mathsf{leafIdx} \gets 0$
\EndFunction
\Statex
\State \graycomment{User's device}
\State $\varepsilon_{\mathsf{acc}} \gets 0$ \graycomment{accumulated privacy loss}
\Function{Receive}{$\ell, e, \rho$}
  \State $\varepsilon_{\mathsf{acc}} \gets \varepsilon_{\mathsf{acc}} + \rho$
\EndFunction
\end{algorithmic}
\end{algorithm}

\begin{algorithm}[t]
\caption{Detection and release by \sysname{} locality $\ell$ (trusted)}
\label{alg:api2-trusted-detection}
\algrenewcommand\algorithmicfunction{\textbf{def}}
\begin{algorithmic}[1]
\State \graycomment{\sysname{} node at locality $\ell$}
\State $\cD_e \gets [\ ]$ \graycomment{frame sequence for current AW $e$}
\State $\mathsf{ActiveIDs} \subseteq \mathsf{Id}$, init $\emptyset$
\State $\mathsf{Tracker}$ \graycomment{multi-object tracker}
\Statex
\Function{ProcessFrame}{$\mathsf{frame}_i$}
  \State $\mathsf{rawDets} \gets \textsc{Detector.Run}(\mathsf{frame}_i)$
  \State $F_i \gets \textsc{Tracker.Assign}(\mathsf{rawDets})$
  \State $\mathsf{ActiveIDs} \gets \mathsf{ActiveIDs} \cup F_i$
  \State $\mathsf{ActiveIDs} \gets \mathsf{ActiveIDs} \setminus \{\mathsf{id} : \text{absent} \geq \tau_{\mathsf{exit}}\}$
  \State $\cD_e.\textsc{Append}(F_i)$
\EndFunction
\Statex
\Function{AWRelease}{$e, Q_r, \varepsilon$}
  \State $\mathsf{result} \gets Q_r(\cD_e)$
  \State $\eta \sim \mathrm{Lap}(\mu_{Q_r} / \varepsilon)$
  \State \textsc{CloudRelease}($\ell, e, \mathsf{result} + \eta$)
  \State $\mathsf{Broadcast}(\ell, e, (\log_2 N + 2)\varepsilon)$
  \State $\cD_e \gets [\ ]$
\EndFunction
\end{algorithmic}
\end{algorithm}

A user's device accumulates $\varepsilon_{\mathsf{acc}} = G \cdot \rho$ where $G$ is the number of distinct AWs in which it received the broadcast (the "broadcast reception" assumption ensures reception whenever the user is in sensing range). Every user present during an AW accounts for the same $\rho$, regardless of whether their data actually affected the release. The node-side mechanism does not adapt to individual budgets.

\subsection{Privacy Analysis for Untrusted Mode}
\label{appendix:api2:untrusted-model}

\subsubsection{Query model}
\label{sec:api2-untrusted-model}
In the untrusted mode, the application code that produces events is treated as arbitrary, possibly adversarial. The application receives raw sensing data for each context and emits a single scalar $y_e\in[0,S]$ at the end of each AW. The system does not inspect intermediate computations or per-user outputs. The application declares a sensitivity bound $S > 0$, which the runtime enforces as a hard cap on each per-context output and uses as the noise scale $S/\varepsilon$ (see \Thm\ref{thm:api2-dp-laplace}).

\begin{definition}[Declared sensitivity bound and output cap]
\label{def:api2-declared-sensitivity}
An untrusted application declares a sensitivity bound $S>0$. The per-context output is
\begin{align}
y_e := \min\left\{\mathsf{Ext}_e(\mathsf{Raw}_e), S\right\},
\end{align}
where $\mathsf{Ext}_e$ is the application's extraction function over the raw sensor content $\mathsf{Raw}_e$ of context $e$. Throughout, $y_e$ denotes the capped value, so $y_e\in[0,S]$.
\end{definition}

Each AW produces one capped scalar $y_e\in[0,S]$ (\Def\ref{def:api2-declared-sensitivity}). The per-AW output is fed directly into the binary tree mechanism (\S\ref{sec:api2-binary-tree}), which accumulates and releases noisy partial sums over contiguous blocks of AWs. The global sensitivity of any single AW's contribution is $S$.

\subsubsection{Neighboring relation and sensitivity}

\begin{definition}[Neighboring databases]
\label{def:api2-neighboring}
We write $D\sim_x D'$, for $x=(\ell,e,y)\in \cX$, iff $D'$ is obtained from $D$ by zeroing out the record $(\ell,e)$:
\begin{align}
D' = \left(D\setminus\{(\ell,e,y)\}\right)\cup \{(\ell,e,\mathbf{0})\}.
\end{align}
Write $D\sim D'$ if $D\sim_x D'$ for some $x\in\cX$.
\end{definition}

\begin{definition}[Sensitivity]
\label{def:api2-sensitivity}
For a report $\rho:\mathcal{D}\to \R$ and $x\in\cX$, define the individual sensitivity $\Delta_x(\rho) := \max_{D\sim_x D'} |\rho(D)-\rho(D')|$ and the global sensitivity $\Delta(\rho) := \max_{x\in\cX}\Delta_x(\rho)$.
\end{definition}

\begin{lemma}[Global sensitivity of untrusted API 2 reports]
\label{lem:api2-sensitivity-exact-formula}
Fix an untrusted API 2 report $\rho_r$ with declared sensitivity $S_r$. Then
\begin{align}
\Delta(\rho_r) = S_r.
\end{align}
\end{lemma}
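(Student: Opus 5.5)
We prove $\Delta(\rho_r) = S_r$ by establishing the two inequalities separately, working directly from \Def\ref{def:api2-sensitivity} and \Def\ref{def:api2-neighboring}. First we make the report function explicit. An untrusted API~2 report $\rho_r$ at locality $\ell_r$ is tied to a single AW $e_r$, and its value is $\rho_r(D) := D_{\ell_r}^{e_r}$. This is the capped scalar $y_{e_r}\in[0,S_r]$ of \Def\ref{def:api2-declared-sensitivity}, with $\bot$ read as $\mathbf{0}$. The key structural fact we use is the one stated after \Def\ref{def:api2-aggregation-windows}: records are independent across AWs, so $\rho_r$ reads only the record $(\ell_r,e_r)$.

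\emph{Upper bound.} Fix $x=(\ell,e,y)\in\cX$ and $D\sim_x D'$. By \Def\ref{def:api2-neighboring}, $D'$ agrees with $D$ on every record except $(\ell,e)$, which is replaced by $\mathbf{0}$. There are two cases.
\begin{itemize}
\item If $(\ell,e)\neq(\ell_r,e_r)$, then $\rho_r(D)=\rho_r(D')$, so the difference is $0$.
\item If $(\ell,e)=(\ell_r,e_r)$, then $|\rho_r(D)-\rho_r(D')| = |y-0| = y \le S_r$. The bound $y \le S_r$ holds because the runtime enforces the cap, so every $y\in\cY$ that can appear as a per-context output lies in $[0,S_r]$.
\end{itemize}
Taking the maximum over $x$ and over neighboring pairs gives $\Delta(\rho_r)\le S_r$.

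\emph{Lower bound.} We exhibit a neighboring pair that attains $S_r$. Choose raw content $\mathsf{Raw}_{e_r}$ with $\mathsf{Ext}_{e_r}(\mathsf{Raw}_{e_r})\ge S_r$; since the extraction function is arbitrary and adversarial, such content is admissible. Let $D$ contain the record $(\ell_r,e_r,S_r)$, and let $D'$ be $D$ with this record zeroed. Then $D\sim_x D'$ with $x=(\ell_r,e_r,S_r)$, and $|\rho_r(D)-\rho_r(D')|=S_r$. Hence $\Delta(\rho_r)\ge S_r$, and combined with the upper bound, $\Delta(\rho_r)=S_r$.

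The main subtlety lies in the lower bound. It requires the value $S_r$ to actually be realizable as a per-context output. Formally, $\cY=\R_{\ge0}$ is unrestricted, so I would make the convention explicit: the database domain for this report consists of capped values in $[0,S_r]$. Under that convention the upper bound follows from the cap, and the lower bound follows from the worst-case application that saturates the cap. A secondary point to state is that, if the report is viewed as a tree-node sum over several AWs, zeroing a single record still changes at most one summand. The same argument therefore goes through unchanged, which is what the binary-tree analysis later uses.
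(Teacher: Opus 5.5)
Your proof is correct and is the paper's argument spelled out: zeroing one record changes exactly one capped summand, so $|\rho_r(D)-\rho_r(D')| = y \le S_r$, with equality attained by a record of value $y = S_r$. Your case split on whether $(\ell,e)$ is the report's AW, and your remark that the cap must be built into the record domain (since formally $\cY=\R_{\geq 0}$), make explicit what the paper's two-line proof and its subsequent remark on $\Delta_x(\rho_r)$ leave implicit.
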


\begin{proof}
Neighboring databases differ by zeroing out one context record, replacing one summand $y_j\in[0,S_r]$ by $0$. Hence $|\rho_r(D)-\rho_r(D')|=y_j\leq S_r$. The bound is achieved by $y_j=S_r$.
\end{proof}

The individual sensitivity $\Delta_x(\rho_r)$ equals $y$ if $x=(\ell,e,y)$ corresponds to the report's AW, and $0$ otherwise.

\subsubsection{DP of untrusted API 2 releases}

\begin{theorem}[DP via Laplace noise (untrusted)]
\label{thm:api2-dp-laplace}
Fix an untrusted API 2 report $\rho_r$ with declared sensitivity $S_r$. The mechanism
\begin{align}
M(D) := \rho_r(D) + \eta, \quad \eta\sim\mathrm{Lap}(S_r/\varepsilon),
\end{align}
is $(\varepsilon,0)$-DP.
\end{theorem}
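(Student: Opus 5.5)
This follows from the standard Laplace mechanism argument combined with \Lem\ref{lem:api2-sensitivity-exact-formula}. First, by \Lem\ref{lem:api2-sensitivity-exact-formula}, the global sensitivity of $\rho_r$ under $\sim$ (\Def\ref{def:api2-neighboring}) is $\Delta(\rho_r)=S_r$. Hence for any neighboring pair $D\sim_x D'$ we have $|\rho_r(D)-\rho_r(D')|\le S_r$. This holds because the runtime caps each per-context output at $S_r$ (\Def\ref{def:api2-declared-sensitivity}), whatever the (possibly adversarial) extraction function $\mathsf{Ext}_e$ computes. It is the only point where the untrusted nature of the application enters, and the cap neutralizes it.

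Second, I would verify the pointwise density-ratio bound. Write $a=\rho_r(D)$ and $a'=\rho_r(D')$, and let $b=S_r/\varepsilon$ be the noise scale. The output density of $M(D)$ at $z\in\R$ is $\frac{1}{2b}\exp(-|z-a|/b)$. By the triangle inequality, for every $z$,
\begin{align}
\frac{\exp(-|z-a|/b)}{\exp(-|z-a'|/b)} = \exp\!\Bigl(\frac{|z-a'|-|z-a|}{b}\Bigr) \le \exp\!\Bigl(\frac{|a-a'|}{b}\Bigr) \le \exp\!\Bigl(\frac{S_r\varepsilon}{S_r}\Bigr)=e^{\varepsilon}.
\end{align}
Integrating this bound over any measurable $S\subseteq\R$ gives $\Pr[M(D)\in S]\le e^{\varepsilon}\Pr[M(D')\in S]$. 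Since $\sim$ is symmetric in effect (the bound above is symmetric in $a,a'$), the inequality also holds with $D$ and $D'$ swapped. This is exactly $(\varepsilon,0)$-DP per \Def\ref{def:api2-dp}.

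There is no real obstacle here. The only subtlety is to confirm that the mechanism structure is treated as public and fixed before the data are seen: the declared bound $S_r$ and the AW schedule must not depend on the data, so that the noise scale is data-independent. The paper states this commitment explicitly, so the argument goes through.
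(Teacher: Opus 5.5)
Your proposal is correct and follows the same route as the paper. The paper gives no separate proof here, but as in its trusted-mode analogue, the result comes from combining $\Delta(\rho_r)=S_r$ (Lemma~\ref{lem:api2-sensitivity-exact-formula}) with the standard Laplace mechanism. You write out the density-ratio step that the paper treats as standard, and you correctly note that $S_r$ must be a public, data-independent commitment, as the paper assumes.
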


The noise scale depends only on the declared sensitivity $S_r$, so the system does not need to inspect the application's internal computation.

Alg.~\ref{alg:api2-untrusted-runtime} shows the runtime procedure. Each AW's capped output is fed into the primary container's binary tree. AWs in the second half of the container additionally accumulate into the shadow's running sum, which produces one bridging release at the container boundary. The node continuously broadcasts $(\ell, e, \rho)$ (\Def\ref{def:api2-privacy-broadcast}) so that every device in range accounts for the same loss $\rho = (\log_2 N + 2)\varepsilon$. All users are charged pessimistically since the opaque application output may depend on any of them.

\begin{algorithm}[t]
\caption{Release at locality $\ell$ (untrusted, binary tree)}
\label{alg:api2-untrusted-runtime}
\algrenewcommand\algorithmicfunction{\textbf{def}}
\begin{algorithmic}[1]
\State \graycomment{\sysname{} node at locality $\ell$}
\State $S > 0$ \graycomment{declared sensitivity (\Def\ref{def:api2-declared-sensitivity})}
\State $\varepsilon > 0$ \graycomment{per-tree-node privacy cost}
\State $N > 0$ \graycomment{AWs per container (\Def\ref{def:api2-maxec-container})}
\State $\mathsf{Tree} \gets \textsc{NewBinaryTree}(N)$ \graycomment{$\cB_N$ as in \S\ref{sec:api2-binary-tree}}
\State $\mathsf{ShadowSum} \gets 0$ \graycomment{bridging sum for shadow container}
\State $\mathsf{leafIdx} \gets 0$ \graycomment{leaf counter within current container}
\Statex
\Function{OnAW}{$e$}
  \State $y_e \gets \min(\textsc{RunApp}(\mathsf{Raw}_e), S)$
  \State $\mathsf{Tree}.\textsc{AddLeaf}(y_e)$ \graycomment{insert $y_e$ as next leaf of $\cB_N$}
  \State $\mathsf{leafIdx} \gets \mathsf{leafIdx} + 1$
  \If{$\mathsf{leafIdx} > N/2$} \graycomment{second half of container}
    \State $\mathsf{ShadowSum} \gets \mathsf{ShadowSum} + y_e$
  \EndIf
  \ForAll{$v \in \mathsf{Tree}$ whose subtree rooted at $v$ just became complete} \graycomment{release per Eq.~\ref{eq:tree-node-release}}
    \State $\eta_v \sim \mathrm{Lap}(S/\varepsilon)$
    \State \textsc{CloudRelease}($\ell, v, \mathsf{sum}(v) + \eta_v$)
  \EndFor
  \State $\mathsf{Broadcast}(\ell, e, (\log_2 N + 2)\varepsilon)$ \graycomment{\Def\ref{def:api2-privacy-broadcast}}
\EndFunction
\Statex
\Function{OnContainerEnd}{} \graycomment{shadow bridging release}
  \State $\eta \sim \mathrm{Lap}(S/\varepsilon)$
  \State \textsc{CloudRelease}($\ell, \mathsf{shadow}, \mathsf{ShadowSum} + \eta$)
  \State \textsc{Teardown}($\mathsf{Tree}$) \graycomment{destroy all raw state ("container ephemerality")}
  \State $\mathsf{Tree} \gets \textsc{NewBinaryTree}(N)$
  \State $\mathsf{ShadowSum} \gets 0$
  \State $\mathsf{leafIdx} \gets 0$
\EndFunction
\end{algorithmic}
\end{algorithm}

The per-AW DP guarantees for both modes are now established: $(\varepsilon,0)$-DP per release under $\sim^{\mathsf{id}}$ (trusted) or $\sim$ (untrusted), with sensitivity $\Delta = \mu_Q$ or $\Delta = S_r$ respectively. The next two sections compose these per-AW guarantees across the binary tree and shadow to obtain the node-level privacy and utility bounds stated in Theorems~\ref{thm:api2-btm-user-tc}-\ref{thm:api2-utility}.

\subsection{Binary Tree Mechanism}
\label{sec:api2-binary-tree}

This section formalizes the binary tree mechanism within the $\mathsf{maxEC}_{\mathsf{sys}}$ container (\Def\ref{def:api2-maxec-container}). Write $N := \mathsf{maxEC}_{\mathsf{sys}}/\mathsf{AW}_Q$ for the number of AW leaves per primary container, $h := \log_2 N$, where $N = 2^h$, $h\geq 1$, $\Delta\in\{S_r,\mu_Q\}$ for the per-AW sensitivity (depending on it being trusted or untrusted), and $\sigma^2 := \mathrm{Var}(\mathrm{Lap}(\Delta/\varepsilon)) = 2\Delta^2/\varepsilon^2$.

\paragraph{The base case.}
We first deal with the simpler case where the queried interval lies entirely within one primary container.
Fix a locality $\ell$ and one primary container $a^{(0)}_k$ spanning $N$ consecutive AWs $e_1,\ldots,e_N$. Build a complete binary tree $\cB_N$ whose $N$ leaves are $e_1,\ldots,e_N$. Each node $v$ at depth $d$ (root at depth $0$, leaves at depth $h$) represents a contiguous block of $2^{h-d}$ consecutive leaves. Define $\mathsf{sum}(v) := \sum_{t\in\mathsf{leaves}(v)} y_t$. At the end of each AW $e_t$, the mechanism releases noisy partial sums for every node $v$ such that $e_t$ is the last leaf in the subtree rooted at $v$ (i.e., the subtree rooted at $v$ just became complete):
\begin{align}
\label{eq:tree-node-release}
\tilde{s}(v) := \mathsf{sum}(v) + \eta_v, \quad \eta_v \sim \mathrm{Lap}(\Delta/\varepsilon),
\end{align}
with all $2N-1$ noise draws being independent. This is the standard binary mechanism of \cite{chan2011private}.

\paragraph{Shadow container.}
A shadow container is initialized $N/2$ AWs before each primary container reset (\Def\ref{def:api2-maxec-container}). The shadow maintains a running sum over the second half of the primary container (AWs $N/2+1$ through $N$). At the primary-container boundary (position $kN$), the shadow releases one additional noisy partial sum, perturbed by $\mathrm{Lap}(\Delta/\varepsilon)$. This single bridging release connects two adjacent primary containers, enabling interval-sum recovery across primary-container boundaries. Each AW in the second half of a container participates in $\log_2 N + 1$ releases from the primary tree plus $1$ bridging release from the shadow, for a total of $\log_2 N + 2$ releases. AWs in the first half participate in $\log_2 N + 1$ tree releases only, so $\log_2 N + 2$ is a worst-case bound that applies uniformly to all AWs.

\paragraph{Longer queried intervals via decomposition and induction.} Next, we build the tools for handling queries whose intervals (1) span a primary-container boundary, (2) cover more than $N$ AWs, or both.

\begin{definition}[Canonical interval decomposition]
\label{def:api2-canonical}
For a contiguous sub-interval $[i,j]\subseteq [1,N]$, the canonical decomposition $\mathsf{Can}([i,j])$ is the unique minimal set of nodes in $\cB_N$ whose leaf sets partition $\{i,\ldots,j\}$. Define $V(N) := 2(h-1)$ for $N\geq 4$ and $V(2):=1$.
\end{definition}

\begin{lemma}[Prefix and suffix canonical sizes]
\label{lem:btm-prefix-suffix}
Let $\mathsf{popcount}(n)$ denote the number of $1$-bits in the binary representation of a positive integer $n$.
\begin{enumerate}
    \item A prefix $[1,r]$ in $\cB_N$ has $|\mathsf{Can}([1,r])| = \mathsf{popcount}(r)$.
    \item A suffix $[N-\ell+1, N]$ in $\cB_N$ has $|\mathsf{Can}([N-\ell+1,N])| = \mathsf{popcount}(\ell)$.
    \item For any $[i,j]\subseteq[1,N]$, $|\mathsf{Can}([i,j])|\leq V(N)$.
\end{enumerate}
\end{lemma}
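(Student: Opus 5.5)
The plan is to identify tree nodes of $\cB_N$ with \emph{aligned dyadic blocks}: a node at depth $d$ covers $[m2^{h-d}+1,(m+1)2^{h-d}]$ for some integer $m$. A set of nodes partitioning $\{i,\ldots,j\}$ is therefore exactly a partition of the interval into aligned dyadic blocks. The workhorse is the following counting fact. If $n=\sum_{t=1}^k 2^{a_t}$ for some (not necessarily distinct) exponents $a_t$, then $k\ge \mathsf{popcount}(n)$. To see this, repeatedly merge two equal powers $2^a+2^a=2^{a+1}$. Each merge lowers the number of terms by one and does not change the sum, and the process ends at the binary representation of $n$. So every node partition of an interval of length $n$ has at least $\mathsf{popcount}(n)$ parts.

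For part~1, write $r=2^{a_1}+\cdots+2^{a_k}$ with $a_1>\cdots>a_k$ and $k=\mathsf{popcount}(r)$. Take the consecutive blocks of sizes $2^{a_1},2^{a_2},\ldots$ starting at position $1$. The $t$-th block starts right after $2^{a_1}+\cdots+2^{a_{t-1}}$ elements. Each of those summands is a multiple of $2^{a_t}$, so the block is aligned and hence is a node of $\cB_N$. This gives a node partition with $\mathsf{popcount}(r)$ parts, and the counting fact shows no partition has fewer. Hence $|\mathsf{Can}([1,r])|=\mathsf{popcount}(r)$; I take minimality and uniqueness as given by \Def\ref{def:api2-canonical}. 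Part~2 follows from part~1 by applying the reflection $t\mapsto N+1-t$. This reflection maps $\cB_N$ onto itself, sends aligned blocks to aligned blocks, and sends suffixes of length $\ell$ to prefixes of length $\ell$.

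For part~3, I would induct on $h$, using that $V$ is nondecreasing ($V(2)=1$, $V(4)=2$, $V(8)=4,\ldots$). For $N=2$ the bound $1$ is immediate. For $N\ge4$ there are two cases. If $[i,j]$ lies inside one child of the root, that child is a copy of $\cB_{N/2}$, and its nodes are nodes of $\cB_N$. By induction the canonical size is at most $V(N/2)\le V(N)$. Otherwise $[i,j]$ crosses the midpoint. It then splits into a suffix of length $\ell$ of the left subtree and a prefix of length $r$ of the right subtree, with $1\le \ell,r\le N/2=2^{h-1}$. By parts~1--2 applied within the subtrees, the union of the two canonical sets is a node partition with $\mathsf{popcount}(\ell)+\mathsf{popcount}(r)$ parts. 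If $\ell=r=N/2$, the interval is the root and has size $1$. If exactly one of $\ell,r$ equals $2^{h-1}$, that side contributes one node and the other at most $h-1$, giving at most $h\le 2(h-1)$ for $h\ge2$. If both are below $2^{h-1}$, each has at most $h-1$ one-bits, giving at most $2(h-1)$. Minimality of $\mathsf{Can}$ then gives $|\mathsf{Can}([i,j])|\le V(N)$.

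The main obstacle is making the lower bound in the prefix case rigorous. The merging argument handles it cleanly. The rest is bookkeeping of the boundary cases $\ell$ or $r=N/2$ and the small case $N=4$, where $V(2)\le V(4)$ must be checked explicitly.
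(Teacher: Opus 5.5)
Your proof is correct. Parts~2 and~3 follow the paper's route: reflection for suffixes, and for part~3 a split at the midpoint into a left-subtree suffix and a right-subtree prefix, with recursion otherwise. You also make explicit two points the paper compresses into ``the other cases follow by induction'': $V$ is nondecreasing, so the within-one-child case inherits $V(N/2)\le V(N)$; and the boundary cases where $\ell$ or $r$ equals $N/2$ are covered.

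Part~1 takes a genuinely different route. The paper inducts on $h$, splitting at the root and using $\mathsf{popcount}(N/2+r')=1+\mathsf{popcount}(r')$. That argument only exhibits a decomposition of size $\mathsf{popcount}(r)$ and tacitly assumes it is the canonical (minimum) one. It also slips at $r=N$: there $r'=N/2$, the identity fails, and the canonical decomposition is just the root. Your greedy construction of aligned blocks from the binary expansion of $r$ gives the upper bound. The merging argument ($2^a+2^a=2^{a+1}$ lowers the term count without changing the sum) gives the matching lower bound. So the equality $|\mathsf{Can}([1,r])|=\mathsf{popcount}(r)$ is actually proved, and $r=N$ needs no special treatment.

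The paper's induction is shorter and suffices wherever only an upper bound is used, as in part~3 and the utility theorem. Your lower bound is what makes the exact equalities in parts~1--2 rigorous.
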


\begin{proof}
(1) By induction on $h$. Base case $h=1$: $r\in\{1,2\}$, $\mathsf{Can}([1,r])$ consists of one node, and $\mathsf{popcount}(r)=1$. For the inductive step, if $r\leq N/2$, then $[1,r]$ lies in the left subtree and $|\mathsf{Can}([1,r])| = \mathsf{popcount}(r)$ by induction. If $r > N/2$, write $r = N/2 + r'$. The prefix decomposes as the left child of the root (one node) plus the prefix $[1,r']$ in the right subtree ($\mathsf{popcount}(r')$ nodes by induction). Total: $1 + \mathsf{popcount}(r') = \mathsf{popcount}(N/2 + r') = \mathsf{popcount}(r)$.

(2) By symmetry: reflect $\cB_N$ (swap left/right children at every level). The suffix $[N-\ell+1,N]$ maps to the prefix $[1,\ell]$ in the reflected tree. By part~(1), this has $\mathsf{popcount}(\ell)$ nodes.

(3) If $i\leq N/2 < j$, write $[i,j] = [i,N/2]\cup[N/2+1,j]$. The first piece is a suffix of the left subtree of length $\ell' = N/2-i+1$, the second is a prefix of the right subtree of length $r' = j-N/2$. Both $\ell',r'\leq N/2$, so each popcount is at most $h-1$. Total: at most $2(h-1) = V(N)$. The other cases ($j\leq N/2$ or $i > N/2$) follow by induction.
\end{proof}

\subsection{End-to-End Privacy Guarantee Enforced on Each Node}
\label{appendix:api2:privacy}

We first establish a per-query privacy guarantee at the $\mathsf{AW}_Q$ level, then compose across all queries at TC granularity to prove \Thm\ref{thm:api2-btm-user-tc} and \Thm\ref{thm:api2-btm-user-maxec}.

\begin{lemma}[Per-query, per-AW DP under $\sim^{\mathsf{id}}$]
\label{lem:api2-per-query-dp}
Fix a single query $Q_q$ with aggregation window $\mathsf{AW}_{Q_q}$, tree size $N_q = \mathsf{maxEC}_{\mathsf{sys}}/\mathsf{AW}_{Q_q}$, and per-tree-node privacy parameter $\varepsilon_q$. Fix an AW $e$ of $Q_q$. Then the joint release of all noisy tree-node partial sums and the shadow bridging release \emph{for query $Q_q$} involving $e$'s data is $((\log_2 N_q + 2)\varepsilon_q, 0)$-DP under $\sim^{\mathsf{id}}$ (i.e., with respect to adding or removing a single tracked ID in $e$).
\end{lemma}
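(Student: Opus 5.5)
The plan is to enumerate the releases for $Q_q$ that can depend on the data of AW $e$, show that each is a Laplace release of sensitivity at most $\Delta_q$ under $\sim^{\mathsf{id}}$, and then apply basic composition (\Lem\ref{lem:api2-basic-composition}). Releases not involving $e$ are identical functions of identical data under $D$ and $D'$. Their noise is independent of everything else, so they can be ignored (or treated as post-processing of the releases that do involve $e$).

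\emph{Step 1 (counting releases).} Let $a^{(0)}_k$ be the primary container containing $e$, viewed as a leaf of $\mathcal{B}_{N_q}$. The tree nodes whose subtree contains leaf $e$ are exactly its ancestors, one per depth $0,\ldots,h$ with $h=\log_2 N_q$. That gives $\log_2 N_q+1$ tree-node releases (Eq.~\ref{eq:tree-node-release}). In addition, if $e$ lies in the second half of the container, it enters $\mathsf{ShadowSum}$ and thus one shadow bridging release. By the container ephemerality assumption and \Def\ref{def:api2-maxec-container}, no other container's tree contains $e$: primary containers tile time, and the shadow only contributes the single bridging sum. So at most $\log_2 N_q+2$ releases involve $e$. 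If $e$ is in the first half, there are $\log_2 N_q+1$, which the bound covers.

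\emph{Step 2 (per-release DP).} Each such release has the form $\sum_{t\in T} y_t+\eta$, with $e\in T$ and $\eta\sim\mathrm{Lap}(\Delta_q/\varepsilon_q)$ independent. Under $\sim^{\mathsf{id}}$, only $y_e$ changes, and by \Def\ref{def:api2-trusted-sensitivity} (or \Lem\ref{lem:api2-sensitivity-exact-formula} in untrusted mode, with $\sim$ in place of $\sim^{\mathsf{id}}$) it changes by at most $\Delta_q$. Every other summand is unchanged. Hence each release is $(\varepsilon_q,0)$-DP, by the same argument as \Thm\ref{thm:api2-dp-laplace-trusted}/\Thm\ref{thm:api2-dp-laplace}. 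The paper's stated $\varepsilon_q=\Delta_q\sqrt2/\sigma_{Q_q}$ matches Laplace scale $\Delta_q/\varepsilon_q$, since $\mathrm{Var}=2(\Delta_q/\varepsilon_q)^2=\sigma_{Q_q}^2$.

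\emph{Step 3 (composition).} The noise draws are mutually independent, and the releases are adaptively issued over time but on a fixed schedule. Basic composition (\Lem\ref{lem:api2-basic-composition}) over the at most $\log_2 N_q+2$ releases therefore yields $((\log_2 N_q+2)\varepsilon_q,0)$-DP for the joint output.

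The main obstacle is Step 1: rigorously arguing that no release outside this list depends on $e$. In particular, the shadow container receives the same frames, so one must check that it releases nothing except the single bridging sum, and that state reset prevents the next container's tree from incorporating $y_e$. I would handle this directly from \Def\ref{def:api2-maxec-container} and the container ephemerality assumption, noting that the shadow's tree outputs are suppressed ("shadow containers do not release outputs") apart from the bridging release.
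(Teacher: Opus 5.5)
Your proposal is correct and follows the paper's proof. The paper likewise takes the $\log_2 N_q+1$ tree-node releases plus at most one shadow bridging release involving $e$, observes that each is a $\mathrm{Lap}(\Delta_q/\varepsilon_q)$-noised sum in which only $y_e$ changes (by at most $\Delta_q$) under $\sim^{\mathsf{id}}$, and concludes by basic composition. Your Step~1 argument, that only the nodes on $e$'s root-to-leaf path and the single bridging sum touch $y_e$, spells out what the paper states without justification.
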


\begin{proof}
Let $D \sim^{\mathsf{id}} D'$ differ by one tracked ID in AW $e$. Then $|y_e(D) - y_e(D')| \leq \Delta_q$ and $y_{e'}(D) = y_{e'}(D')$ for all $e' \neq e$. Let $\{M_j\}_{j=1}^{L}$ with $L := \log_2 N_q + 2$ denote the releases involving $e$: $\log_2 N_q + 1$ tree-node sums plus at most $1$ shadow bridging release. Each $M_j$ adds independent $\mathrm{Lap}(\Delta_q/\varepsilon_q)$ noise. By the Laplace mechanism, each $M_j$ satisfies $(\varepsilon_q, 0)$-DP under $\sim^{\mathsf{id}}$. The noise draws are mutually independent, so by \Lem\ref{lem:api2-basic-composition}, we get that $(M_1, \ldots, M_L)$ is
\begin{align}
(L \cdot \varepsilon_q, 0)\text{-DP} = \Bigl((\log_2 N_q + 2)\varepsilon_q, 0\Bigr)\text{-DP.}
\end{align}
\end{proof}

\noindent\textbf{Theorem \ref{thm:api2-btm-user-tc}} (restated)\textbf{.}
\textit{The node enforces a global privacy filter with per-TC capacity $\rho_{\mathsf{node}} := \sum_{q=1}^{K} (\log_2 N_q + 2)\varepsilon_q$. For any user $u$ present during TC $e$, the joint release across all queries involving $e$'s data is $(\rho_{\mathsf{node}}, 0)$-DP with respect to $D \sim^{\mathsf{user}}_u D'$.}

\begin{proof}
Fix TC $e$ and user $u$ present during $e$. By tracking fidelity, $u$ has one tracked ID $\mathsf{id}_u$ within $e$. For each $q \in [K]$, let $e_q$ denote the unique AW of $Q_q$ containing $e$ (since $\mathsf{AW}_{Q_q}/\tau_{\mathsf{TC}} \in \mathbb{N}$). Let $\mathcal{M}_q$ denote the joint release of $Q_q$ for AW $e_q$. Under $D \sim^{\mathsf{user}}_u D'$, removing $u$ from $e$ removes $\mathsf{id}_u$ from $e_q$, i.e., one step under $\sim^{\mathsf{id}}$. By \Lem\ref{lem:api2-per-query-dp}, $\mathcal{M}_q$ is $((\log_2 N_q + 2)\varepsilon_q, 0)$-DP. The $\{\mathcal{M}_q\}_{q=1}^{K}$ use independent noise (separate trees), so by \Lem\ref{lem:api2-basic-composition}, we get that $(\mathcal{M}_1, \ldots, \mathcal{M}_K)$ is
\begin{align}
\Bigl(\textstyle\sum_{q=1}^{K}(\log_2 N_q + 2)\varepsilon_q, 0\Bigr)\text{-DP} = (\rho_{\mathsf{node}}, 0)\text{-DP.}
\end{align}
\end{proof}

\noindent\textbf{Theorem \ref{thm:api2-btm-user-maxec}} (restated)\textbf{.}
\textit{Fix a user $u$ and one $\mathsf{maxEC}_{\mathsf{sys}}$ window at locality $\ell$. Let $e_1, \ldots, e_G$ ($G \leq \mathsf{maxEC}_{\mathsf{sys}}/\tau_{\mathsf{TC}}$) be the TCs during which $u$ is present, with per-TC filter capacity $\rho_{\mathsf{node}}^{(g)}$. The total privacy loss is at most $\sum_{g=1}^{G} \rho_{\mathsf{node}}^{(g)}$. Releases from distinct $\mathsf{maxEC}_{\mathsf{sys}}$ windows are computed on independent state and compose additively.}

\begin{proof}
By \Thm\ref{thm:api2-btm-user-tc}, the joint release for each $e_g$ is $(\rho_{\mathsf{node}}^{(g)}, 0)$-DP under $\sim^{\mathsf{user}}_u$ restricted to $e_g$. Removing $u$ from all $G$ TCs yields $D = D_0 \sim^{\mathsf{user}}_u D_1 \sim^{\mathsf{user}}_u \cdots \sim^{\mathsf{user}}_u D_G = D'$. By \Lem\ref{lem:api2-basic-composition} with $\delta = 0$, we get that the total privacy cost is
\begin{align}
\sum_{g=1}^{G} \rho_{\mathsf{node}}^{(g)}.
\end{align}
\end{proof}

\paragraph{Proposed device-side accounting.} At the beginning of each TC $e_g$, the node computes $\rho_{\mathsf{node}}^{(g)}$ from the currently registered queries and continuously broadcasts $(\ell, e_g, \rho_{\mathsf{node}}^{(g)})$ (\Def\ref{def:api2-privacy-broadcast}). A device in range records $\rho_{\mathsf{node}}^{(g)}$ once per TC and accumulates $\sum_{g} \rho_{\mathsf{node}}^{(g)}$ over the TCs during which it is present.

\subsection{Utility of Binary Tree Mechanism with State Reset}
\label{appendix:api2:utility}

We restate and prove \Thm\ref{thm:api2-utility}. Recall from \Def\ref{def:api2-canonical} that $V(N) := 2(h-1) = 2(\log_2 N - 1)$ for $N\geq 4$ is the worst-case number of tree nodes needed to decompose any sub-interval of $[1,N]$.

\noindent\textbf{Theorem \ref{thm:api2-utility}} (restated)\textbf{.}
\textit{Let $\sigma^2 = 2\Delta^2/\varepsilon^2$ be the per-node noise variance.
(1) For short intervals where $|I|\leq N/2$, the interval sum can be recovered from released tree-node values with noise variance at most $V(N)\cdot\sigma^2$.
(2) For long intervals where $|I| = kN/2 + R$ with $k\geq 1$ and $0\leq R < N/2$, the interval sum can be recovered with variance at most $(k + V(N))\cdot\sigma^2$.}

\begin{proof}
\emph{Case 1: $|I| \leq N/2$.} If $I$ lies within one primary container, by (3) of \Lem\ref{lem:btm-prefix-suffix}, we have $|\mathsf{Can}(I)| \leq V(N)$, so
\begin{align}
\mathrm{Var}[\hat{S}(I)] \leq V(N) \cdot \sigma^2.
\end{align}
If $I$ straddles a boundary, write $I = I_1 \cup I_2$ with $I_1$ a suffix of one container ($|I_1| = s$) and $I_2$ a prefix of the next ($|I_2| = p$), $s + p \leq N/2$. By \Lem\ref{lem:btm-prefix-suffix}(1-2):
\begin{align}
|\mathsf{Can}(I_1)| + |\mathsf{Can}(I_2)|
    &= \mathsf{popcount}(s) + \mathsf{popcount}(p)\\
    &\leq 2(h-1) = V(N).
\end{align}
All noise draws across the two trees are independent, giving $\mathrm{Var}[\hat{S}(I)] \leq V(N) \cdot \sigma^2$.

\emph{Case 2: $|I| > N/2$.} Write $|I| = kN/2 + R$ with $k \geq 1$, $0 \leq R < N/2$. The depth-1 nodes tile the timeline into aligned $N/2$-blocks, each answered by one tree node with variance $\sigma^2$. Decompose $I$ into at most $k$ complete blocks plus two residual pieces (a suffix and a prefix, each $< N/2$). By (1-2) of \Lem\ref{lem:btm-prefix-suffix}:
\begin{align}
\text{total nodes}
    &\leq k + \mathsf{popcount}(s) + \mathsf{popcount}(p)\\
    &\leq k + V(N).
\end{align}
All noise draws are independent, so $\mathrm{Var}[\hat{S}(I)] \leq (k + V(N)) \cdot \sigma^2$.
\end{proof}

\noindent\textbf{Remark} (bounded state). At most two containers are active at the same time, since primary and shadow starts are offset by $N/2$ AWs. Each primary tree requires at most $h+1$ running partial-sum counters (one per node on the incomplete root-to-leaf path), and the shadow maintains $1$ running sum, for $O(\log N)$ total raw state.

\section{API~3: Cross-Locality Aggregation -- Intuitive Privacy Analysis}
\label{appendix:api3}

The \sysname API~3 design and privacy guarantees largely inherit from Attribution. We leverage prior formal analyses of Attribution~\cite{TKM+24} via a mapping between API~3 measurements and Attribution primitives. Our argument is currently \emph{intuitive}: we outline this mapping and rely on its correspondence to Attribution, but do not formalize it here (formalization is ongoing).

\subsection{Mapping API~3 to Attribution}
\label{appendix:api3:attribution-mapping}

We detail API~3's functioning to make the mapping between API~3 and
Attribution explicit. \F\ref{fig:api3} illustrates the workflow.

\step{1} In API~3, applications register cross-location measurements at
\sysname nodes along (and optionally within) a target corridor. Each
measurement consists of simple device-side instructions that map directly
to Attribution primitives. For example, in a linger-time measurement,
entry nodes instruct devices to record the current time (analogous to
saving an impression), while exit nodes instruct devices to report the
elapsed time since entry (analogous to attributing a conversion to a
previous impression). Additional instructions within the corridor (e.g.,
recording intermediate locations) correspond to further impression events
from which one or more reports are later generated and budgeted separately.

\step{2} Nodes store registered measurements and \step{3} broadcast them
locally. \step{4} The \sysname mobile app receives these broadcasts and
forwards them to Attribution for execution.

\step{4.1} Devices record small amounts of local state corresponding to
these instructions, analogous to Attribution's impression storage. Upon
a reporting trigger, the device generates an encrypted report using
trajectory-specific functions (e.g., bounded time differences or
L1-bounded histograms over visited locations). These are variations of
Attribution's conversion logic, preserving the bounded-sensitivity
constraints required for correct IDP accounting.

\step{4.2} We reuse Attribution's IDP accounting, extending it to include
API~2 privacy losses (\S\ref{sec:api2}), and expose it as a unified
\sysname component.
Unlike API~2, where privacy is enforced at the node
and accounted on devices, API~3 enforces and accounts for privacy
entirely on-device. Each device maintains an epoch-level budget (e.g.,
weekly), deducts privacy loss upon generating reports, and suppresses
further contributions once the budget is exhausted (e.g., by returning
encrypted null values).

\step{5} Devices send encrypted reports to the application backend,
which \step{6} aggregates reports across users and \step{7} submits them
for secure aggregation (MPC/TEE), as in Attribution.
\step{8} The aggregation service computes noisy aggregates consistent
with the on-device accounting and returns DP results to the application.

A key difference from the web setting is the absence of an explicit
interaction channel through which users are associated with specific
measurements. On the web, this association is inherent: a user visiting
a site and performing an action (e.g., a purchase) establishes a shared
view between browser and querier of which measurement the user
participates in. This information is essential for IDP accounting,
ensuring that privacy loss is charged only for relevant measurements.

In \sysname, we reconstruct this association via node-user coordination:
devices self-identify as relevant to specific measurement contexts based
on broadcast annotations and local sensing, and execute the corresponding
reporting logic. Intuitively, this serves as a physical-world analogue of
the implicit interaction channel between the browser and conversion site
in Attribution.

\subsection{Intuitive Privacy Claim}
\label{appendix:api3:privacy-guarantee}

Under the above mapping, an API~3 measurement can be viewed as an
instance of an Attribution measurement in which:
(i) infrastructure broadcasts induce impression events or reporting
triggers,
(ii) device-side logic generates conversion reports, and
(iii) reporting is governed by on-device individual differential privacy
(IDP) accounting.

We therefore invoke \Thm.~1 of \cite{TKM+24}, and claim that API~3,
through its use of Attribution, satisfies
\emph{(device, epoch)-level individual differential privacy}, meaning
that the inclusion or exclusion of any single device's participation
within an epoch (e.g., a week) has a bounded effect on the distribution
of outputs, as determined by the device's configured privacy budget.

The argument is informal. A formal reduction, currently under development,
requires: (1) formalizing the data, query, and system model of \sysname
API~3 and showing equivalence to Attribution's; (2) formalizing
node-user coordination as having an effect equivalent to the
``report identifier'' in~\cite{TKM+24} -- the logical construct that
was introduced to capture the shared context between device and
conversion site, by which a device ``self-identifies'' to participate
(and incur privacy loss) or not in a measurement; and (3) showing
that our unidirectional node-user coordination protocol introduces
no additional privacy leakage beyond what is already captured in
Attribution.

Despite the lack of a full formalization, we highlight an intriguing
parallel: public sensing must explicitly implement a construct that is
inherent on the web, but which Attribution needed to introduce
conceptually to model that behavior. This reinforces our web analogy --
and its breaking points -- as a useful guide for designing privacy in
public sensing.
\section{Evaluation -- Methodology Details}
\label{appendix:evaluation-details}

This section provides full methodological details for each experiment in \S\ref{sec:evaluation}. Each subsection mirrors the corresponding evaluation subsection in the paper body.

\subsection{API~1: On-Scene}
\label{appendix:eval-details:api1}

\heading{Sniffing attack with $\maxEC$ -- methodology.}
We simulate the sniffing attack from \S\ref{sec:api1}: a mobile
adversary traverses the city and, at each intersection, reads the
current API~1 output buffer, capturing all activity accumulated since
the start of the current EC window. We grant the attacker full buffer
access upon arrival, providing an upper bound on any sniffer's
capability. This evaluation targets small-scale sniffing; large-scale
defenses (anonymization, regulation) are orthogonal and not modeled.

We construct the simulation over Midtown Manhattan (14th--59th Streets,
${\sim}3.5 \times 3.5$\,km, ${\sim}540$ intersections), extracted from
OpenStreetMap and converted to SUMO~\cite{lopez-sumo-2018}. Synthetic
traffic traces are calibrated to 10\% of peak-hour Manhattan
volumes~\cite{nyc-dot-traffic-2024,nyu-wagner-manhattan-2012,nyc-dot-cycling-2024},
providing realistic density in the absence of public city-scale traces.

We model four attacker profiles: pedestrian (1.3\,m/s), cyclist
(5.4\,m/s), car (13.4\,m/s), and a static device at Times Square.
Mobile attackers follow a greedy routing strategy that maximizes new
intersections visited per unit time.

We sweep $\maxEC \in \{30\text{s}, 1\text{min}, 2\text{min}, 3\text{min}, 4\text{min}, 5\text{min}\}$
and report the fraction of total city-wide \emph{person-activity}
(person-seconds across all intersections) captured over one hour.

\heading{Frames dropped with $\maxEC$ -- methodology.}
We evaluate API~1 container-rotation overhead using the Object Detection
and Tracking service (YOLO26n), which processes raw video frames and
emits per-frame bounding boxes. We use the MTID pole-camera dataset
(38 minutes, 30\,FPS)~\cite{jensen-mtid-2020} and measure the percentage
of input frames dropped (not processed within the real-time deadline)
while sweeping $\maxEC \in \{5, 10, 15, 20, 25, 30, 40, 50, 60\}$\,s
on an Nvidia 5060~Ti (desktop GPU) and an Nvidia Jetson AGX Orin (edge GPU).

\begin{figure}[t]
\centering
\begin{subfigure}[t]{0.48\columnwidth}
    \centering
    \includegraphics[width=\linewidth]{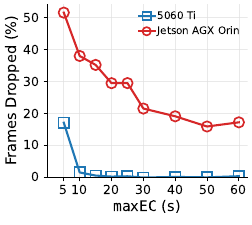}
    \caption{Frames dropped (\%)}
    \label{fig:api1-frames-dropped}
\end{subfigure}
\hfill
\begin{subfigure}[t]{0.48\columnwidth}
    \centering
    \includegraphics[width=\linewidth]{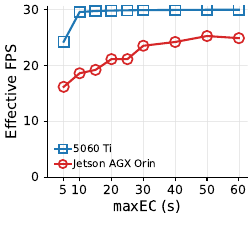}
    \caption{Effective FPS}
    \label{fig:api1-effective-fps}
\end{subfigure}
\vspace{1em}
\caption{{\bf API~1 performance overhead vs.\ $\maxEC$} on a desktop
5060~Ti and an edge Jetson AGX Orin.
(a)~Percentage of input frames dropped during container rotation.
(b)~Effective throughput in frames per second.
The 5060~Ti reaches near-zero drops and 30\,FPS by $\maxEC{\geq}10$\,s;
the Jetson plateaus near ${\sim}17\%$ drop / ${\sim}25$\,FPS due to its
slower GPU.
Each data point runs for $5{\times}\maxEC$ to ensure at least five
rotation cycles.}
\label{fig:api1-latency}
\end{figure}


\subsection{API~2: Single-Locality Aggregation}
\label{appendix:eval-details:api2}

\heading{DP counting accuracy -- methodology.}
We evaluate counting accuracy on NYC DOT bicycle counts from the
Williamsburg Bridge Eco-Counter (15-minute bins over 30 days,
2{,}880 bins).

We implement DP continual release using the Toeplitz
mechanism~\cite{FHU23}, a state-of-the-art approach that
provides stronger utility than the tree-based mechanism described
in \S\ref{sec:api2} while preserving analogous guarantees. We describe
the tree-based mechanism in the body for simplicity of exposition,
but use Toeplitz in our prototype for its superior performance
in both the trusted and untrusted modes.

We aggregate bins into windows ranging from 1 to 12~hours and
add Gaussian noise matching the Toeplitz marginal distribution
($T_{\max} = 1{,}440$, $\delta = 10^{-5}$). We report root mean square relative error (RMSRE)~\cite{hay-boosting-2010} averaged over 1{,}000
Monte Carlo trials. This per-period simulation is unbiased
(\S\ref{appendix:api2:data-model}).

We sweep $\varepsilon \in \{0.1, 1.0\}$ under two configurations:
\emph{trusted} (sensitivity $\Delta = 5$, justified below) and
\emph{untrusted} (output clamping to $[0,100]$, yielding $\Delta = 100$).
The latter value is chosen to match the order of magnitude of counts over
the largest aggregation window (12~h), ensuring that measurements are not
artificially clamped in the untrusted setting. We assume the querier does
not know the exact value a priori, but can set a reasonable
order-of-magnitude upper bound, here $100$. The sensitivity ratio
$\Delta_U / \Delta_T = 20$ directly scales the noise. The hero figure
plots three lines -- trusted at $\varepsilon{=}1$, trusted at
$\varepsilon{=}0.1$, and untrusted at $\varepsilon{=}1$ -- to
highlight that trusted mode more than compensates for a $10{\times}$
tighter privacy budget.

\heading{Empirical sensitivity benchmark for trusted mode -- methodology.}
\headingi{Pipeline under test.}
We run the identical detector and tracker configuration deployed by the
\sysname{} Object Detection and Tracking service.
The detector is YOLOv26n~\cite{citation-needed-yolo26n}, an NMS-free
end-to-end architecture, with confidence threshold 0.5 and person-only
class filter (class~0).
The tracker is BoT-SORT~\cite{aharon-botsort-2022} using the production
configuration: high/low detection thresholds of 0.25/0.1, track buffer of
90~frames (3\,s at 30\,fps, extended from default to accommodate
\sysname{}'s frame-dropping architecture), IoU matching threshold 0.8,
no camera motion compensation (\texttt{gmc\_method: none}), and no
re-identification network (\texttt{with\_reid: false}).
A fresh model instance is created for each sequence, resetting all
tracker state (track IDs, Kalman filters, lost track buffer) to match
\sysname{}'s per-ephemeral-context lifecycle.

\headingi{Dataset.}
We use the MOT17 benchmark~\cite{milan-mot16-2016} training set, which
provides annotated pedestrian sequences at $1920{\times}1080$ resolution.
We restrict to the four static-camera sequences (MOT17-02, 04, 05, 09;
3{,}012~frames total), excluding three moving-camera sequences
(MOT17-10, 11, 13) because \sysname{} deploys on static pole-mounted
cameras and the BoT-SORT configuration disables camera motion
compensation. Ground truth is filtered to class~1 (pedestrian) with
confidence flag~1 (not ignored).

\headingi{Method.}
For each frame, ground-truth bounding boxes are matched to tracker
outputs using the Hungarian algorithm~\cite{kuhn1955hungarian} on an IoU
cost matrix. We discard assignments with $\text{IoU} < 0.5$, following
the CLEAR MOT standard~\cite{bernardin-clearmot-2008}. For each GT
identity across all frames in a sequence, we collect the set of distinct
tracker IDs matched to it; the cardinality of this set is the number of
track IDs assigned to that person. GT identities never matched to any
tracker output are excluded to avoid conflating detection recall with
tracking ID consistency.

\headingi{Limitations.}
MOT17 depicts European street scenes; deployment may involve different
camera angles, lighting, or crowd densities. The sequences are
17--35\,s at 30\,fps; longer sequences may exhibit more ID switches as
individuals leave and re-enter the scene beyond the 3\,s track buffer.
The confidence threshold of 0.5 matches production but discards
low-confidence detections, which may cause track fragmentation on
re-detection. BoT-SORT runs without ReID, so re-identification after
long occlusions relies solely on Kalman filter motion prediction;
enabling ReID would likely reduce $\Delta$ but is not part of the
production configuration.

\subsection{API~3: Cross-Locality Aggregation}
\label{appendix:eval-details:api3}

\heading{Subway OD estimation under per-device budgets -- methodology.}
We use the Hangzhou Metro smart-card dataset~\cite{liu-hangzhou-metro-2019},
containing 29.2\,M trips from 4.97\,M riders across 80 stations over 25 days.
Unlike aggregate OD datasets, it provides per-user histories, enabling
direct modeling of budget exhaustion.

Entering station $A$ records an event locally; exiting at $B$ triggers
a report consuming $\varepsilon_{\text{rep}} = 0.5$ from the device's
weekly budget $\varepsilon$. Once $\varepsilon$ is exhausted, subsequent
reports are replaced with encrypted nulls, introducing bias due to
missing contributions. We capture this alongside standard DP noise in
RMSRE, where budget exhaustion contributes bias error and DP noise
contributes unbiased variance.

Each station aggregates received reports into an $80 \times 80$ OD
matrix and adds Laplace noise with scale $b = 2.0$.

We vary two parameters: (1) device budget $\varepsilon$ (1--10 and
$\infty$), and (2) batch length (hour/day/week). RMSRE is computed
exactly (no Monte Carlo) since both budget exhaustion and noise are
analytically characterized.


\subsection{API 3: Self-Identification}
\label{appendix:eval-details:self-idenfification}

\heading{Stage~1: IMU activity classification -- methodology.}
We employ a histogram-based gradient boosting classifier to distinguish the three modes of human activity: walking, driving, and cycling based on the six-channel IMU data (accelerometer and gyroscope). Raw recordings are obtained from three heterogeneous sources: Vi-Fi pedestrian traces ~\cite{vi-fi_2022}, in-vehicle accelerometer and gyroscope pairs ~\cite{driver-2022}, and PAMAP2 cycling segments ~\cite{pamap2}. All samples go through a shared preprocessing stage that includes normalization to rad/s, resampling frequencies, and segmenting the continuous signals to 3-second sliding windows. Each window is flattened into a 900-dimensional feature vector and fed to a histogram-based gradient boosting classifier in a stratified group k-fold setting to prevent data leakage. 

To assess the robustness of the trained classifier to sensor noise, we evaluated its performance under synthetically injected Gaussian perturbations of increasing magnitude. For each feature, we computed the standard deviation across the held-out test set. At each noise level $\sigma_\text{scale}$, zero-mean Gaussian noise was drawn independently for each observation and each feature dimension, scaled by the product of the per-feature standard deviation and $\sigma_\text{scale}$, and added to the clean test data. A noise level of 0.0 corresponds to the unperturbed test set, while a noise level of 1.0 indicates that the injected perturbation has the same spread as the natural variability of each feature. This feature-relative scaling ensures
that noise magnitudes are comparable across features with different dynamic ranges, providing a realistic simulation of sensor degradation. Multiple independent trials were averaged at each noise level to reduce sampling variance.

\heading{Stage~2: Bounding-box to IMU matching -- methodology.}
Stage~2 matches each phone to a visual bounding-box track.
For each 5\,s window (2.5\,s hop), we
compute Welch PSDs of IMU magnitude and visual motion signals in the
0.7--3.0\,Hz walking band, form a $P{\times}T$ cost matrix of pairwise
cosine distances, augment it to $(P{+}T){\times}(P{+}T)$ with an
adaptive unassignment penalty $\lambda$ (75th-percentile of the cost
distribution), and solve the assignment with the Hungarian
algorithm~\cite{kuhn1955hungarian}.  Any phone or track matched to a
dummy counterpart is declared unmatched.

We evaluate on Vi-Fi outdoor sequences~\cite{vi-fi_2022} using window-level micro-averaged F1-score. In deployment, the camera observes many tracks without a matching phone. We inject 0--10 distractor tracks from other scenes (3 random trials).

\heading{Impact on API~3 utility -- methodology.}
We reuse the Hangzhou Metro trip dataset from \S\ref{sec:eval:api3}
(29.2\,M trips, 80~stations, 25~days) and enrich it with transfer
information via shortest-path routing over the station adjacency graph,
identifying 13.6\,M transfer events -- passengers present at
intermediate stations without exiting.

We parameterize the end-to-end self-identification pipeline by a single
detection rate~$a \in [0,1]$: recall (fraction of true exiters correctly
detected)~$= a$, and false positive rate (fraction of transfer
passengers falsely detected as exiters)~$= 1{-}a$. This single-parameter
sweep suffices for the sensitivity analysis we present. Stage~2
(cross-modal matching) determines \emph{which} bounding-box track a
phone corresponds to, not \emph{whether} an exit occurred, so its errors
are orthogonal to the OD error modeled here. A second parameter
$p \in \{0, 0.5, 1.0\}$ scales the fraction of transfer
passengers near exit gates, modeling station geometries from transfers
far from gates ($p{=}0$) to fully overlapping transfer and exit paths
($p{=}1$).

For each $(a, p)$ pair, the expected per-bin signed OD error is
$(a{-}1) \cdot n_{\text{true}} + (1{-}a) \cdot p \cdot
n_{\text{transfer}}$: missed exits (negative) and spurious entries
(positive). We aggregate over weekly batches and compute RMSRE as in
\S\ref{sec:eval:api2}, including the analytic Laplace noise floor
($b = 1/\varepsilon_{\text{rep}} = 2.0$, yielding
RMSRE${\,=\,}0.0025$). We report results against F1 score; because
$p$~affects the false positive count, F1 depends on both~$a$ and~$p$,
so curves for different~$p$ values occupy different x-axis ranges.

\end{document}